\newtheorem{theorem}{Theorem}[section]
\newtheorem{lemma}{Lemma}
\newtheorem{definition}{Definition}
\newtheorem{assumption}[theorem]{Assumption}
\def\defeq{:=} 
\newcommand{\MLD}{{MLD}}
\newcommand{\UMLD}{{MULD}}
\newcommand{\MULA}{{MULA}}
\newcommand{\ULD}{{ULD}}
\newcommand{\ULMC}{{ULMC}}
\newcommand{\UMLA}{{MULA}}
\newcommand{\UNLA}{{N-ULA}}
\newcommand{\UNLD}{{N-ULD}}
\newcommand{\MLA}{{MLA}}
\newcommand{\EMUNLA}{{EM-N-ULA}}
\newcommand{\NLA}{{N-LA}}
\newcommand*{\rom}[1]{\expandafter\@slowromancap\romannumeral #1@}
\def\@maketitle{%
  \newpage
  \begin{center}%
  \let \footnote \thanks
    {\LARGE \bf \@title \par}%
  \end{center}%
  \par
  \vskip 1em}
\title{Mean-field underdamped Langevin dynamics and its spacetime discretization}
\begin{document}
\maketitle
\begin{center}
{\large
\begin{tabular}{ccc}
    \makecell{Qiang Fu\(^\star\) \\{\normalsize\texttt{fuqiang7@mail2.sysu.edu.cn}}} & &
    \makecell{Ashia Wilson\(^\dagger\)\\{\normalsize\texttt{ashia07@mit.edu}}}
\end{tabular}
\vskip 1em
}
 \begin{tabular}{c}
   \(^\star\)School of Mathematics, Sun Yat-sen University \\
\(^\dagger\)Department of Electrical Engineering and Computer Science, MIT

\end{tabular}

\end{center}
\maketitle
\begin{abstract}
We propose a new method called the N-particle underdamped Langevin algorithm for optimizing a special class of non-linear functionals defined over the space of probability measures. Examples of problems with this formulation include training mean-field neural networks, maximum mean discrepancy minimization and kernel Stein discrepancy minimization. Our algorithm is based on a novel spacetime discretization of the mean-field underdamped Langevin dynamics, for which we provide a new, fast mixing guarantee. In addition, we demonstrate that our algorithm converges globally in total variation distance, bridging the theoretical gap between the dynamics and its practical implementation.
\end{abstract}
\section{Introduction}
\label{intro}
The mean-field Langevin dynamics (MLD) has recently received renewed interest due to its connection to gradient-based techniques used in supervised learning problems such as training neural networks in a limiting regime~\citep{mei2018mean}. Theoretical characterizations of the convergence properties of MLD has been the particular focus of several recent works~\citep{hu2019mean,chizat2022mean,nitanda2022convex,chen2022uniform,claisse2023mean}. 
More generally, MLD{} can be used to solve problems that can be posed as an entropy regularized mean-field optimization  (EMO) problem. Other examples of such problems include density estimation via maximum mean discrepancy (MMD) minimization~\citep{gretton2006kernel,arbel2019maximum,chizat2022mean,suzuki2023convergence} and sampling via kernel Stein discrepancy (KSD) minimization~\citep{liu2016kernelized,chwialkowski2016kernel,suzuki2023convergence}.
A more detailed synthesis of recent theoretical developments for MLD can be summarized as follows. \citet{hu2019mean} show that MLD finds EMO solutions asymptotically when problems can be expressed as optimizing a convex functional. 
If in addition, the EMO satisfies a uniform logarithmic Sobolev inequality, several studies have established that this convergence occurs exponentially quickly~\citep{chizat2022mean, nitanda2022convex, chen2022uniform}.

However, implementing MLD{} is not a straightforward task; to arrive at a practical algorithm requires both spatial and temporal discretizations of the dynamics.
\citet{nitanda2022convex} study a time-discretization of MLD{} by
extending an interpolation argument introduced by \citet{vempala2019rapid} to a non-linear Fokker-Planck equation. They establish a non-asymptotic rate of convergence for the discrete-time process. \citet{chen2022uniform} study a space-discretization consisting of a finite-particle approximation to the density of MLD{} (referred to as a finite-particle system) and show the finite-particle system finds the solution to the EMO problem exponentially fast, with a bias related to the number of particles. More practically, \citet{suzuki2023convergence}  analyze a spacetime discretization of the MLD and establish the non-asymptotic convergence of the resulting algorithm to a biased limit related to both the number of particles used and stepsize. 
Their analysis 
applies to several important learning problems and improves the results of the standard gradient Langevin dynamics.
A natural candidate method for finding solutions to EMO problems faster is the
mean-field {\em underdamped} Langevin dynamics (MULD). MULD resemble several techniques for adding momentum to gradient descent in optimization, many of which are known to result in provably faster convergence in a variety of settings~\citep{nesterov1983method,wilson2016lyapunov,laborde2020lyapunov,hinder2020near,fu2023accelerated}. Moreover, training neural networks using momentum-based gradient descent is considered effective in several applications~\citep{sutskever2013importance,kingma2014adam,ruder2016overview}. 
\citet{kazeykina2020ergodicity} and \citet{chen2023uniform} confirm that a naive spacetime discretization of
MULD has impressive empirical performance when compared to a naive discretization of the  
MLD on applications such as training mean-field neural networks.
 \citet{chen2023uniform} introduce a space-discretization  of MULD consisting of a finite particle approximation to the density and show it finds the EMO solution exponentially fast, albeit with several additional assumptions that are easy to verify for the problem of training mean-field neural networks. 
In addition,~\citet{chen2023uniform} implement an Euler-Maruyama discretization of the finite-particle system 
and show that it performs empirically faster when compared with the spacetime discretization of the mean-field Langevin dynamics in training a toy neural network model.
However, spacetime discretizations of MULD  
are not yet theoretically well understood. Furthermore, the rate obtained by \citet{chen2023uniform} for the dynamics does not resemble an ``accelerated rate'' when compared with recent results for MLD.

\subsubsection*{A summary of our work} 
A remaining question is whether we can theoretically characterize the behavior of an implementable algorithm based on discretizing the mean-field underdamped dynamics. If there is a limiting bias, how does it scale with the number of particles and other problem parameters? Ideally, this characterization would give a sharper rate of convergence than~\citet{suzuki2023convergence}'s spacetime discretization of the mean-field Langevin dynamics, suggesting there might be an advantage to adding momentum in the mean-field setting (at least in the worst case). 
In this paper, we introduce a fast implementable algorithm for solving EMO problems based on the mean-field underdamped Langevin dynamics. We prove that our proposed algorithm converges to a small limiting bias under a set of assumptions that subsumes many problems of interest. In particular, our contributions are summarized as follows. 

\begin{enumerate}
    \item 


 We sharpen the convergence bound for MULD and its space-discretization established by \citet{chen2023uniform} under the same set of assumptions utilized by \citet{chen2023uniform}  (Theorems \ref{convergence rate of the UMFLD} and~\ref{convergence of particle dynamics} and Table~\ref{summary}). 
    \item 
    We show the global convergence of our proposed algorithm in total variation (TV) distance (Theorem \ref{iter complexity for particle system}). Importantly, our results improve on \citet{suzuki2023convergence}'s analysis of the spacetime discretization of the MLD. While we require additional assumptions \ref{second moment bound}-\ref{initialization and invariant for N-particle}, our 
    results hold in several real-world applications including training neural networks, density estimation via MMD minimization and sampling via KSD minimization. 
\end{enumerate}
\paragraph{Organization} The remainder of this work is organized as follows. Section \ref{preliminaries} presents the formal definitions and assumptions as well as important related work. Section \ref{s-t disc} proposes our main methods and theoretical results. Section \ref{sec:ex} discusses the application of our methods to some classical problems. Section \ref{NE} describes our numerical experiments verifying the effectiveness of our proposed methods.
\section{Preliminaries}
\label{preliminaries}
We begin by introducing some general notation that will be used throughout this work.
\subsection{Notation}
The Euclidean and operator norms are denoted by $\|\cdot\|$  and $\|\cdot\|_{\textsf{op}}$. The space of probability measures on $\mathbb{R}^d$ with finite second moment is denoted by $\mathcal{P}_2(\mathbb{R}^d)$. Throughout, let $\rho$ and $\mu$ denote general distributions in $\mathcal{P}_2(\mathbb{R}^d)$ and $\mathcal{P}_2(\mathbb{R}^{2d})$ respectively. 
The TV distance between $\rho$ and $\pi\in\mathcal{P}_2(\mathbb{R}^d)$ is denoted by $\|\rho-\pi\|_{\textsf{TV}}\defeq\sup|\rho(A)-\pi(A)|$  
where the sup is over all Borel measurable sets $A\subset\mathbb{R}^d$. 
 The $p$-Wasserstein distance and Kullback-Leibler divergence between $\rho$ and $\pi$ is denoted by $W_p(\rho,\pi)\defeq\inf_{\Pi}\mathbb{E}_{\Pi}[\|x-y\|^p]^{1/p}$ where the infimum is over joint distributions $\Pi$ of $(x,y)$ with the marginals $x\sim\rho,y\sim\pi$
 and $\textsf{KL}(\rho\|\pi)\defeq\int \rho\log\frac{\rho}{\pi}$.  The relative Fisher information is denoted by $\textsf{FI}(\rho\|\pi)\defeq\mathbb{E}_{\rho}\|\nabla\log\frac{\rho}{\pi}\|^2$, and more generally we use the notation
 $\textsf{FI}_S(\rho\|\pi)\defeq\mathbb{E}_{\rho}\|S^{1/2}\nabla\log\frac{\rho}{\pi}\|^2$ for a positive definite symmetric matrix $S$. $\text{Ent}(\rho)\defeq\int\rho\log\rho$ denotes the negative entropy of $\rho$.
The functional and intrinsic derivatives of $F$ are denoted by $\frac{\delta F}{\delta\rho}:\mathcal{P}_2(\mathbb{R}^d)\times\mathbb{R}^d\rightarrow\mathbb{R}$ 
and 
${D}_{\rho}F\defeq\nabla\frac{\delta F}{\delta\rho}:\mathcal{P}_2(\mathbb{R}^d)\times\mathbb{R}^d\rightarrow\mathbb{R}^d$, respectively. A $d$-dimensional Brownian motion is denoted by $\mathrm{B}_t$. 
%
 %
%
We use notation
$a\lesssim b$, $a_n=\Theta(b_n)$ and $a_n=\widetilde{\Theta}(b_n)$ to denote that there exist $c,C>0$ such that $a\leq Cb$, $cb_n\leq a_n\leq Cb_n$ for $n\geq N'$
and $a_n=\Theta(b_n)$ up to logarithmic factors, respectively.
\subsection{Background}
We consider the following 
problem described by minimizing the entropy regularized mean-field objective (EMO), \looseness=-1
\begin{equation}
\label{mean-field opt}
    \min_{\rho\in\mathcal{P}_2(\mathbb{R}^d)}F(\rho)+\lambda\text{Ent}(\rho),    
\end{equation}
where $F: \mathcal{P}_2(\mathbb{R}^{d}) \rightarrow \mathbb{R}$ is a potentially non-linear functional and $\lambda>0$ is a regularization constant. 
Without loss of generality, we will take $\lambda = 1$ throughout. 
 \citet{hu2019mean} study the gradient flow dynamics 
 of the EMO in 2-Wasserstein metric called the {\em mean-field Langevin dynamics} (\hyperlink{MLD}{ {MLD}}): 
\hypertarget{MLD}{\begin{equation*}
    \label{MFLD}
    \mathrm{d}x_t=-
     D_{\rho} F(\rho_t,x_t)\mathrm{d}t
    +\sqrt{2} \mathrm{dB}_t,
    \tag{{MLD}}
\end{equation*}}
where $\rho_t\defeq\text{Law}(x_t)\in\mathcal{P}_2(\mathbb{R}^d)$.
 Under mild conditions, the \MLD{} finds the solution to the EMO, given by $\rho_*(x)\propto\exp\left(-\frac{\delta F}{\delta\rho}(\rho_*,x)\right)$~\citep{hu2019mean}.


This paper introduces a new sharp mixing-time bound for the 
 {\em mean-field underdamped Langevin dynamics} (\hyperlink{UMLD}{ {MULD}}):
\hypertarget{UMLD}{
\begin{equation}
\label{UMFLD}
    \begin{aligned}
   \mathrm{d}x_t&=v_t\mathrm{d}t, \\
   \mathrm{d}v_t&=\hspace{-1pt}-\gamma v_t\mathrm{d}t\hspace{-1pt}-\hspace{-1pt} D_{\rho}F(\mu_t^X,x_t)\mathrm{d}t\hspace{-1pt}+\hspace{-2pt}\sqrt{2\gamma}\mathrm{dB}_t.
    \end{aligned} 
    \tag{{MULD}}
\end{equation}}
Here, $\mu_t\defeq\text{Law}(x_t,v_t)\in\mathcal{P}_2(\mathbb{R}^{2d})$, $\gamma >0$ is the damping coefficient, and $\mu_t^X\defeq\text{Law}(x_t)=\int\mu_t(x,v)\mathrm{d}v$ is the $X$-marginal of $\mu_t$.
The limiting distribution of~\UMLD{} is the solution to the augmented EMO problem,  
\begin{equation}
    \label{augmented mean-field opt}
    \min_{\mu \in\mathcal{P}_2(\mathbb{R}^{2d})}F(\mu^X)+\text{Ent}(\mu)+\int\frac{1}{2}\|v\|^2\mu(\mathrm{d}x\mathrm{d}v),
\end{equation}
where a momentum term is added to the EMO.
The minimizer of the augmented EMO is given by $\mu_*(x,v)\propto\exp\left(-\frac{\delta F}{\delta\rho}(\mu^X_*,x)-\frac{1}{2}\|v\|^2\right)$. We provide details of the derivation of the limiting distributions of \MLD{} and \UMLD{} in Appendices \ref{MLD background} and \ref{MULD background} respectively.
To obtain the solution of the EMO problem, the minimizer $\mu_*(x,v)$ 
can be $X$-marginalized.
%
%
This work also sharpens the analysis of the space-discretization of MULD introduced by~\citet{chen2023uniform}, which we refer to as the {\em  $N$-particle underdamped Langevin dynamics} (N-ULD) for $i=1,...,N$:
 \begin{subequations}
     \begin{align*}
     \label{Par-UMFLD}
     \tag{{N-ULD}}
        \mathrm{d}x_t^i&=v_t^i\mathrm{d}t,\\
        \mathrm{d}v_t^i&=-\gamma v_t^i\mathrm{d}t\vspace{-10cm}-D_{\rho}F(\mu_{\textbf{x}_t},x^i_t)\mathrm{d}t+\sqrt{2\gamma}\mathrm{dB}^i_t,
     \end{align*} \notag
 \end{subequations}
where $\mu_{\textbf{x}_t}\defeq\frac{1}{N}\sum_{i=1}^N\delta_{x_t^i}$, $\mu^i_t\defeq\text{Law}(x_t^i,v_t^i)$ and $(\mathrm{B}_t^i)_{i=1}^N$ are $d$-dimensional Brownian motions.

To motivate our algorithm as a time-discretization of~\UNLD, we review discretizations of the {\em underdamped Langevin dynamics} (\ULD), which is a special case of~\UMLD{} where $F(\mu)=\int V(x)\mu(\mathrm{d}x)$ is a linear functional of $\mu$: 
\begin{equation}
    \label{ULD}
    \begin{aligned}
   \mathrm{d}x_t&=v_t\mathrm{d}t \\
   \mathrm{d}v_t&=\hspace{-1pt}-\gamma v_t\mathrm{d}t\hspace{-1pt}-\hspace{-1pt}\nabla V(x_t)\mathrm{d}t\hspace{-1pt}+\hspace{-2pt}\sqrt{2\gamma}\mathrm{dB}_t.
    \end{aligned} 
    \tag{{ULD}}
\end{equation}
The~\ULD{} was first studied in \citet{kolmogoroff1934zufallige} and \citet{hormander1967hypoelliptic}. Under functional inequalities such as Poincar{\'e}'s inequality on the target distribution $\rho_*\propto\exp(-V)$
, the convergence guarantee of the~\ULD{} was studied by Villani using a hypocoercivity approach \cite{villani2001limites, villani2009hypocoercivity}, but without capturing the acceleration phenomenon when compared to the overdamped Langevin dynamics. \citet{cao2023explicit} are the first to show~\ULD{} converges in $\chi^2$-divergence at an accelerated rate when $V$ is convex and the target distribution $\rho_*$ satisfies LSI defined in \eqref{LSI} with $\mathscr{C}_{\textsf{LSI}}>0$.
They prove that when $\mathscr{C}_{\textsf{LSI}}\ll 1$, the decaying rate of \ULD{} is $O(\sqrt{\mathscr{C}_{\textsf{LSI}}})$ whereas the decaying rate of the overdamped Langevin dynamics is $O(\mathscr{C}_{\textsf{LSI}})$.

A discretization of \ULD{} is referred to as an {\em underdamped Langevin Monte Carlo} (\ULMC{}) algorithm. 
There are various discretization schemes proposed for implementing \ULD{}. The \emph{Euler-Maruyama} (EM) discretization of \ULD{}~\citep{kloeden1995numerical,platen2010numerical}, \looseness=-1 
\begin{equation*}
    \begin{aligned}
        x_{k+1}&=x_k+hv_k,\\
        v_{k+1}&=(1-\gamma h)v_k-h\nabla V(x_k)+\sqrt{2\gamma h}\xi_k,
    \end{aligned}
    \tag{{EM-ULMC}}
\end{equation*}
for stepsize $h$, $\xi_k\sim\mathcal{N}(0,I_d)$ and $t\in[kh,(k+1)h]$, has been well-studied and it incurs the largest discretization
error in several metrics including KL divergence and Wasserstein distance. 
Recently, however, several works have studied the ULMC obtained from a more precise discretization scheme called the the \emph{exponential integrator} (EI)~\citep{cheng2018underdamped}:
\begin{subequations}
    \begin{align*}
    \label{discretization of ULD}
    \tag{{EI-ULMC}}
       \mathrm{d}x_t&=v_t\mathrm{d}t,\\
       \mathrm{d}v_t&=-\gamma v_t\mathrm{d}t-\nabla V(x_{kh})\mathrm{d}t+\sqrt{2\gamma}\mathrm{dB}_t,
    \end{align*}
\end{subequations}
for $t\in[kh,(k+1)h]$. Unlike the EM integrator, EI only fixes the drift term in each small interval, creating a group of linear stochastic differential equations (SDE) that can be exactly integrated. \citet{leimkuhler2023contraction} show that the EI incurs weaker stepsize restriction when compared with EM scheme. Other works have derived its convergence
in Wasserstein distance \citep{cheng2018underdamped}, KL divergence \citep{ma2021there} and R{\'e}nyi divergence \citep{zhang2023improved}. 
Other discretization schemes are proposed in \citet{shen2019randomized,li2019stochastic,he2020ergodicity,foster2021shifted,monmarche2021high,foster2022high,johnston2023kinetic}, whose convergence guarantee are obtained in Wasserstein distance without achieving better dependence on terms such as the smoothness and LSI constants.
In this work, we show that EI can be applied to discretize both \ref{UMFLD} and \ref{Par-UMFLD} to achieve fast convergence.
\subsection{Definitions and assumptions}
For each method considered, we study their behavior in settings where the minimizing distribution 
satisfies a Log-Sobolev inequality.
\begin{definition}[LSI]
    A measure $\pi\in\mathcal{P}_2(\mathbb{R}^d)$ satisfies Log-Sobolev Inequality ({LSI}) with parameter $\mathscr{C}_{\textsf{LSI}}>0$, if for any $\rho\in\mathcal{P}_2(\mathbb{R}^d)$  
    \looseness=-1
        \begin{equation}
    \label{LSI}
        \textsf{\em KL}(\rho\|\pi)\leq\frac{1}{2\mathscr{C}_{\textsf{\em LSI}}}\textsf{\em FI}(\rho\|\pi).
    \end{equation}
\end{definition}
We also work with the following distribution $\hat{\mu}\in\mathcal{P}_2(\mathbb{R}^{2d})$ that appears in the Fokker-Planck equation \eqref{f-p MULD} of \ref{UMFLD} (see Appendix~\ref{MULD background}). Note that the limiting distribution $\mu_*\in\mathcal{P}_2(\mathbb{R}^{2d})$ of \UMLD{} satisfies $\mu_*=\hat{\mu}_*$.
\begin{definition}
    Throughout, we define the distribution $\hat{\mu}$ associated with the $X$-marginal of distribution $\mu$
    and a functional $F$ to be 
    \begin{align}\label{eq:gibbs}
    \hat{\mu}(x,v)\propto\exp\left(-\frac{\delta F}{\delta\rho}(\mu^X,x)-\frac{1}{2}\|v\|^2\right).
\end{align} 
\end{definition}
We also introduce the same three assumptions on $F$ as  \citet{chen2023uniform} for establishing the non-asymptotic convergence of the~\ref{UMFLD} and \ref{Par-UMFLD}.
\begin{assumption}[Convexity]
\label{convexity}
    $F$ is convex in the linear sense, which means for any $\rho_1,\rho_2\in\mathcal{P}_2(\mathbb{R}^d)$ and $t\in[0,1]$ the functional satisfies
    \begin{equation}
        F(t\rho_1+(1-t)\rho_2)\leq tF(\rho_1)+(1-t)F(\rho_2).
    \end{equation}
\end{assumption}
\begin{assumption}[$\mathscr{L}$-smoothness]
\label{smoothness}
    $F$ is smooth, which means the intrinsic derivative 
    exists and for any $\rho_1,\rho_2\in\mathcal{P}_2(\mathbb{R}^d)$, $x_1,x_2\in\mathbb{R}^d$ and some $1\leq\mathscr{L}<\infty$ satisfies
    \begin{equation}
    \begin{aligned}
        \label{L-smooth}
        \|D_{\rho}F(\rho_1,x_1)-D_{\rho}F(\rho_2,x_2)\|\leq\mathscr{L}(W_1(\rho_1,\rho_2) 
        +\|x_1-x_2\|).
        \end{aligned}
    \end{equation}
\end{assumption}
\begin{assumption}[LSI]
\label{LSI of PGD}
The distribution \eqref{eq:gibbs}
satisfies LSI with constant $0<\mathscr{C}_{\textsf{\em LSI}}\leq 1$ for any $\mu\in\mathcal{P}_2(\mathbb{R}^d)$.
\end{assumption}
The $X$-marginal of distribution \eqref{eq:gibbs}, which is related to the optimization gap, was first utilized by~\citet{nitanda2022convex} to establish convergence of \MLD.
    Note that if $\hat{\mu}^X(x)\propto\exp(-\frac{\delta F}{\delta\rho}(\mu^X,x))$ satisfies LSI for any $\mu\in\mathcal{P}_2(\mathbb{R}^{2d})$ with constant $\tau>0$, then Assumption \ref{LSI of PGD} is satisfied with the choice $\mathscr{C}_{\textsf{LSI}}=\min\{{1}/{2},\tau\}$. We refer our readers to \citet{chen2022uniform,chen2023uniform,suzuki2023convergence} for the verification of Assumptions~\ref{convexity} and \ref{LSI of PGD} in a variety of settings.
    \citet{suzuki2023convergence} consider a weaker smoothness assumption than Assumption \ref{smoothness} where they use $W_2$ distance in place of $W_1$ distance.
    They verify smoothness in $W_2$ distance
    for three examples including training mean-field neural networks, MMD minimization and KSD minimization, whereas \citet{chen2022uniform} verify smoothness in $W_1$ distance only for the example of training mean-field neural networks. In this paper, we verify $\mathscr{L}$-smoothness in $W_1$ distance (Assumption \ref{smoothness}) for the other two examples (see Section~\ref{verification of smoothness}).
Beyond Assumptions~\ref{convexity}-\ref{LSI of PGD}, we introduce four additional assumptions that are sufficient for our spacetime discretization analysis.
\begin{assumption}[Bounded Gradient]
\label{bounded grad}
For any $\rho\in\mathcal{P}_2(\mathbb{R}^d)$, the intrinsic derivative of $F$ satisfies (where $\mathscr{L}>0$)
    \begin{equation}
        \|D_{\rho}F(\rho,x)\|\leq \mathscr{L}(1+\|x\|).
    \end{equation}
\end{assumption}
    Notably, \citet{suzuki2023convergence} assume that $F$ can be decomposed as $F(\rho)=U(\rho)+\mathbb{E}_{x\sim\rho}[r(x)]$ where $\|D_{\rho}U(\rho,x)\|\leq R$ for any $\rho\in\mathcal{P}(\mathbb{R}^d)$, $x\in\mathbb{R}^d$, and where $r(x)$ is a differentiable function satisfying $\|\nabla r(x) - \nabla r(y) \| \leq \lambda_2\| x-y \|$  with $\nabla r(0)=0$ in order to establish the convergence of their spacetime discretization of \ref{MFLD}. Thus, their assumption that $\|D_{\rho}F(\rho,x)\|\leq\|D_{\rho}U(\rho,x)\|+\|\nabla r(x)\|\leq R+\lambda_2\|x\|$ implies Assumption~\ref{bounded grad} holds with the choice $\mathscr{L}\geq\max\{R,\lambda_2\}$. The next three assumptions are needed for bounding the second moment of the iterates $(x_t,v_t)_{t\geq 0}$ and $(x_t^i,v_t^i)_{t\geq 0}$ along~\UMLD{} and \UNLD{}, which is crucial for the establishment of our discrete-time convergence.
\begin{assumption}
\label{second moment bound}
    For all $\mu\in\mathcal{P}_2(\mathbb{R}^{2d})$, the distribution~\eqref{eq:gibbs} given $F$ 
    satisfies
 $\mathbb{E}_{\hat{\mu}}\|\cdot\|^2\lesssim d.$
\end{assumption}
\begin{assumption}
\label{initialization and invariant}
    Given the initial distribution $\mu_0\in\mathcal{P}_2(\mathbb{R}^{2d})$ of the discrete-time process of \UMLD{}, functional $F$ and satisfies
       $F(\mu^X_0)\lesssim\mathscr{L}d.$
\end{assumption}
\begin{assumption}
\label{initialization and invariant for N-particle}
    Given the initial distribution $\mu_0^N\in\mathcal{P}_2(\mathbb{R}^{2Nd})$ of the discrete-spacetime process of \UMLD{}, functional $F$ satisfies $\mathbb{E}_{\textbf{x}_0\sim(\mu_0^X)^N}F(\mu_{\textbf{x}_0})\lesssim\mathscr{L}d,$ where $\mu_0^N$ is the N-tensor product of $\mu_0$ and $\mu_{\textbf{x}_0}=\frac{1}{N}\sum_{i=1}^N\delta_{x_0^i}$ with $x_0^i\sim\mu^X_0$.
\end{assumption}
While Assumptions \ref{second moment bound}-\ref{initialization and invariant for N-particle}   are sufficient, they may not be necessary for the iterates to be bounded. Nevertheless, we argue these assumptions are not too restrictive by verifying them for three examples introduced above including training mean-field neural networks, MMD minimization and KSD minimization in Section~\ref{sec:ex}.
\subsection{Related work}
\label{related works}
Techniques for establishing the continuous-time convergence of the mean-field underdamped systems and their space-discretization (N-particle systems) are centered around \textit{coupling} and \textit{hypocoercivity}. The latter one is also known as functional approaches~\citep{villani2009hypocoercivity}. The coupling approach generally constructs a joint probability of the mean-field and N-particle systems to make the analytic comparison between them. Based on coupling approaches, \citet{guillin2022convergence,bolley2010trend,bou2023convergence} show convergence of the underdamped dynamics with mean-field interaction and its space-discretization. \citet{duong2018vlasov,kazeykina2020ergodicity} study the ergodicity of the~\ref{UMFLD} without a quantitative rate. Under the setting of small mean-field dependence, \citet{kazeykina2020ergodicity} show exponential contraction using coupling techniques in \citet{eberle2019couplings,eberle2019quantitative}. The functional approach (hypocoercivity) generally constructs appropriate Lyapunov functionals and studies how their values change along the dynamics. Based on hypocoercivity, \citet{monmarche2017long,guillin2021kinetic,guillin2021uniform,bayraktar2022exponential} establish the exponential convergence of the mean-field underdamped systems and its propagation of chaos by constructing a suitable Lyapunov functional. Nevertheless, most of the works above only consider specific settings of~\UMLD{} such as singular interactions and two-body interactions, which restricts the application to real-world problems. Setting $\gamma = 1$, \citet{chen2023uniform} establish the exponential convergence of~\UMLD{} and \ref{Par-UMFLD} using the hypocoercivity technique in \citet{villani2009hypocoercivity}. Under Assumptions~\ref{convexity}-\ref{LSI of PGD}, they derive the convergence without restricting the size of interactions, which subsumes many settings above. Notably, the techniques of our Theorems~\ref{convergence rate of the UMFLD} and~\ref{convergence of particle dynamics} are adopted from \citet{chen2023uniform} based on hypocoercivity where we consider other choices of $\gamma$ to improve the decaying rate of \UMLD{} and \UNLD{} established in \citet{chen2023uniform}.

\section{N-particle underdamped Langevin algorithm}
\label{s-t disc}
Our first step is to establish the global convergence of the {\em mean-field underdamped Langevin algorithm} ({MULA}),
\hypertarget{UMLA}{
\begin{subequations}
    \begin{align*}
    \label{dis-UMFLD}
    \tag{{MULA}}
       \mathrm{d}x_t&=v_t\mathrm{d}t,\\
       \mathrm{d}v_t&=-\gamma v_t\mathrm{d}t-D_{\rho}F(\mu^X_{kh},x_{kh})\mathrm{d}t+\sqrt{2\gamma}\mathrm{dB}_t,
    \end{align*}
\end{subequations}}
for stepsize $h$, $t\in[kh,(k+1)h]$ and $k=1,...,K$.  Note that~\MULA{} is the EI time-discretization of the~\ref{UMFLD}, where each step will now require integrating from $t=kh$ to $t=(k+1)h$ for stepsize $h$. 
\UMLA{} is intractable to implement in most instances given we do not often have access to $\mu_{kh}^X$ per iteration. This prompts us to consider the particle approximation which uses $\mu_{\textbf{x}_{kh}}=\frac{1}{N}\sum_{i=1}^N\delta_{x_{kh}^i}$ to approximate $\mu_{kh}^X$ where $(x_k^i)_{i=1}^N$ are iid samples from $\mu^X_k$:
\begin{equation}
    \label{dis-par-UMFLD}
    \begin{aligned}
       \mathrm{d}x_t^i&=v^i_t\mathrm{d}t,\\
       \mathrm{d}v_t^i&=-\gamma v^i_t\mathrm{d}t-D_{\rho}F(\mu_{\textbf{x}_{kh}},x^i_{kh})\mathrm{d}t+\sqrt{2\gamma}\mathrm{dB}^i_t,
    \end{aligned}
\end{equation}
for stepsize $h$, $t\in[kh,(k+1)h]$, $i=1,...,N$, $k \in \mathbb{N}$
and $\mu_{\textbf{x}_{kh}}=\frac{1}{N}\sum_{i=1}^N\delta_{x^i_{kh}}$.
Integrating the particle system~ \eqref{dis-par-UMFLD} from $t=kh$ to $t=(k+1)h$ for stepsize $h$ and $i=1,...,N$, we obtain our proposed Algorithm~\ref{UNLA} which we refer to as the {\em N-particle underdamped  Langevin algorithm} ({N-ULA}).

\hypertarget{UNLA}{
\begin{algorithm}[h]
\caption{ N-particle underdamped Langevin algorithm ({NULA})}
\label{UNLA}
\begin{algorithmic}[1]
\REQUIRE{$F$ satisfies Assumptions~\ref{convexity}-\ref{second moment bound}} and~\ref{initialization and invariant for N-particle}
\STATE{Initialize $\textbf{x}_0 = (x_0^1,...,x_0^N),\, \textbf{v}_0=(v_0^1,...,v_0^N)$, $h, \gamma$ \\
Specify $\varphi_0$, $\varphi_1$, $\varphi_2$, $\Sigma_{11}$, $\Sigma_{12}$, $\Sigma_{22}$ using \eqref{choice of varphi} and \eqref{choice of Sigma}.}
\FOR{$k=0,...,K-1$}
\FOR{$i=1,...,N$}
\STATE{$\left[\begin{array}{c}
    (\mathrm{B}_k^i)^x  \\
    (\mathrm{B}_k^i)^v \end{array}\right]\sim\mathcal{N}\left(0,\left[\begin{array}{cc}
    \Sigma^{11}I_d & \Sigma^{12}I_d \\
    \Sigma^{12}I_d & \Sigma^{22}I_d
\end{array}\right]\right)$}
\STATE{$x^i_{k+1}=x^i_k+\varphi_0\,v^i_{k}-\varphi_1\,
    D_{\mu}F(\mu_{\textbf{x}_{k}},x^i_{k})+(\mathrm{B}_k^i)^x$}
\STATE{$v^i_{k+1}=\varphi_2\,v^i_{k}-\varphi_0\, D_{\mu}F(\mu_{\textbf{x}_{k}},x^i_{k})+(\mathrm{B}_k^i)^v$}
\ENDFOR
\ENDFOR
\RETURN $(x^1_K,...,x^N_K)$
\end{algorithmic}
\end{algorithm}}
The update parameters of Algorithm \ref{UNLA}, $\varphi_0,\,\varphi_1,\,\varphi_2$ and  $\Sigma_{11}$, $\Sigma_{12}$, $\Sigma_{22}$, are functions of $\gamma$ and stepsize $h$. Thus, we need to specify the value of $\gamma$ and $h$ to compute the update parameters and initialize $(\textbf{x}_0,\,\textbf{v}_0)\sim\mu^N_0\in\mathcal{P}_2(\mathbb{R}^{2Nd})$ before running the algorithm.

\subsection{Convergence analysis}
We begin by leveraging entropic hypocoercivity and Theorems 2.1 and 2.2 from \citet{chen2023uniform} to analyze the continuous-time dynamics \UMLD{} and \UNLD. Let \begin{equation}
\label{matrix S}
S=\left(\begin{array}{cc}
     {1}/{\mathscr{L}}&{1}/{\sqrt{\mathscr{L}}}  \\
     {1}/{\sqrt{\mathscr{L}}}&2 
\end{array}\right)\otimes I_d.\end{equation} 
We construct the Lyapunov functional similar to~\citet{chen2023uniform}, but with a different choice of $S$. 
 Theorem~\ref{convergence rate of the UMFLD} is established by showing the following functional is decaying along the trajectory of MULD.
\begin{align}\label{eq:Lyap1}
   \mathcal{E}(\mu)&\defeq\mathcal{F}(\mu)+\textsf{FI}_{S}(\mu\|\hat{\mu}), \,\text{where}\\ 
      \mathcal{F}(\mu)&\defeq F(\mu^X)+\int\frac{1}{2}\|v\|^2\mu(\mathrm{d}x\mathrm{d}v)+\text{Ent}(\mu). \notag 
\end{align}
Our second Theorem~\ref{convergence of particle dynamics} establishes the convergence of~\ref{Par-UMFLD}. Denote
$\textbf{x}=(x^1,...,x^N)$, $\textbf{v}=(v^1,...,v^N)$, $\mu^N=\text{Law}(\textbf{x},\textbf{v})$, and $\mu_*^N$ as the limiting distribution of~\UNLD{} satisfying $\mu_*^N(\textbf{x},\textbf{v})\propto\exp\left(-NF(\mu_{\textbf{x}})-\frac{1}{2}\|\textbf{v}\|^2\right)$ (see the derivation of limiting distribution in Appendix \ref{NULD background}).
Denote $\nabla_i \defeq (\nabla_{x^i}, \nabla_{v^i})^{\mathsf{T}}$.
We obtain our guarantee by showing the functional is decaying along the trajectory of~\UNLD{}:
\begin{align}
\label{eq:Lyap2}
\mathcal{E}^N(\mu^N)&\defeq\mathcal{F}^N(\mu^N)+\textsf{FI}^N_S(\mu^N\|\mu_*^N), \,\text{where} \\  
\textsf{FI}^N_S(\mu^N\|\mu_*^N)&\defeq \sum_{i=1}^N\mathbb{E}_{\mu^N}\Big\|S^{1/2}\nabla_i\log\frac{\mu^N}{\mu_*^N}\Big\|^2, \, \text{and} \notag \\
\mathcal{F}^N(\mu^N)&\defeq\int N F(\mu_{\textbf{x}})+\frac{1}{2}\|\textbf{v}\|^2 \mu^N(\mathrm{d}\textbf{x}\mathrm{d}\textbf{v})+\text{Ent}(\mu^N).\notag
\end{align}
\begin{tcolorbox}[frame empty, left = 0.1mm, right = 0.1mm, top = 0.1mm, bottom = 0.1mm]
\begin{theorem}[Mean-field underdamped Langevin dynamics]
\label{convergence rate of the UMFLD}
If Assumptions \ref{convexity}-\ref{LSI of PGD} hold,   $\mu_0$ has finite second moment, finite entropy and finite Fisher information, then the law $\mu_t$ of the \ref{UMFLD} with $\gamma=\sqrt{\mathscr{L}}$ and $\mathcal{E}$ defined in~\eqref{eq:Lyap1} satisfy,
    \begin{equation*}
        \mathcal{F}(\mu_t)-\mathcal{F}(\mu_*)\leq (\mathcal{E}(\mu_0)-\mathcal{E}(\mu_*))\exp\left(-\frac{\mathscr{C}_{\textsf{\em LSI}}}{3\sqrt{\mathscr{L}}}t\right).
    \end{equation*}
\end{theorem}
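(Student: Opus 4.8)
The plan is to follow the entropic hypocoercivity route of \citet{chen2023uniform}, but to track the $\gamma$-dependence explicitly and then optimize over $\gamma$ (landing on $\gamma=\sqrt{\mathscr{L}}$). First I would recall the Fokker--Planck equation \eqref{f-p MULD} of \ref{UMFLD} and write out $\frac{d}{dt}\mathcal{F}(\mu_t)$ and $\frac{d}{dt}\textsf{FI}_S(\mu_t\|\hat\mu_t)$ separately. The free-energy dissipation is the standard one: $\frac{d}{dt}\mathcal{F}(\mu_t) = -\gamma\,\textsf{FI}_{e_{22}}(\mu_t\|\hat\mu_t)$, i.e. only the $v$-directional Fisher information appears (here $e_{22}$ is the projection onto the $v$-block), reflecting that the noise acts only on the velocity. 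This term is nonpositive but \emph{degenerate} — it does not control the full Fisher information — which is exactly why the extra term $\textsf{FI}_S(\mu\|\hat\mu)$ with the cross term $1/\sqrt{\mathscr{L}}$ in $S$ is needed.

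Next I would compute $\frac{d}{dt}\textsf{FI}_S(\mu_t\|\hat\mu_t)$. This is the technical heart: differentiating a weighted Fisher information along a (nonlinear) kinetic Fokker--Planck flow produces (i) a "good" negative term involving a weighted Hessian-squared quantity $\mathbb{E}_{\mu_t}\|S^{1/2}\nabla^2\log\tfrac{\mu_t}{\hat\mu_t} \cdots\|$-type expression, (ii) "transport/commutator" terms coming from the fact that $\nabla$ does not commute with the drift $v\cdot\nabla_x - \nabla_v\cdot(\gamma v + D_\rho F)$, which after integration by parts reduce to a quadratic form in $\nabla\log\tfrac{\mu_t}{\hat\mu_t}$ with a matrix built from $\gamma$, $I_d$ (from the $v_t\,dt$ coupling) and the Hessian of $\frac{\delta F}{\delta\rho}$, and (iii) a genuinely nonlinear term coming from $\hat\mu_t$ moving with $\mu_t^X$, which is where $\mathscr{L}$-smoothness (Assumption~\ref{smoothness}) and convexity (Assumption~\ref{convexity}) enter to bound $\partial_t \nabla\log\hat\mu_t$ and to sign-control a cross term. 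The point of choosing $S = \left(\begin{smallmatrix}1/\mathscr{L} & 1/\sqrt{\mathscr{L}}\\ 1/\sqrt{\mathscr{L}} & 2\end{smallmatrix}\right)\otimes I_d$ is that, with $\gamma=\sqrt{\mathscr{L}}$, the off-diagonal entry $1/\sqrt{\mathscr{L}}$ is tuned so the commutator matrix is negative semidefinite up to a term proportional to $\textsf{FI}_{e_{22}}$, which is precisely the dissipation supplied by $\mathcal{F}$; this is the standard Villani-type "the cross term borrows from the $v$-dissipation" trick, now quantitatively balanced in $\mathscr{L}$. Collecting everything, I expect to arrive at an inequality of the shape
\begin{equation*}
\frac{d}{dt}\mathcal{E}(\mu_t) \;\le\; -c\,\textsf{FI}_S(\mu_t\|\hat\mu_t) \;-\; \text{(leftover nonnegative terms)},
\end{equation*}
for an explicit constant $c = \Theta(\sqrt{\mathscr{L}})^{-1}$ — more precisely $c \gtrsim \mathscr{C}_{\textsf{LSI}}/\sqrt{\mathscr{L}}$ after the LSI is invoked, see below.

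Then I would close the loop with two coercivity inputs. First, $\textsf{FI}_S(\mu_t\|\hat\mu_t) \ge \lambda_{\min}(S)\,\textsf{FI}(\mu_t\|\hat\mu_t) \gtrsim \mathscr{L}^{-1}\textsf{FI}(\mu_t\|\hat\mu_t)$; combined with Assumption~\ref{LSI of PGD} (LSI for $\hat\mu_t$ with constant $\mathscr{C}_{\textsf{LSI}}$) this gives $\textsf{FI}_S(\mu_t\|\hat\mu_t)\gtrsim \tfrac{\mathscr{C}_{\textsf{LSI}}}{\mathscr{L}}\,\textsf{KL}(\mu_t\|\hat\mu_t)$. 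Second, the nonlinear analogue of the "free energy gap controls KL to the Gibbs measure" identity: by convexity of $F$ one has $\mathcal{F}(\mu_t) - \mathcal{F}(\mu_*) \le \textsf{KL}(\mu_t\|\hat\mu_t)$ (this is the mean-field proximal-Gibbs bound used by \citet{nitanda2022convex,chen2023uniform}; I would cite the derivation in the appendix). Chaining these, $\mathcal{F}(\mu_t)-\mathcal{F}(\mu_*) \le \textsf{KL}(\mu_t\|\hat\mu_t) \lesssim \tfrac{\mathscr{L}}{\mathscr{C}_{\textsf{LSI}}}\textsf{FI}_S(\mu_t\|\hat\mu_t)$, and also trivially $\mathcal{F}(\mu_t)-\mathcal{F}(\mu_*)\le\mathcal{E}(\mu_t)-\mathcal{E}(\mu_*)$ since $\mathcal{E}=\mathcal{F}+\textsf{FI}_S$ and $\textsf{FI}_S(\mu_*\|\hat\mu_*)=0$. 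Plugging $\textsf{FI}_S(\mu_t\|\hat\mu_t)\gtrsim \tfrac{\mathscr{C}_{\textsf{LSI}}}{\mathscr{L}}(\mathcal{F}(\mu_t)-\mathcal{F}(\mu_*))$ and $c\gtrsim 1/\sqrt{\mathscr{L}}$ into the dissipation inequality, and checking the numerical constant works out to $\tfrac{1}{3}$, yields $\frac{d}{dt}(\mathcal{E}(\mu_t)-\mathcal{E}(\mu_*)) \le -\tfrac{\mathscr{C}_{\textsf{LSI}}}{3\sqrt{\mathscr{L}}}(\mathcal{E}(\mu_t)-\mathcal{E}(\mu_*))$ (using $\mathcal{E}-\mathcal{E}(\mu_*)\le$ const $\cdot\,\textsf{FI}_S$ plus the free-energy piece), whence Grönwall gives $\mathcal{E}(\mu_t)-\mathcal{E}(\mu_*)\le (\mathcal{E}(\mu_0)-\mathcal{E}(\mu_*))e^{-\mathscr{C}_{\textsf{LSI}} t/(3\sqrt{\mathscr{L}})}$, and the stated bound follows from $\mathcal{F}(\mu_t)-\mathcal{F}(\mu_*)\le \mathcal{E}(\mu_t)-\mathcal{E}(\mu_*)$.

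The main obstacle I anticipate is step two — the differentiation of the weighted Fisher information and the bookkeeping of the commutator/nonlinear terms so that the matrix inequality closes with the \emph{right} power of $\mathscr{L}$. In particular, handling the nonlinear term $\partial_t\hat\mu_t$ (absent in the classical \ref{ULD} hypocoercivity) requires using $\mathscr{L}$-smoothness to bound a $W_2$-velocity-type quantity by $\textsf{FI}_{e_{22}}$, and then verifying that this contribution is dominated by the available $v$-dissipation for the specific choice $\gamma=\sqrt{\mathscr{L}}$ and the specific $S$; getting the constant $3$ (rather than some larger absolute constant) is where most of the care goes. A secondary subtlety is justifying all the integrations by parts and the $\frac{d}{dt}$ exchange, which requires the finite-second-moment, finite-entropy, finite-Fisher-information hypotheses on $\mu_0$ together with a regularization/approximation argument; I would relegate this to the appendix as is standard.
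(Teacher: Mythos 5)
Your overall architecture matches the paper's: both follow the entropic hypocoercivity of \citet{chen2023uniform} and \citet{villani2009hypocoercivity}, re-derive the dissipation inequality with a general $\gamma$, tune $\gamma=\sqrt{\mathscr{L}}$ together with $S$, and close via the LSI, the entropy sandwich (Lemma~\ref{Mean-field Sandwich}), and Gr{\"o}nwall. However, there is a genuine quantitative gap in the way you close the differential inequality, and it costs you exactly the factor of $\mathscr{L}$ that the theorem is designed to gain.

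You reduce the dissipation to $\frac{\mathrm{d}}{\mathrm{d}t}\mathcal{E}(\mu_t)\le -c\,\textsf{FI}_S(\mu_t\|\hat{\mu}_t)$ with $c=\Theta(1/\sqrt{\mathscr{L}})$ and then lower-bound $\textsf{FI}_S\ge\lambda_{\min}(S)\,\textsf{FI}\gtrsim\mathscr{L}^{-1}\textsf{FI}$ before invoking the LSI, obtaining $\textsf{FI}_S\gtrsim\frac{\mathscr{C}_{\textsf{LSI}}}{\mathscr{L}}(\mathcal{F}(\mu_t)-\mathcal{F}(\mu_*))$. Plugging this into $-c\,\textsf{FI}_S$ gives a decay coefficient of order $\mathscr{C}_{\textsf{LSI}}/\mathscr{L}^{3/2}$ on the free-energy gap, not the claimed $\mathscr{C}_{\textsf{LSI}}/(3\sqrt{\mathscr{L}})$; since $\lambda_{\min}(S)\asymp 1/\mathscr{L}$ for the given $S$, this route cannot be repaired by tightening constants. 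The paper avoids the loss by keeping the dissipation in terms of the \emph{unweighted} Fisher information: the matrix inequality $\frac{\mathrm{d}}{\mathrm{d}t}\mathcal{E}(\mu_t)\le -Y_t^{\mathsf{T}}\mathcal{K}Y_t$ with $\lambda_{\min}(\mathcal{K})=2/\sqrt{\mathscr{L}}$ yields
\begin{equation*}
\frac{\mathrm{d}}{\mathrm{d}t}\mathcal{E}(\mu_t)\le-\frac{2}{\sqrt{\mathscr{L}}}\left(\|\nabla_v u_t\|^2_{L^2(\mu_t)}+\|\nabla_x u_t\|^2_{L^2(\mu_t)}\right)=-\frac{2}{\sqrt{\mathscr{L}}}\,\textsf{FI}(\mu_t\|\hat{\mu}_t),
\end{equation*}
and this full $\textsf{FI}$ is then split in half: one half is fed directly to the LSI (giving $-\frac{2\mathscr{C}_{\textsf{LSI}}}{\sqrt{\mathscr{L}}}\textsf{KL}(\mu_t\|\hat{\mu}_t)$ with no $S$-eigenvalue loss), and the other half is converted to $-\frac{1}{\lambda_{\max}(S)\sqrt{\mathscr{L}}}\textsf{FI}_S(\mu_t\|\hat{\mu}_t)$ using $\lambda_{\max}(S)\le 3$, which costs only an absolute constant. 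The asymmetry between $\lambda_{\max}(S)=O(1)$ and $\lambda_{\min}(S)=\Theta(1/\mathscr{L})$ is precisely why the order of these coercivity steps matters: you must go from the unweighted $\textsf{FI}$ down to both $\textsf{KL}$ and $\textsf{FI}_S$, never from $\textsf{FI}_S$ back up to $\textsf{FI}$. A minor additional confusion in your write-up is the remark that $c\gtrsim\mathscr{C}_{\textsf{LSI}}/\sqrt{\mathscr{L}}$ ``after the LSI is invoked''; the constant in the dissipation inequality should be independent of $\mathscr{C}_{\textsf{LSI}}$, which only enters when converting Fisher information to KL.
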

\end{tcolorbox}
\begin{tcolorbox}[frame empty, left = 0.1mm, right = 0.1mm, top = 0.1mm, bottom = 0.1mm]
\begin{theorem}[N-particle underdamped Langevin dynamics] 
\label{convergence of particle dynamics}

 If Assumptions \ref{convexity}-\ref{LSI of PGD} hold, $\mu_0^N$ has finite second moment, finite entropy, finite Fisher information, and $N\geq(\mathscr{L}/\mathscr{C}_{\textsf{\em LSI}})\left(32+ 24\mathscr{L}/\mathscr{C}_{\textsf{\em LSI}}\right)$, then the joint law $\mu_t^N$ of the~\ref{Par-UMFLD}~with $\gamma=\sqrt{\mathscr{L}}$ and $\mathcal{E}^N$ defined in \eqref{eq:Lyap2} satisfy
    \begin{equation*}
        \frac{1}{N}\mathcal{F}^N(\mu_t^N)-\mathcal{F}(\mu_*)\leq \frac{\mathcal{E}_0^N}{N}\exp\left(-\frac{\mathscr{C}_{\textsf{\em LSI}}}{6\sqrt{\mathscr{L}}}t\right)+\frac{\mathcal{B}}{N},
    \end{equation*}
    where $\mathcal{B}=\frac{60\mathscr{L}d}{\mathscr{C}_{\textsf{\em LSI}}}+\frac{36\mathscr{L}^{2}d}{\mathscr{C}^2_{\textsf{\em LSI}}}$, $\mathcal{E}_0^N\defeq\mathcal{E}^N(\mu_0^N)-N\mathcal{E}(\mu_*)$. 
\end{theorem}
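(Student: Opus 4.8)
The plan is to re-run the entropic-hypocoercivity argument behind Theorem~\ref{convergence rate of the UMFLD} on the lifted state space $\mathbb{R}^{2Nd}$ with the parameters tuned exactly as for the mean-field case, and then to track the finite-particle error carefully. The starting observation is that \ref{Par-UMFLD} is \emph{itself} an underdamped Langevin diffusion on $\mathbb{R}^{2Nd}$: since $\nabla_{x^i}\bigl[NF(\mu_{\textbf{x}})\bigr]=D_{\rho}F(\mu_{\textbf{x}},x^i)$, the $i$-th velocity drift is $-\gamma v^i-\nabla_{x^i}U^N(\textbf{x})$ with potential $U^N(\textbf{x})\defeq NF(\mu_{\textbf{x}})$, so the dynamics admits $\mu_*^N\propto\exp\!\bigl(-NF(\mu_{\textbf{x}})-\tfrac12\|\textbf{v}\|^2\bigr)$ as its invariant law and $\mathcal{F}^N(\mu^N)-\mathcal{F}^N(\mu_*^N)=\textsf{KL}(\mu^N\|\mu_*^N)$. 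This lets me reuse the $\Gamma$-calculus computations of \citet{chen2023uniform} (their Theorems~2.1 and~2.2), now instantiated with the block matrix $S$ from~\eqref{matrix S}, the damping $\gamma=\sqrt{\mathscr{L}}$, and the Lyapunov functional $\mathcal{E}^N$ of~\eqref{eq:Lyap2}; the assumed finiteness of the second moment, entropy and Fisher information of $\mu_0^N$ makes $\mathcal{E}^N(\mu_0^N)$ finite and legitimizes the differentiations below.

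The core step is a dissipation inequality for $\mathcal{E}^N$ along \ref{Par-UMFLD}. Differentiating $\mathcal{F}^N(\mu_t^N)$ produces only the velocity-direction Fisher dissipation (the Hamiltonian part of the generator preserves $\mathcal{F}^N$), so one differentiates $\textsf{FI}^N_S(\mu_t^N\|\mu_*^N)$ and uses the off-diagonal entries of $S$ to recover dissipation in the position directions. This is where $\mathscr{L}$-smoothness (Assumption~\ref{smoothness}) enters: it bounds the diagonal Hessian blocks $\nabla^2_{x^ix^i}U^N$ and, crucially, the off-diagonal blocks $\nabla^2_{x^ix^j}U^N$, $i\ne j$, which are each of size $O(1/N)$ but couple all $N$ coordinates. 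Combining these estimates with the convexity of $F$ (Assumption~\ref{convexity}) and the LSI of~\eqref{eq:gibbs} (Assumption~\ref{LSI of PGD}) applied to the conditional laws of individual particles --- which together control the free-energy gap $\mathcal{F}^N(\mu^N)-N\mathcal{F}(\mu_*)$ by a multiple of $\textsf{FI}^N_S(\mu^N\|\mu_*^N)$ up to an $O(\mathcal{B})$ finite-particle correction --- I would arrive at
\begin{equation*}
\frac{\mathrm d}{\mathrm dt}\Bigl(\mathcal{E}^N(\mu_t^N)-N\mathcal{E}(\mu_*)\Bigr)\;\le\;-\frac{\mathscr{C}_{\textsf{LSI}}}{6\sqrt{\mathscr{L}}}\Bigl(\mathcal{E}^N(\mu_t^N)-N\mathcal{E}(\mu_*)\Bigr)+\frac{\mathscr{C}_{\textsf{LSI}}}{6\sqrt{\mathscr{L}}}\,\mathcal{B}.
\end{equation*}
The hypothesis $N\ge(\mathscr{L}/\mathscr{C}_{\textsf{LSI}})(32+24\mathscr{L}/\mathscr{C}_{\textsf{LSI}})$ is exactly what makes the $O(1/N)$ cross-particle error terms absorbable into half of the available hypocoercive dissipation, which is precisely what degrades the rate from $\mathscr{C}_{\textsf{LSI}}/(3\sqrt{\mathscr{L}})$ in Theorem~\ref{convergence rate of the UMFLD} to $\mathscr{C}_{\textsf{LSI}}/(6\sqrt{\mathscr{L}})$ here.

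It then remains to control the bias constant $\mathcal{B}=\tfrac{60\mathscr{L}d}{\mathscr{C}_{\textsf{LSI}}}+\tfrac{36\mathscr{L}^2d}{\mathscr{C}^2_{\textsf{LSI}}}$, which collects the propagation-of-chaos gap $\mathcal{F}^N(\mu_*^N)-N\mathcal{F}(\mu_*)$ together with the additive residual produced above. Because $\mu_*^N$ minimizes $\mathcal{F}^N$, testing against the product measure $(\mu_*)^{\otimes N}$ gives $\mathcal{F}^N(\mu_*^N)-N\mathcal{F}(\mu_*)\le N\bigl(\mathbb{E}[F(\mu_{\textbf{x}})]-F(\mu_*^X)\bigr)$ with $x^1,\dots,x^N$ i.i.d.\ from $\mu_*^X$; a second-order expansion of $F$ around $\mu_*^X$ bounds the right-hand side, since the first-order term vanishes in expectation and, by Assumption~\ref{smoothness}, the second-order term is $O(\mathscr{L}/N)\cdot\mathbb{E}_{\mu_*^X}\|x-\mathbb{E} x\|^2$, while $\mathbb{E}_{\mu_*^X}\|x-\mathbb{E} x\|^2\lesssim d/\mathscr{C}_{\textsf{LSI}}$ follows from the LSI of $\mu_*^X=\hat{\mu}_*^X$; the extra $\mathscr{L}/\mathscr{C}_{\textsf{LSI}}$ factor in the second term of $\mathcal{B}$ comes from the intermediate comparison against the tensorized reference measure built from~\eqref{eq:gibbs}. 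Finally, Gr\"onwall's lemma applied to the displayed differential inequality yields $\mathcal{E}^N(\mu_t^N)-N\mathcal{E}(\mu_*)\le\bigl(\mathcal{E}^N(\mu_0^N)-N\mathcal{E}(\mu_*)\bigr)\,e^{-\mathscr{C}_{\textsf{LSI}}t/(6\sqrt{\mathscr{L}})}+\mathcal{B}$; dividing by $N$ and using $\textsf{FI}^N_S\ge0$ (hence $\mathcal{F}^N\le\mathcal{E}^N$) together with $\mathcal{E}(\mu_*)=\mathcal{F}(\mu_*)$ (valid since $\mu_*=\hat{\mu}_*$) gives the claimed bound with $\mathcal{E}_0^N=\mathcal{E}^N(\mu_0^N)-N\mathcal{E}(\mu_*)$.

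I expect the main obstacle to be the hypocoercivity estimate for $\tfrac{\mathrm d}{\mathrm dt}\textsf{FI}^N_S(\mu_t^N\|\mu_*^N)$ on $\mathbb{R}^{2Nd}$: one must track the gradient--generator commutators coordinate-by-coordinate, control the full $2Nd\times2Nd$ Hessian of the potential of $\mu_*^N$ rather than just its diagonal, and show that the cross-particle coupling costs only a benign $O(1/N)$ perturbation --- all while keeping the constants tight enough that the $N$-threshold and the bias $\mathcal{B}$ come out exactly as stated.
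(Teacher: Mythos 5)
Your plan is correct and, in all essentials, coincides with the paper's proof: both treat \ref{Par-UMFLD} as an underdamped diffusion on $\mathbb{R}^{2Nd}$ with potential $NF(\mu_{\textbf{x}})$, rerun the entropic hypocoercivity computation of \citet{chen2023uniform} with the rescaled matrix $S$ and $\gamma=\sqrt{\mathscr{L}}$ to get $\tfrac{\mathrm{d}}{\mathrm{d}t}\mathcal{E}^N(\mu_t^N)\leq-\tfrac{2}{\sqrt{\mathscr{L}}}\textsf{FI}(\mu_t^N\|\mu_*^N)$, replace the (unavailable) uniform LSI for $\mu_*^N$ by the defective lower bound $\textsf{FI}(\mu_t^N\|\mu_*^N)\gtrsim\mathscr{C}_{\textsf{LSI}}(\mathcal{F}^N(\mu_t^N)-N\mathcal{F}(\mu_*))+\lambda_{\max}(S)^{-1}\textsf{FI}_S(\mu_t^N\|\mu_*^N)-O(\mathscr{L}d(\mathscr{C}_{\textsf{LSI}}+\mathscr{L})/\mathscr{C}_{\textsf{LSI}})$ from \citet{chen2023uniform}, use the $N$-threshold to absorb the $O(1/N)$ cross-particle terms into half the dissipation (whence the rate degradation from $1/3$ to $1/6$), and finish with Gr\"onwall. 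The one place you diverge is the accounting of the bias: you propose to bound the propagation-of-chaos gap $\mathcal{F}^N(\mu_*^N)-N\mathcal{F}(\mu_*)$ separately, by testing against $(\mu_*)^{\otimes N}$ and doing a second-order expansion of $F$. The paper never needs this step, because the defective Fisher-information inequality is stated directly against $N\mathcal{F}(\mu_*)$ rather than against $\mathcal{F}^N(\mu_*^N)$, so the \emph{entire} constant $\mathcal{B}=\tfrac{60\mathscr{L}d}{\mathscr{C}_{\textsf{LSI}}}+\tfrac{36\mathscr{L}^2d}{\mathscr{C}^2_{\textsf{LSI}}}$ is just the additive defect $\tfrac{2\sqrt{\mathscr{L}}d}{\mathscr{C}_{\textsf{LSI}}}(5\mathscr{C}_{\textsf{LSI}}+3\mathscr{L})$ divided by the decay rate $\tfrac{\mathscr{C}_{\textsf{LSI}}}{6\sqrt{\mathscr{L}}}$ in Gr\"onwall's lemma; your expansion would only recover a consequence of that inequality (evaluate it at $\mu^N=\mu_*^N$, where the Fisher information vanishes). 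This is an organizational difference rather than a gap --- your displayed differential inequality is exactly the paper's --- but the paper's route is tighter since it avoids a second, redundant comparison against a tensorized reference measure.
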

\end{tcolorbox}
Note that $\mathcal{E}_0^N=\mathcal{F}^N(\mu_0^N)-N\mathcal{F}(\mu_*)+\textsf{FI}^N_S(\mu_0^N\|\mu_*^N)\geq 0$ by Lemma \ref{Particle System's Entropy Inequality}.
The decaying rate given in Theorem~\ref{convergence rate of the UMFLD} resembles the decaying rate of ULD in \citet{zhang2023improved} with similar choices of $\gamma$ and $S$. Theorem~\ref{convergence of particle dynamics} implies the non-uniform-in-$N$ convergence of \ref{Par-UMFLD}, which incorporates a bias term involving $N$ due to the particle approximation. Our proof technique is more refined but parallel to that of~\citet{chen2023uniform} where 
our faster convergence and smaller bias is achieved by choosing $\gamma=\sqrt{\mathscr{L}}$ instead of $\gamma = 1$ (see Table~\ref{summary}).

Our main results analyze the convergence of the discrete-time processes \ref{dis-UMFLD} and \hyperlink{UNLA}{{N-ULA}} as well as their mixing time guarantees to generate an $\epsilon$-approximate solution in TV distance with the specific choice of initialization, damping coefficient $\gamma$, and stepsize $h$.
\begin{tcolorbox}[frame empty, left = 0.1mm, right = 0.1mm, top = 0.1mm, bottom = 0.1mm]
\begin{theorem}[Mean-field underdamped Langevin algorithm]
\label{iter complexity of mean-field system}
    In addition to the assumptions specified in Theorems \ref{convergence rate of the UMFLD}, let Assumptions~\ref{bounded grad}-\ref{initialization and invariant} hold. Denote $\bar{\mu}_{K}$ the law of $(x_K,v_K)$ of the \hyperlink{MULA}{\em \textsf{MULA}} and $\kappa\defeq \mathscr{L}/\mathscr{C}_{\sf{LSI}}$. Then in order to ensure $\|\bar{\mu}_{K}-\mu_*\|_{\textsf{\em TV}}\leq\epsilon$, it suffices to choose $\gamma=\sqrt{\mathscr{L}}$, $\bar{\mu}_{0}=\mathcal{N}(0,I_{2d})$, and
    \begin{equation*}
    h=\widetilde{\Theta}\left(\frac{\mathscr{C}_{\textsf{\em LSI}}\epsilon}{\mathscr{L}^{3/2}d^{1/2}}\right),\ K=\widetilde{\Theta}\left(\frac{\kappa^2d^{1/2}}{\epsilon}\right).
\end{equation*}
\end{theorem}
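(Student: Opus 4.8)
The plan is to establish the discrete-time convergence of \ref{dis-UMFLD} by comparing it to the continuous-time \ref{UMFLD} via an interpolation/one-step-error argument, in the spirit of how \citet{vempala2019rapid} and \citet{nitanda2022convex} handle discretizations, but adapted to the underdamped exponential integrator. First I would use Theorem~\ref{convergence rate of the UMFLD} to control the continuous-time process, which decays the Lyapunov functional $\mathcal{E}$ (hence $\mathcal{F}(\mu_t)-\mathcal{F}(\mu_*)$) at rate $\mathscr{C}_{\textsf{LSI}}/(3\sqrt{\mathscr{L}})$ with $\gamma = \sqrt{\mathscr{L}}$. Then, viewing \ref{dis-UMFLD} as \ref{UMFLD} with the drift $D_\rho F(\mu^X_t, x_t)$ frozen to its value at the last grid point $D_\rho F(\mu^X_{kh}, x_{kh})$, I would bound the per-step discretization error in terms of the deviation $\|D_\rho F(\mu^X_t,x_t) - D_\rho F(\mu^X_{kh},x_{kh})\|$, which by $\mathscr{L}$-smoothness (Assumption~\ref{smoothness}) is controlled by $\mathscr{L}(W_1(\mu^X_t,\mu^X_{kh}) + \|x_t - x_{kh}\|)$. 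Over an interval of length $h$, the position moves by roughly $h\|v_t\| + O(h^2)$ and the Wasserstein displacement of the marginal is similarly $O(h)$ times a second-moment factor, so the one-step error scales like $h^2$ times a polynomial in $\mathscr{L}, d$ via the second-moment bounds.

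The second main ingredient is a uniform-in-time second moment bound for the iterates $(x_t, v_t)$ along \ref{dis-UMFLD}: this is exactly what Assumptions~\ref{bounded grad}, \ref{second moment bound}, \ref{initialization and invariant} are designed to furnish. I would run a Lyapunov-function argument on $\mathbb{E}[\|x_t\|^2 + \|v_t\|^2]$ (or a tilted quadratic form adapted to the damping, to get genuine contraction toward an $O(d)$ ball), using Assumption~\ref{bounded grad} to bound the drift by $\mathscr{L}(1+\|x\|)$, Assumption~\ref{second moment bound} to control the target's second moment, and Assumption~\ref{initialization and invariant} together with $\bar\mu_0 = \mathcal{N}(0,I_{2d})$ to control the initial energy $\mathcal{F}(\bar\mu_0)$; this yields $\mathbb{E}[\|x_t\|^2 + \|v_t\|^2] \lesssim \kappa \, d$ or similar, uniformly in $t$, provided $h$ is small enough (e.g. $h \lesssim 1/\sqrt{\mathscr{L}}$). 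Feeding this back into the one-step error bound makes the accumulated discretization error over $K$ steps of size $\widetilde{O}(Kh \cdot h \cdot \mathrm{poly}(\mathscr{L}, d))$ in the natural metric (KL to the shifted stationary measure, or $\mathcal{F}$-gap).

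Combining the geometric decay with the accumulated error in a Grönwall-type recursion, I get $\mathcal{F}(\bar\mu_K) - \mathcal{F}(\mu_*) \lesssim e^{-c \mathscr{C}_{\textsf{LSI}} K h / \sqrt{\mathscr{L}}}\,\mathcal{E}_0 + \widetilde{O}(h \sqrt{\mathscr{L}} d / \mathscr{C}_{\textsf{LSI}})$ after optimizing the interpolation. To translate an $\mathcal{F}$-gap (or the corresponding KL-type quantity) into a TV bound on the $X$-marginal and hence on $\mu_*$, I would invoke the LSI from Assumption~\ref{LSI of PGD} (which gives, via Pinsker/Csisz\'ar–Kullback, that a small $\mathcal{F}$-gap forces small $\textsf{KL}(\bar\mu_K \| \mu_*)$, since $\mathcal{F}(\mu) - \mathcal{F}(\mu_*) \geq \frac{1}{2}\textsf{KL}$ up to constants by convexity of $F$ and the displacement-convexity structure) and then Pinsker's inequality $\|\bar\mu_K - \mu_*\|_{\textsf{TV}} \leq \sqrt{\tfrac12 \textsf{KL}(\bar\mu_K\|\mu_*)}$. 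Setting the exponential term and the bias term each below $\epsilon^2$ (to account for the square root in Pinsker) yields $h = \widetilde{\Theta}(\mathscr{C}_{\textsf{LSI}}\epsilon / (\mathscr{L}^{3/2} d^{1/2}))$ and $K = \widetilde{\Theta}(\sqrt{\mathscr{L}}/(\mathscr{C}_{\textsf{LSI}} h)) = \widetilde{\Theta}(\kappa^2 d^{1/2}/\epsilon)$, matching the stated rates.

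The main obstacle I anticipate is the mean-field coupling in the discretization error: unlike the linear (\ULD) case, freezing the drift freezes $D_\rho F(\mu^X_{kh}, \cdot)$ where $\mu^X_{kh}$ is itself the evolving marginal law, so the one-step error involves $W_1(\mu^X_t, \mu^X_{kh})$ — a quantity about the \emph{law}, not the sample path — which must be bounded by propagating a second-moment/regularity estimate through the nonlinear Fokker–Planck flow and then combined carefully with the entropic hypocoercivity estimate of Theorem~\ref{convergence rate of the UMFLD} (whose natural object is $\textsf{FI}_S$, not a Wasserstein distance). Getting the error terms to land in a form compatible with the weighted Fisher-information Lyapunov functional $\mathcal{E}$, rather than some incompatible metric, and ensuring the stepsize restriction needed for the second-moment bound does not dominate the one needed for the discretization error, is where the technical care will be concentrated.
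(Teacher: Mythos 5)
Your proposal shares several ingredients with the paper's proof: the drift deviation $\|D_{\rho}F(\mu_t^X,x_t)-D_{\rho}F(\mu_{kh}^X,x_{kh})\|$ controlled via Assumption~\ref{smoothness}, uniform second-moment bounds of order $\kappa d$ derived from Assumptions~\ref{bounded grad}--\ref{initialization and invariant} together with Talagrand's inequality and the monotonicity of $\mathcal{F}$, the entropy sandwich (Lemma~\ref{Mean-field Sandwich}) plus Pinsker for the TV conversion, and the final choice $T=\widetilde{\Theta}(\sqrt{\mathscr{L}}/\mathscr{C}_{\textsf{LSI}})$. But the load-bearing step of your argument is asserted rather than proved, and it is precisely the step the paper avoids. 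You claim a Gr\"onwall-type recursion yielding $\mathcal{F}(\bar{\mu}_K)-\mathcal{F}(\mu_*)\lesssim e^{-c\mathscr{C}_{\textsf{LSI}}Kh/\sqrt{\mathscr{L}}}\mathcal{E}_0+\widetilde{O}(h\sqrt{\mathscr{L}}d/\mathscr{C}_{\textsf{LSI}})$, i.e.\ geometric decay of the energy gap along the discrete process with an additive bias. Establishing this would require running the entropic-hypocoercivity argument of Theorem~\ref{convergence rate of the UMFLD} along the interpolated process with frozen drift, and the time derivative of the cross-term functional $\textsf{FI}_S(\mu_t\|\hat{\mu}_t)$ then picks up error terms involving derivatives of $\log(\mu_t/\hat{\mu}_t)$ against the drift deviation that your second-moment estimates do not control. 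The authors explicitly state in the Discussion that they were unable to obtain convergence of \MULA{} in the energy gap or KL divergence and leave it as future work; your outline presumes exactly that result.

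There is also a quantitative inconsistency that signals the gap is not merely technical. A one-step error of order $h^2$ fed into a contraction with per-step factor $1-\Theta(h\mathscr{C}_{\textsf{LSI}}/\sqrt{\mathscr{L}})$ produces a stationary bias \emph{linear} in $h$ at the KL/energy level (as you write); after Pinsker's square root this forces $h=\widetilde{\Theta}(\epsilon^2)$ and hence $K=\widetilde{\Theta}(1/\epsilon^2)$, not the claimed $K=\widetilde{\Theta}(\kappa^2d^{1/2}/\epsilon)$. The paper obtains the $1/\epsilon$ rate by a structurally different decomposition: Girsanov's theorem (Lemma~\ref{Girsanov}) bounds the \emph{pathwise} KL divergence $\textsf{KL}(\mathbf{Q}_{Kh}\|\mathbf{P}_{Kh})$ by a sum of per-interval terms of total size $O(h^2T\cdot\mathrm{poly}(\mathscr{L},d,\mathscr{C}_{\textsf{LSI}}^{-1}))$ --- quadratic in $h$ with no division by the contraction rate --- the data-processing inequality transfers this to $\textsf{KL}(\mu_{Kh}\|\bar{\mu}_K)$, and the triangle inequality in TV combines it with the continuous-time convergence $\|\mu_{Kh}-\mu_*\|_{\textsf{TV}}\lesssim\sqrt{\mathcal{F}(\mu_{Kh})-\mathcal{F}(\mu_*)}$ from Theorem~\ref{convergence rate of the UMFLD}. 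No decay of any functional along the discrete process is needed, and the moment bounds (Lemma~\ref{moment bound mean-field}) are taken under the \emph{continuous} law $\mathbf{Q}$, where the monotonicity of $\mathcal{F}$ along \UMLD{} is available, rather than along the discrete iterates as you propose. To repair your argument you would either need to adopt this path-measure route or genuinely prove the discrete-time energy-gap recursion, which is an open problem.
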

\end{tcolorbox}
A similar guarantee can be stated for the $N$-particle system~\eqref{dis-par-UMFLD} with the additional requirement that the number of particles scale according to the dimension of the problem and problem parameters. 
\begin{tcolorbox}[frame empty, left = 0.1mm, right = 0.1mm, top = 0.1mm, bottom = 0.1mm]
\begin{theorem}[N-particle underdamped Langevin algorithm]
\label{iter complexity for particle system}
      In addition to the assumptions specified in Theorem \ref{convergence of particle dynamics}, let Assumptions~\ref{bounded grad}, \ref{second moment bound} and~\ref{initialization and invariant for N-particle} hold. Denote $\bar{\mu}^i_{K}$ the law of $(x_K^i,v_K^i)$ of the \hyperlink{NULA}{\em \textsf{NULA}} for $i=1,...,N$ and $\kappa\defeq \mathscr{L}/\mathscr{C}_{\mathsf{LSI}}$. Then in order to ensure $ \frac{1}{N}\sum_{i=1}^N\|\bar{\mu}_{K}^i-\mu_*\|_{\textsf{\em TV}}\leq\epsilon$, it suffices to choose $\gamma=\sqrt{\mathscr{L}}$, $\bar{\mu}^N_0=\mathcal{N}(0,I_{2Nd})$,
     \begin{equation*}
         h=\widetilde{\Theta}\left(\frac{\mathscr{C}_{\textsf{\em LSI}}\epsilon}{\mathscr{L}^{3/2}d^{1/2}}\right),\ K=\widetilde{\Theta}\left(\frac{\kappa^2d^{1/2}}{\epsilon}\right),
     \end{equation*}
     and the number of particles
        $ N=\Theta\left(\kappa^{2}d/\epsilon^2\right).$
\end{theorem}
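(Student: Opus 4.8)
The plan is to combine three ingredients: (i) the continuous-time contraction of the $N$-particle dynamics \ref{Par-UMFLD} from Theorem~\ref{convergence of particle dynamics}, (ii) a one-step (or one-interval) discretization bound controlling the gap between \hyperlink{UNLA}{\textsf{NULA}} and \ref{Par-UMFLD} accumulated over the interval $[kh,(k+1)h]$, and (iii) a conversion from the Lyapunov functional $\tfrac1N\mathcal{F}^N(\mu^N)-\mathcal{F}(\mu_*)$ to the averaged TV guarantee via Pinsker's inequality and the subadditivity/tensorization of entropy. First, I would run the standard interpolation argument (in the style of \citet{vempala2019rapid,nitanda2022convex}, adapted to the underdamped/kinetic setting as in \citet{cheng2018underdamped,ma2021there,zhang2023improved}) along the exponential-integrator interpolation \eqref{dis-par-UMFLD}: because EI integrates the Ornstein--Uhlenbeck part exactly and only freezes the drift $D_\rho F(\mu_{\mathbf{x}_{kh}},x^i_{kh})$, the discretization error per interval is governed by how much $D_\rho F(\mu_{\mathbf{x}_t},x^i_t)$ moves away from its frozen value. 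By $\mathscr{L}$-smoothness (Assumption~\ref{smoothness}) and the bounded-gradient Assumption~\ref{bounded grad}, that movement is controlled by $\mathscr{L}h\,\mathbb{E}\|v^i\|$ plus $\mathscr{L}h\,(1+\mathbb{E}\|x^i\|)$-type terms, so I need second-moment bounds on $(x_t^i,v_t^i)$; these follow from Assumptions~\ref{second moment bound}, \ref{initialization and invariant for N-particle} together with the dissipativity built into the $\gamma v$ damping term, yielding $\tfrac1N\sum_i\mathbb{E}\|(x_t^i,v_t^i)\|^2\lesssim d$ uniformly in $t$ (this is the role those three extra assumptions play, exactly as flagged in the text). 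Feeding this into the interpolation inequality gives a differential inequality of the form $\tfrac{d}{dt}[\tfrac1N\mathcal{F}^N-\mathcal{F}(\mu_*)] \le -c\tfrac{\mathscr{C}_{\textsf{LSI}}}{\sqrt{\mathscr{L}}}[\cdots] + (\text{error}) + \tfrac{\mathcal{B}}{N}$, where the error term is $\widetilde{O}(\mathscr{L}^{?}d\,h^{2})$ after optimizing; integrating over $[0,Kh]$ and using the contraction factor $\exp(-\tfrac{\mathscr{C}_{\textsf{LSI}}}{6\sqrt{\mathscr{L}}}t)$ produces a bound of shape $\widetilde{O}\big(e^{-\mathscr{C}_{\textsf{LSI}}Kh/(6\sqrt{\mathscr{L}})}\cdot\tfrac{\mathcal{E}_0^N}{N}\big) + \widetilde{O}(\text{poly}(\mathscr{L},\mathscr{C}_{\textsf{LSI}}^{-1})\,d\,h) + \tfrac{\mathcal{B}}{N}$ on $\tfrac1N\mathcal{F}^N(\bar\mu_K^N)-\mathcal{F}(\mu_*)$.

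Second, I would pass from this functional gap to the averaged TV distance. The key observation is that $\tfrac1N\mathcal{F}^N(\mu^N)-\mathcal{F}(\mu_*) \ge \tfrac1N\mathsf{KL}(\mu^N\|\mu_*^N) - \tfrac{\mathcal{B}}{N}$ (or an analogous inequality — this is essentially Lemma~\ref{Particle System's Entropy Inequality} combined with the LSI-type estimates, since $\mathcal{E}_0^N\ge 0$ and $\mathcal{F}^N - NF(\mu_*) + \text{bias}$ dominates the particle KL), and then by convexity of relative entropy and the data-processing inequality applied to the $i$-th marginal, $\tfrac1N\sum_i\mathsf{KL}(\bar\mu_K^i\|\mu_*) \le \tfrac1N\mathsf{KL}(\mu^N_K\|\mu_*^N)$ up to the propagation-of-chaos bias already absorbed in $\mathcal{B}/N$. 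Pinsker's inequality and Jensen then give $\tfrac1N\sum_i\|\bar\mu_K^i-\mu_*\|_{\textsf{TV}} \le \sqrt{\tfrac1{2N}\sum_i\mathsf{KL}(\bar\mu_K^i\|\mu_*)}$, so it suffices to drive the right-hand functional below $2\epsilon^2$.

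Third, I would balance the three error sources against $\epsilon^2$ (the TV target is $\epsilon$, so each KL-level term must be $\lesssim\epsilon^2$). The bias $\mathcal{B}/N$ with $\mathcal{B}=\Theta(\mathscr{L}d/\mathscr{C}_{\textsf{LSI}}+\mathscr{L}^2 d/\mathscr{C}_{\textsf{LSI}}^2)=\Theta(\kappa\mathscr{L}d)$ forces $N=\Theta(\kappa^2 d/\epsilon^2)$ (matching the stated choice, since $\mathcal{B}/N\lesssim\epsilon^2$ needs $N\gtrsim\mathcal{B}/\epsilon^2$ and $\mathcal{B}\asymp\kappa^2\mathscr{C}_{\textsf{LSI}}d\asymp$ up to the $\mathscr{L}$ factors; the precise $\kappa^2 d/\epsilon^2$ comes out after plugging $\mathscr{C}_{\textsf{LSI}}\le 1$). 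The discretization term $\widetilde{O}(\text{poly}(\kappa)\,d\,h)$ set to $\epsilon^2$ gives $h=\widetilde\Theta(\mathscr{C}_{\textsf{LSI}}\epsilon^2/(\text{poly}\cdot d))$; reconciling with the stated $h=\widetilde\Theta(\mathscr{C}_{\textsf{LSI}}\epsilon/(\mathscr{L}^{3/2}d^{1/2}))$ requires that the dominant per-step error actually scale like $h$ (not $h^2$) in the right metric and be multiplied by the right powers — I would track constants carefully through the kinetic interpolation to land on $\mathscr{L}^{3/2}d^{1/2}$ in the denominator. Finally the contraction term demands $Kh\gtrsim\tfrac{\sqrt{\mathscr{L}}}{\mathscr{C}_{\textsf{LSI}}}\log(\mathcal{E}_0^N/(N\epsilon^2))=\widetilde\Theta(\sqrt{\mathscr{L}}/\mathscr{C}_{\textsf{LSI}})$, so $K=\widetilde\Theta(\tfrac{\sqrt{\mathscr{L}}/\mathscr{C}_{\textsf{LSI}}}{h})=\widetilde\Theta(\kappa^2 d^{1/2}/\epsilon)$ as claimed.

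\textbf{Main obstacle.} The crux is the kinetic discretization analysis — establishing the one-interval error bound for \hyperlink{UNLA}{\textsf{NULA}} vs.\ \ref{Par-UMFLD} in a form compatible with the Lyapunov functional $\mathcal{E}^N$ (which mixes the augmented free energy $\mathcal{F}^N$ with the skew Fisher-information term $\textsf{FI}^N_S$). One must show the frozen-drift perturbation does not destroy the hypocoercive decay rate and that the resulting error is linear in $h$ with the stated dimension/parameter dependence, all while keeping the estimate uniform in $N$ after dividing by $N$; controlling the $\textsf{FI}^N_S$ part of $\mathcal{E}^N$ along the discretized flow (as opposed to the exact flow) is the delicate piece, since Fisher information is not obviously well-behaved under the EI map without the exact OU integration being exploited carefully. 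The uniform-in-time second moment bound on the particle system under Assumptions~\ref{second moment bound}--\ref{initialization and invariant for N-particle} is the other technical lynchpin and should be proved first as a standalone lemma.
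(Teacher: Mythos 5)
Your high-level decomposition (continuous-time contraction $+$ discretization error $+$ Pinsker and tensorization to reach the averaged TV bound) matches the paper's, and you correctly identify what Assumptions~\ref{second moment bound}--\ref{initialization and invariant for N-particle} are for. But the engine you propose --- an interpolation/differential-inequality argument that tracks the Lyapunov functional $\mathcal{E}^N$, including the hypocoercive term $\textsf{FI}^N_S$, along the \emph{discretized} flow --- is not carried out, and the obstacle you yourself flag (controlling $\textsf{FI}^N_S$ under the EI map) is a genuine gap in that route, not a detail. The paper sidesteps it entirely: it applies Girsanov's theorem (Lemma~\ref{Girsanov}) to the path measures of~\ref{Par-UMFLD} and~\eqref{dis-par-UMFLD} on $C([0,Kh];\mathbb{R}^{2d})$, reducing the discretization error to $\frac{1}{4\gamma}\sum_k\int_{kh}^{(k+1)h}\mathbb{E}\big\|D_\rho F(\mu_{\mathbf{x}_t},x_t^i)-D_\rho F(\mu_{\mathbf{x}_{kh}},x_{kh}^i)\big\|^2\mathrm{d}t$, which is then bounded by Assumption~\ref{smoothness}, Lemma~\ref{bound of the iterate difference}, and the second-moment bound of Lemma~\ref{moment bound par}. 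Note that this moment bound is needed only along the \emph{continuous} dynamics (the Girsanov expectation is under $\mathbf{Q}$), is of size $\mathscr{L}d/\mathscr{C}_{\textsf{LSI}}$ rather than $d$, and is proved via Talagrand's inequality, the entropy sandwich (Lemma~\ref{Particle System's Entropy Inequality}), and monotonicity of $\mathcal{F}^N$ --- not via dissipativity of the damping term. Data processing transfers the pathwise KL to the time-marginals, Pinsker converts to TV, and the TV triangle inequality combines this with the continuous-time rate of Theorem~\ref{convergence of particle dynamics} (itself passed to averaged TV through Lemmas~\ref{Particle System's Entropy Inequality} and~\ref{Information Inequality}). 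At no point is the Lyapunov functional or any Fisher information evaluated along the discrete iterates.

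Your parameter accounting also does not close as written. You posit a KL-level discretization error of order $\widetilde{O}(\mathrm{poly}(\kappa)\,d\,h)$, which would force $h=\widetilde{\Theta}(\epsilon^2/(\mathrm{poly}\cdot d))$, inconsistent with the theorem's $h=\widetilde{\Theta}\big(\mathscr{C}_{\textsf{LSI}}\epsilon/(\mathscr{L}^{3/2}d^{1/2})\big)$; you acknowledge the tension but defer it to ``tracking constants.'' In the Girsanov computation the dominant KL-level error over horizon $T=Kh$ is $\Theta(\mathscr{L}^{5/2}h^2Td/\mathscr{C}_{\textsf{LSI}})$ --- quadratic in $h$, because the integrand is a \emph{squared} drift increment of size $O(\mathscr{L}h(\|v_{kh}^i\|+\cdots))$ --- so after Pinsker the TV error is linear in $h$, and setting it to $\epsilon$ with $T=\widetilde{\Theta}(\sqrt{\mathscr{L}}/\mathscr{C}_{\textsf{LSI}})$ yields exactly the stated $h$ and $K=T/h$. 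The particle count then comes from the bias of Theorem~\ref{convergence of particle dynamics}: the dominant term of $\mathcal{B}$ is $\mathscr{L}^2d/\mathscr{C}_{\textsf{LSI}}^2=\kappa^2 d$, and requiring $\sqrt{\mathcal{B}/N}\lesssim\epsilon$ gives $N=\Theta(\kappa^2 d/\epsilon^2)$. To repair your proposal, replace the interpolation step by the path-measure comparison (or else supply the missing control of $\textsf{FI}^N_S$ along the discrete flow, which the paper does not provide either).
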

\end{tcolorbox}

\begin{table}[t]
    \centering
    \scalebox{0.95}{
    \begin{tabular}{cccc}
    \toprule
       \textbf{Discretization} &\textbf{Method} & \textbf{\# of particles} & \textbf{Mixing time} \\ \hline Time-discretizations & 
         MLA~\citep{nitanda2022convex}     &      *        &      $\widetilde{\Theta}\Big(\kappa^2 \mathscr{L}d/ \epsilon^2\Big)$       \\  \cline{2-4}
& \ref{discretization of ULD}~\citep{zhang2023improved}   &   *          &           $\widetilde{\Theta}\Big(\kappa^{3/2}d^{1/2}/\epsilon\Big)$       \\  \cline{2-4} &
{\hyperlink{MULA}{{MULA}}} {\bf (Ours)}  &   *           &        {\bf $\widetilde{\Theta}\left(\kappa^{2}d^{1/2}/\epsilon\right)$}         \\ \hline Space-discretizations &\ref{Par-UMFLD} (\citet{chen2023uniform})& ${\Theta}\left(\kappa^2\mathscr{L}d/\epsilon^2\right)$ & $\widetilde{\Theta}\Big(\kappa\Big)$\\
\cline{2-4} & \ref{Par-UMFLD}  {\bf (Ours)} &  $\Theta\Big(\kappa^2d/\epsilon^2\Big)$ & {$\widetilde{\Theta}\Big(\kappa/\mathscr{L}^{1/2}\Big)$}
\\
\hline
spacetime discretizations& \ref{NLA}~\citep{suzuki2023convergence}    &    ${\Theta}\left(
\kappa\mathscr{L}^3d/\epsilon^2\right)$    &    $\widetilde{\Theta}\left(\kappa^2\mathscr{L}d/\epsilon^2\right)$        \\  \cline{2-4}
& {\hyperlink{NULA}{{NULA}}} {\bf (Ours)} &    {$\Theta\left(\kappa^2d/\epsilon^2\right)$}         &            $\widetilde{\Theta}\Big(\kappa^2d^{1/2}/\epsilon\Big)$     \\ \bottomrule
    \end{tabular}}
    \caption{Comparison of algorithms in terms of the mixing time and number of particles to achieve $\epsilon$-approximate solutions in TV distance.  $\kappa \defeq \mathscr{L}/\mathscr{C}_{\textsf{LSI}}$. * represents that we do not need particle approximation for this method.}
    \label{summary}
\end{table}
\subsection{Proof sketches}
For the continuous-time results, we outline the proof of Theorem~\ref{convergence rate of the UMFLD} (and analogously Theorem~\ref{convergence of particle dynamics}) in this section to provide intuition for how choosing $\gamma=\sqrt{\mathscr{L}}$ can improve the decaying rate of \UMLD{}. We begin with a review of some notations of hypocoercivity in \citet{villani2009hypocoercivity,chen2023uniform}:
\begin{equation*}
    A_t=\nabla_v,\ C_t=\nabla_x,\ Y_t=\left(\|A_tu_t\|_{L^2(\mu_t)},\|A_t^2u_t\|_{L^2(\mu_t)}, \|C_tu_t\|_{L^2(\mu_t)}, \|C_tA_tu_t\|_{L^2(\mu_t)}\right)^{\mathsf{T}},
\end{equation*}
where $u_t=\log\frac{\mu_t}{\hat{\mu}_t}.$ Inheriting the analysis of Theorem 2.1 in \citet{chen2023uniform} and Lemma 32 in \citet{villani2009hypocoercivity}, we show that for a general $\gamma$, the Lyapunov functional \eqref{eq:Lyap1} with $S=[s_{ij}]\otimes I_d\in\mathbb{R}^{2d\times 2d}$ is decreasing along \UMLD{} satisfying
\begin{equation}
\label{decreasing Lyap}
\frac{\mathrm{d}}{\mathrm{d}t}\mathcal{E}(\mu_t)\leq -Y_t^{\mathsf{T}}\mathcal{K}Y_t,
\end{equation}
where $s_{11}=c$, $s_{12}=s_{21}=b$, $s_{22}=a$ and $\mathcal{K}$ is an upper triangle matrix with diagonal elements $(\gamma+2\gamma a-4\mathscr{L}b,\,2\gamma a,\,2b,\,2\gamma c).$ To ensure $S\succ 0$ and the right hand side of \eqref{decreasing Lyap} negative, the criteria of choosing positive constants $a,\,b,\,c$ should be $ac>b^2$ and $\mathcal{K}\succ 0.$ If we specify $\gamma=1$, we can choose $a=c=2\mathscr{L}$ and $b=1$ satisfying the criteria. Then we obtain $\lambda_{\textsf{min}}(\mathcal{K})=1$ and
\begin{equation*}
    \begin{aligned}
        \frac{\mathrm{d}}{\mathrm{d}t}\mathcal{E}(\mu_t)\leq -\lambda_{\textsf{min}}(\mathcal{K})Y_t^{\mathsf{T}}Y_t&\leq-\mathscr{C}_{\textsf{LSI}}(\mathcal{F}(\mu_t)-\mathcal{F}(\mu_*))-\frac{1}{2\lambda_{\textsf{max}}(S)}\mathsf{FI}_S(\mu_t\|\hat{\mu_t})\\
&\leq-\frac{\mathscr{C}_{\textsf{LSI}}}{6\mathscr{L}}(\mathcal{E}(\mu_t)-\mathcal{E}(\mu_*))
    \end{aligned}
\end{equation*}
Applying Gr{\"o}nwall's inequality leads to the decaying rate $O(\mathscr{C}_{\textsf{LSI}}/\mathscr{L})$ of \UMLD{} ($\gamma=1$) in \citet{chen2023uniform}. If we specify $\gamma=\sqrt{\mathscr{L}}$, we can choose $b=1/\sqrt{\mathscr{L}},\,a=2,\,c=1/\mathscr{L}$ satisfying the criteria. Then we obtain $\lambda_{\textsf{min}}(\mathcal{K})=2/\sqrt{\mathscr{L}}$ and
\begin{equation*}
    \begin{aligned}
        \frac{\mathrm{d}}{\mathrm{d}t}\mathcal{E}(\mu_t)\leq -\lambda_{\textsf{min}}(\mathcal{K})Y_t^{\mathsf{T}}Y_t&\leq-\frac{2\mathscr{C}_{\textsf{LSI}}}{\sqrt{\mathscr{L}}}(\mathcal{F}(\mu_t)-\mathcal{F}(\mu_*))-\frac{1}{\lambda_{\textsf{max}}(S)\sqrt{\mathscr{L}}}\mathsf{FI}_S(\mu_t\|\hat{\mu}_t)\\
        &\leq-\frac{\mathscr{C}_{\textsf{LSI}}}{3\sqrt{\mathscr{L}}}(\mathcal{E}(\mu_t)-\mathcal{E}(\mu_*))
    \end{aligned}
\end{equation*}
Applying Gr{\"o}nwall's inequality leads to the improved decaying rate $O(\mathscr{C}_{\textsf{LSI}}/\sqrt{\mathscr{L}})$ of \UMLD{} ($\gamma=\sqrt{\mathscr{L}}$) in our Theorem~\ref{convergence rate of the UMFLD}. We defer the whole proof to Appendix~\ref{Continuous-time results}.

For discretization errors, we outline the proof of Theorem~\ref{iter complexity of mean-field system} (and analogously Theorem \ref{iter complexity for particle system}) in this section.  Let $(\mu_{t})_{t\geq 0}$ and $(\Bar{\mu}_{t/h})_{t\geq 0}$ represent the law of~\ref{UMFLD} and \ref{dis-UMFLD} initialized at $\mu_0$. Let $\textbf{Q}_{kh}$ and $\textbf{P}_{kh}$ denote probability measures of \UMLD{} and \UMLA{} on the space of paths $C([0,kh],\mathbb{R}^{2d})$. Invoking Girsanov's theorem~\citep{girsanov1960transforming,kutoyants2004statistical,le2016brownian} and Assumption~\ref{smoothness}, we can upper bound the pathwise divergence between \UMLD{} and \UMLA{} in KL divergence for stepsize $h$ and $k=1,...,K$ under Assumptions~\ref{smoothness} and \ref{bounded grad}:
\begin{equation}
\label{discretization error in KL}
    \textsf{KL}(\textbf{Q}_{Kh}\|\textbf{P}_{Kh})\lesssim\frac{\mathscr{L}^4h^5}{\gamma}\sum_{k=0}^{K-1}\mathbb{E}_{\mathbf{Q}_{Kh}}\|x_{kh}\|^2+\frac{\mathscr{L}^2h^3}{\gamma}\sum_{k=0}^{K-1}\mathbb{E}_{\mathbf{Q}_{Kh}}\|v_{kh}\|^2+\frac{\mathscr{L}^4h^5K}{\gamma}+{\mathscr{L}^2h^4Kd}
\end{equation}
The derivation of~\eqref{discretization error in KL} is similar to that of \citet{zhang2023improved}; they establish the discretization error of \ref{discretization of ULD} in $q$-th order R{\'e}nyi divergence ($q\in[1,2)$), which has KL divergence as a special case  ($q=1$). Their smoothness assumption on the potential function $V$ is $(\mathscr{L},s)$-weak smoothness, which recovers $\mathscr{L}$-smoothness when $s=1$. 
We use many similar techniques of bounding the discretization error to those of~\citet{zhang2023improved}. Their 
Lemma 26 can be generalized to our Lemma~\ref{bound of the iterate difference} in the mean-field setting, which describes an intermediate process of deriving \eqref{discretization error in KL}.
Applying the data processing inequality, we can upper bound the KL divergence between the time marginal laws of the iterates by KL divergence between path measures:
\begin{equation*}
    \textsf{KL}(\mu_T\|\Bar{\mu}_K)\leq\textsf{KL}(\textbf{Q}_{Kh}\|\textbf{P}_{Kh}),
\end{equation*}
where $T=Kh$. Uniformly upper bounding the right-hand side of \eqref{discretization error in KL} requires obtaining uniform bounds for $\mathbb{E}_{\mathbf{Q}_{Kh}}\|x_{kh}\|^2$ and $\mathbb{E}_{\mathbf{Q}_{Kh}}\|v_{kh}\|^2$; 
If we were to rely on existing techniques~\cite{zhang2023improved}, we would need a $\chi^2$-convergence guarantee of \UMLD{}.
Given $\chi^2$-convergence is not established for \UMLD{} by previous works, we develop different techniques to uniformly upper bound the iterates of \UMLD{} and \UNLD{}. More specifically, we have
\begin{equation*}
    \mathbb{E}_{\textbf{Q}_T}\|(x_t,v_t)\|^2=W_2^2(\mu_t,\delta_0)\lesssim \underbrace{W_2^2(\mu_t,\mu_*)}_{\textsf{I}}+\underbrace{W_2^2(\mu_*,\delta_0)}_{\textsf{II}},\ t\in[0,T],
\end{equation*}
where $\delta_0$ is Dirac measure on $\textsf{0}\in\mathbb{R}^{2d}$, and $\textsf{II}$ is the second moment of $\mu_*$ denoted by $\textbf{m}_2^2$. Now we need to upper bound $\textsf{I}$. Under Assumption~\ref{LSI of PGD}, $\mu_*$ satisfies LSI implying Talagrand's inequality: $\textsf{I}\lesssim\textsf{KL}(\mu_t\|\mu_*)/\mathscr{C}_{\textsf{LSI}}$. Under Assumptions~\ref{convexity} and \ref{smoothness}, Lemma 4.2 in \citet{chen2023uniform} establishes the following relation between KL divergence and energy gap: 
\begin{equation}
\label{KL<F}
\textsf{KL}(\mu_t\|\mu_{*})\leq\mathcal{F}(\mu_t)-\mathcal{F}(\mu_*).
\end{equation}
Moreover, \citet{kazeykina2020ergodicity,chen2023uniform} demonstrate that $\mathcal{F}(\mu_t)$ is decreasing along \UMLD{}. According to two conclusions above, $\textsf{I}$ can be bounded as
\begin{equation*}
    \textsf{I}\lesssim\frac{\textsf{KL}(\mu_t\|\mu_{*})}{\mathscr{C}_{\textsf{LSI}}}\leq \frac{\mathcal{F}(\mu_t)-\mathcal{F}(\mu_*)}{\mathscr{C}_{\textsf{LSI}}}\leq\frac{\mathcal{F}(\mu_0)-\mathcal{F}(\mu_*)}{\mathscr{C}_{\textsf{LSI}}}\leq \frac{\mathcal{F}(\mu_0)}{\mathscr{C}_{\textsf{LSI}}},
\end{equation*}
where the last inequality follows from the assumption that $\mathcal{F}(\mu_*)\geq 0$. Therefore, under Assumptions~\ref{second moment bound} on $\textbf{m}_2^2$ and \ref{initialization and invariant} on $F(\mu_0)$, our Lemma \ref{moment bound mean-field} establishes the upper bound of $\mathbb{E}_{\textbf{Q}_T}\|(x_t,v_t)\|^2$ in terms of $\mathscr{L},\,\mathscr{C}_{\textsf{LSI}}$ and $d$, which implies the uniform upper bound of $ \textsf{KL}(\mu_T\|\Bar{\mu}_K)$.
Applying Pinsker's inequality $$\|\bar{\mu}_{K}-\mu_T\|_{\textsf{TV}}\lesssim\sqrt{\textsf{KL}(\mu_T\|\bar{\mu}_{K})},$$ 
we can convert the discretization error bound in KL divergence to that in TV distance.
Combining Pinsker's inequality and relation \eqref{KL<F}, we derive the continuous-time convergence of \UMLD{} in Theorem~\ref{convergence rate of the UMFLD} in TV distance:
\begin{equation}
\label{continuous in TV}
    \begin{aligned}
        \|\mu_{T}-\mu_*\|_{\textsf{TV}}\lesssim\sqrt{\textsf{KL}(\mu_T\|\mu_*)}\leq\sqrt{\mathcal{F}(\mu_T)-\mathcal{F}(\mu_*)}.
    \end{aligned}
\end{equation}
Applying the triangle inequality to $\|\bar{\mu}_{K}-\mu_*\|_{\textsf{TV}}$, the TV distance between the law of \ref{dis-UMFLD} at $Kh$ and the limiting distribution of \UMLD{}, we obtain the global convergence of \UMLA{}:
\begin{equation*}
    \begin{aligned}
        \|\bar{\mu}_{K}-\mu_*\|_{\textsf{TV}}&\leq\underbrace{\|\bar{\mu}_{K}-\mu_{T}\|_{\textsf{TV}}}_{\mathcal{B}}+\underbrace{\|\mu_{T}-\mu_*\|_{\textsf{TV}}}_{\mathcal{V}},
    \end{aligned}
\end{equation*}
where $\mathcal{V}$ vanishes exponentially fast as $T\rightarrow\infty$ and $\mathcal{B}$ is a vanishing bias as $h\rightarrow 0$. To ensure $\mathcal{V}+\mathcal{B}\leq\epsilon$, it suffices to choose $T=\widetilde{\Theta}(\sqrt{\mathscr{L}}/\mathscr{C}_{\textsf{LSI}})$ and specify $h,\,K$ as in Theorem~\ref{iter complexity of mean-field system}. The whole proof is deferred to Appendix \ref{Proof of the Discrete-time-space Convergence}. 

\subsection{Discussion of mixing time results}
We summarize the convergence results of \hyperlink{MULA}{\textsf{MULA}}, \hyperlink{NULA}{\textsf{NULA}} and several existing methods including \ref{discretization of ULD}, the EM-discretization of~\ref{MFLD} (referred to as \MLA{}~\citep{nitanda2022convex}), and its finite-particle system (referred to as \ref{NLA}~\citep{suzuki2023convergence}) 
in Table~\ref{summary}. For the mixing time to generate an $\epsilon$-approximate solution in TV distance, our proposed \UMLA{} and \UNLA{} achieve better dependence on $\mathscr{L}$, $d$ and $\epsilon$ than \MLA{} and \NLA, and keep the same dependence on $\mathscr{C}_{\textsf{LSI}}$ as \MLA{} and \NLA{}, which justifies that our methods are fast. For the number of particles, we improve the dependence on $\mathscr{L}$ for \UNLD{} ($\gamma=\sqrt{\mathscr{L}}$) when compared with \UNLD{} ($\gamma=1$) in \citet{chen2023uniform} and for \UNLA{} when compared with \NLA{}. Particularly, our dependence on the smoothness constant in the number of particle guarantee of \UNLA{} is $\Theta(\mathscr{L}^2)$ whereas the counterpart of \NLA{} is $\Theta(\mathscr{L}^4)$. However, our dependence on the LSI constant in the number of particle guarantee of \UNLA{} is $\Theta(\mathscr{C}^{-2}_{\textsf{LSI}})$ whereas the counterpart of \NLA{} is $\Theta(\mathscr{C}^{-1}_{\textsf{LSI}})$.

Note that \citet{nitanda2022convex} consider \MLA{} in the neural network setting where they specifically choose $F$ to be the objective~\eqref{empirical risk minimization} and propose assumptions on $l$, $h$ and $r$. \citet{suzuki2023convergence} consider \ref{NLA} in a setting where they specify that $F(\mu)=U(\mu)+\mathbb{E}_{\mu}[r(x)]$ and propose assumptions on $U$ and $r$. Consequently, they use different notations of the smoothness constant and establish the convergence rate in energy gap $\mathcal{F}(\Bar{\mu}_{K})-\mathcal{F}(\mu_*)$ instead of the TV distance. To make a fair comparison, we equivalently translate those smoothness constants into $\mathscr{L}$ and convert convergence rates of \MLA{} and \NLA{} to those in TV distance by relation \eqref{KL<F} and Pinsker's inequality (see Appendix \ref{Translation of the Compared Methods}).

\section{Applications of Algorithm~\ref{UNLA}}
\label{sec:ex}
In this section, we will show how Algorithm~\ref{UNLA} can be applied to several applications by verifying Assumptions \ref{convexity}-\ref{initialization and invariant for N-particle} hold for these examples. We present these results in full details in~\cref{Verification of Assumption}.
\subsection{Training mean-field neural networks}
Consider a two-layer mean-field neural network (with infinite depth), which can be parameterized as $h(\rho;a)\defeq\mathbb{E}_{x\sim\rho}[h(x;a)],$
where $h(x;a)$ represents a single neuron with trainable parameter $x$ and input $a$ (e.g. $h(x;a)=\sigma(x^{\mathsf{T}}a)$ for activation function $\sigma$); $\rho$ is the probability distribution of the parameter $x$. Given dataset $(a_i,b_i)_{i=1}^n$ and loss function $\ell$, we choose $F$ in objective \eqref{augmented mean-field opt} to be
\begin{equation}
\label{empirical risk minimization}
    F(\mu^X)=\frac{1}{n}\sum_{i=1}^n\ell(h(\mu^X;a_i),b_i)+\frac{\lambda'}{2}\mathbb{E}_{x\sim \mu^X}\|x\|^2,
\end{equation}
The objectives 
\eqref{empirical risk minimization} satisfy Assumptions~\ref{convexity}-\ref{bounded grad} for specific common choices $\ell$ and $h$ described in several works ~\citep{nitanda2022convex,chen2022uniform,chen2023uniform,suzuki2023convergence}. 
If there exists $\mathscr{L}>0$ such that the activation function satisfies $|h(x;a)|\leq\sqrt{\mathscr{L}}$ (also proposed in \citet{suzuki2023convergence}) and the convex loss function $\ell$ is quadratic or satisfies $|\partial_1\ell|\leq \sqrt{\mathscr{L}}$ (also proposed in \citet{nitanda2022convex}), $F$ satisfies Assumption~\ref{second moment bound} with $\lambda'\leq(2\pi)^3\exp(-8\mathscr{L})$. Finally, if in addition we assume $\ell$ is $\sqrt{\mathscr{L}}$-Lipschitz and choose $\lambda'\leq(2\pi)^3\exp(-8\mathscr{L})$, $\mu_0=\mathcal{N}(0,I_{2d})$ and $\mu_0^N=\mathcal{N}(0,I_{2Nd})$, Assumptions~\ref{initialization and invariant} and~\ref{initialization and invariant for N-particle} will be satisfied.
\subsection{Density estimation via MMD minimization}
The maximum mean discrepancy between two probability measures $\rho$ and $\pi$ is defined as $\mathcal{M}(\rho\|\pi)=\iint [k(x,x)-2k(x,y)+k(y,y)]\mathrm{d}\rho(x)\mathrm{d}\pi(y),$
where $k$ is a positive definite kernel. Similar to Example 2 in~\citet{suzuki2023convergence}, we consider the non-parametric density estimation using the Gaussian mixture model, which can be parameterized as $p(\rho;z)\defeq\mathbb{E}_{x\sim\rho}[p(x;z)],$ where $p(x;z)$ is the Gaussian density function of $z$ with mean $x$ and a user-specified variance $\sigma^2$.
Given a set of samples $\{z_i\}_{i=1}^n$ from the target distribution $p^*$, our goal is to fit $p^*$ by minimizing the empirical version of $\mathcal{M}(p(\rho;z)\|p^*)$, defined as
\begin{equation*}
\begin{aligned}
    \hat{\mathcal{M}}(\rho)=\iiint p(x;z)p(x';z')k(z,z')\mathrm{d}z\mathrm{d}z'\mathrm{d}\rho(x)\mathrm{d}\rho(x')-2\int\left(\frac{1}{n}\sum_{i=1}^n\int p(x;z)k(z,z_i)\mathrm{d}z\right)\mathrm{d}\rho(x).
    \end{aligned}
\end{equation*}
We choose $F$ in objective \eqref{augmented mean-field opt} to be
\begin{equation}
\label{MMD}
    F(\mu^X)=\hat{\mathcal{M}}(\mu^X)+\frac{\lambda'}{2}\mathbb{E}_{x\sim\mu^X}\|x\|^2,
\end{equation}
where $\lambda'>0$. \citet{suzuki2023convergence} show that objective~\eqref{MMD} satisfies Assumptions \ref{convexity}, \ref{LSI of PGD} and \ref{bounded grad} by choosing a smooth and light-tailed kernel $k$, such as Gaussian radial basis function (RBF) kernel defined as $k(z,z')\defeq\exp(-\|z-z'\|^2/2\sigma'^2)$ for $\sigma'>0$. We also verify that objective~\eqref{MMD} also satisfies our Assumption \ref{smoothness} with the same choice of kernel. With Gaussian RBF kernel $k$ ($\sigma'=\sigma$), we provide verification in Appendix~\ref{Verification of Assumption} that objective~\eqref{MMD} satisfies Assumptions \ref{second moment bound}-\ref{initialization and invariant for N-particle} when $\lambda'\leq 3\pi/25$,
$\mu_0=\mathcal{N}(0,I_{2d})$ and $\mu_0^N=\mathcal{N}(0,I_{2Nd})$.
\subsection{Kernel Stein discrepancy minimization}
\label{Kernel Stein discrepancy minimization}
Kernel Stein discrepancy (KSD) minimization is a method for sampling from a target distribution $\rho_*$ if we have the access to the score function $s_{\rho_*}(x)=\nabla\log\rho_*(x)$ \citep{chwialkowski2016kernel,liu2016kernelized}. 
For a positive definite kernel $k$, the Stein kernel is defined as
\begin{equation*}
\begin{aligned}
    u_{\rho_*}(x,x')=s_{\rho_*}^{\mathsf{T}}(x)k(x,x')s_{\rho_*}(x')+s_{\rho_*}^{\mathsf{T}}(x)\nabla_{x'}k(x,x')+\nabla_x^{\mathsf{T}}k(x,x')s_{\rho_*}(x')+\textsf{tr}(\nabla_{x,x'}k(x,x')).
    \end{aligned}
\end{equation*}
The KSD between $\rho$ and $\rho_*$ is defined as
    $\textsf{KSD}(\rho)=\iint u_{\rho_*}(x,x')\mathrm{d}\rho(x)\mathrm{d}\rho(x').$
We choose $F$ in \eqref{augmented mean-field opt} to be
\begin{equation}
\label{KSD obj}
    F(\mu^X)=\textsf{KSD}(\mu^X)+\frac{\lambda'}{2}\mathbb{E}_{x\sim\mu^X}\|x\|^2,
\end{equation}
where $\lambda'>0$. \citet{suzuki2023convergence} show that objective \eqref{KSD obj} satisfies Assumptions \ref{convexity}, \ref{LSI of PGD} and \ref{bounded grad} by choosing light-tailed kernel and assume the score function satisfies 
\begin{equation}
\label{assump on score}
\max_{k=1,2,3}\{\|\nabla^{\otimes k}\log\rho_*(x)\|_{\textsf{op}}\}\leq \mathscr{L}(1+\|x\|). 
\end{equation}

More specifically, if $\mu_*\propto\exp(-V)$, the potential function $V$ should satisfies $$\max_{k=1,2,3}\{\|\nabla^{\otimes k}\nabla V(x)\|_{\textsf{op}}\}\leq \mathscr{L}(1+\|x\|),$$ which subsumes many distributions. 
Choosing the same kernel as in \citet{suzuki2023convergence}, we verify in Appendix \ref{Verification of Assumption} that \eqref{KSD obj} also satisfies Assumption \ref{smoothness} and satisfies our Assumptions \ref{second moment bound}-\ref{initialization and invariant for N-particle} with $\lambda'\leq\min\{(2\pi)^3\exp(-4\mathscr{L}),\mathscr{L},d\}$, 
$\mu_0=\mathcal{N}(0,I_{2d})$ and $\mu_0^N=\mathcal{N}(0,I_{2Nd})$.
\section{Numerical experiments}
We verify our theoretical findings by providing empirical support in this section. Our experiment\footnote{Code for our experiments can be found at \url{https://github.com/QiangFu09/NULA}.} is to approximate a Gaussian function $f(z)=\exp(-\|z-m\|^2/2d)$ for $z\in\mathbb{R}^d$ and unknown $m\in\mathbb{R}^d$ by a mean-field two-layer neural network with \textsf{tanh} activation.
 Consider the empirical risk minimization problem \eqref{empirical risk minimization} with quadratic loss function $\ell$, $d=10^3$, $\lambda'=10^{-4}$ and $n$ randomly generated data samples from $f(z)$ ($n=100$), described by
\begin{equation*}
    F(\rho)=\frac{1}{2n}\sum_{i=1}^n(h(\mu;a_i)-f(a_i))^2+\frac{\lambda'}{2}\mathbb{E}_{x\sim\rho}[\|x\|^2].
\end{equation*}
$F$ satisfy Assumptions~\ref{convexity}-\ref{initialization and invariant for N-particle} with the choice of $\ell,\,h$, and thus we apply Algorithm~\ref{UNLA} for minimizing the objective above. Note that the number of neurons in the first hidden layer is equivalent to the number of particles in \UNLA{}, and we choose $N\in\{256, 512, 1024, 2048\}$. 
\label{NE}
\begin{figure*}[ht!]
\centering
\subfigure{
\label{256}
\includegraphics[width=6.0cm,height = 4.8cm]{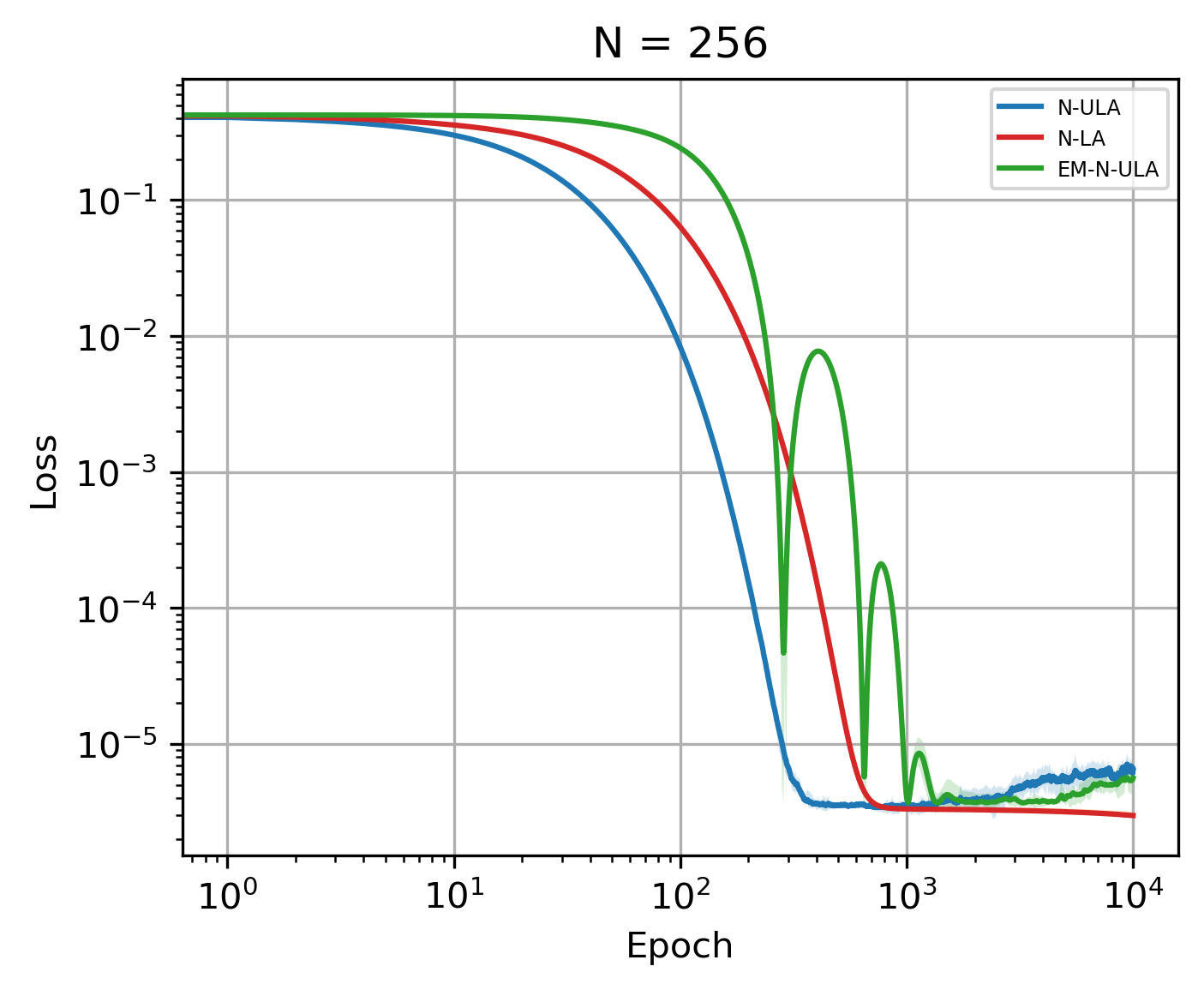}}\hspace{-1em}
\subfigure{
\label{512}
\includegraphics[width=6.0cm,height = 4.8cm]{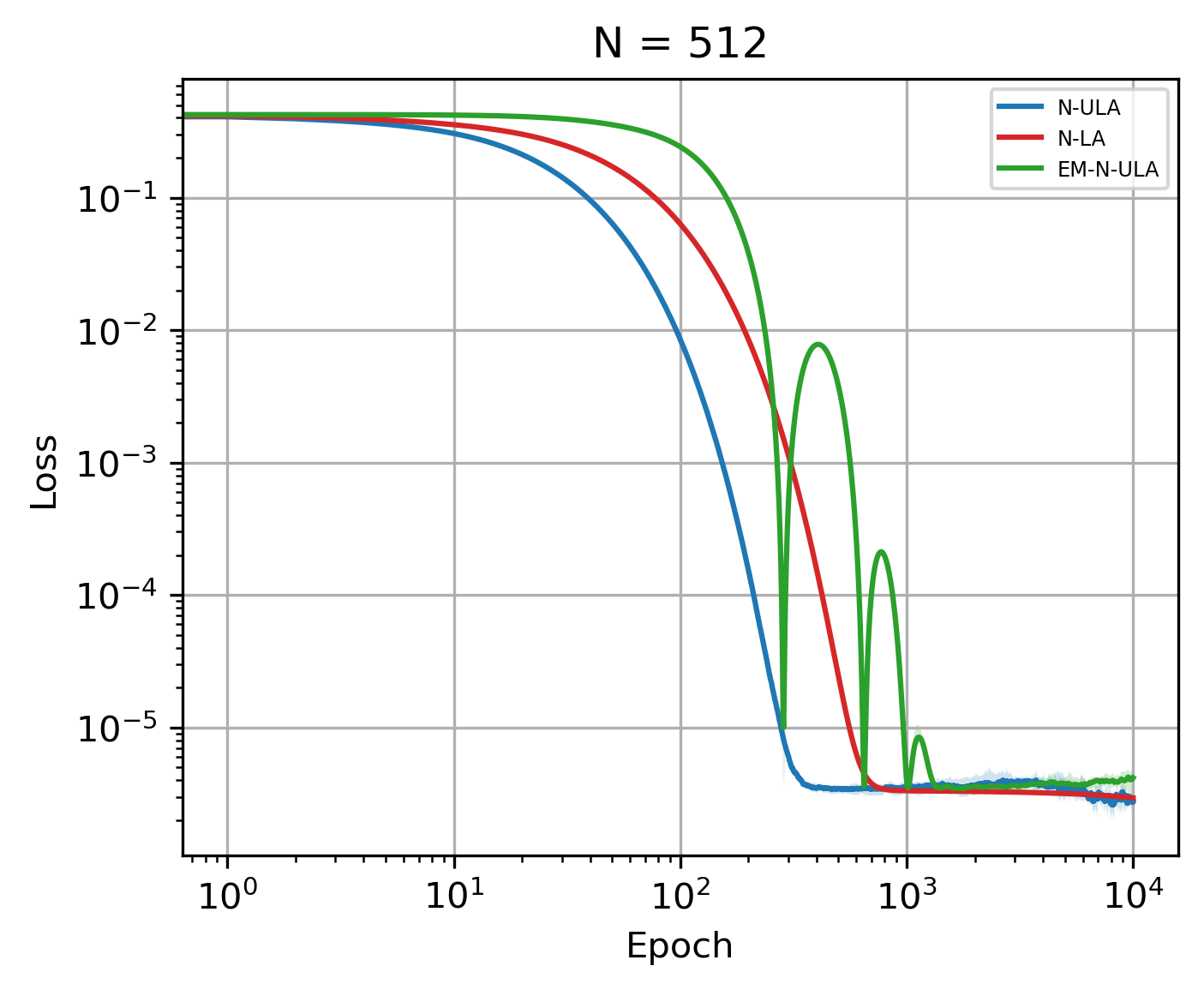}}\hspace{-1em}
\subfigure{
\label{1024}
\includegraphics[width=6.0cm,height = 4.8cm]{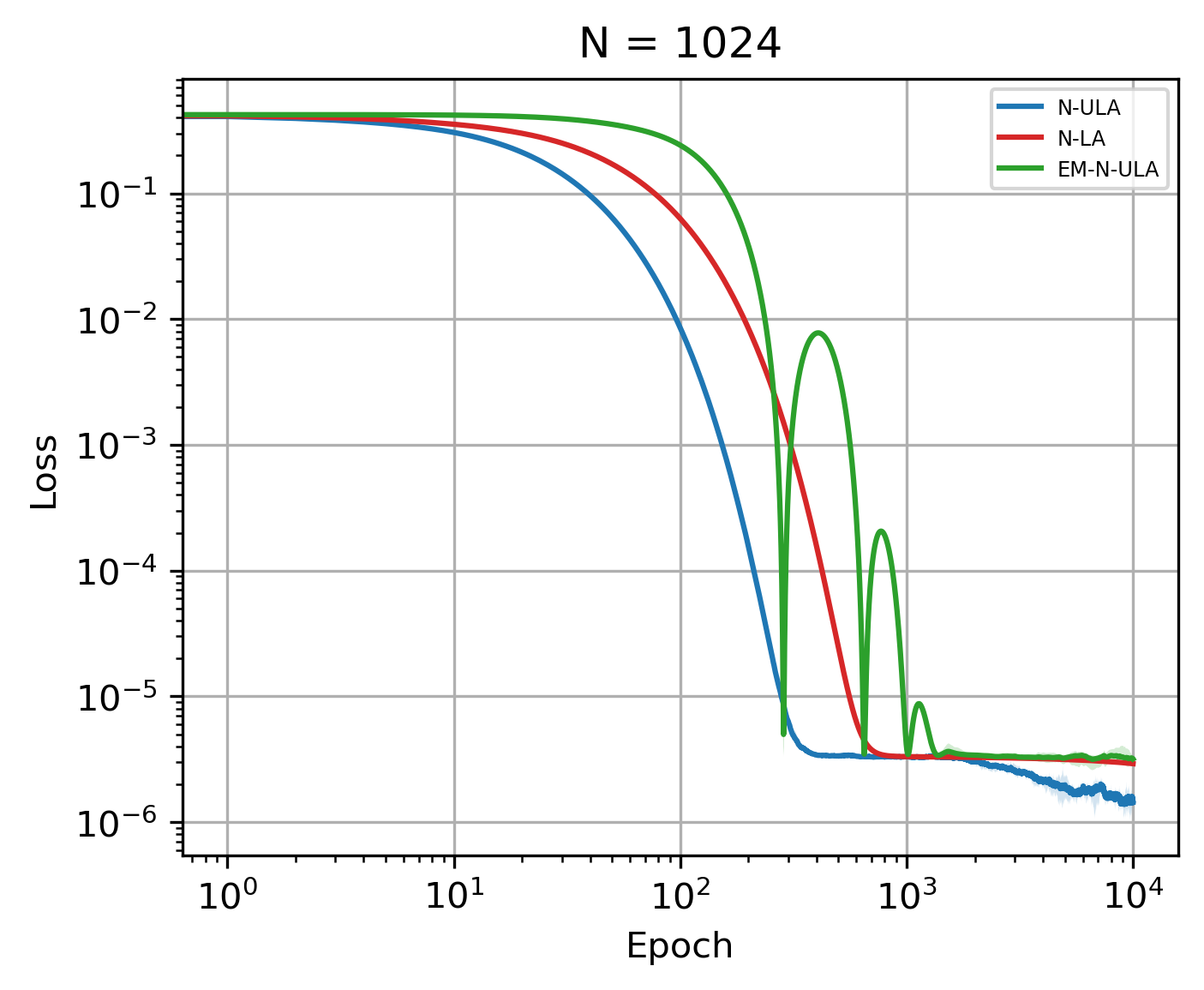}}\hspace{-1em}
\subfigure{
\label{2048}
\includegraphics[width=6.0cm,height = 4.8cm]{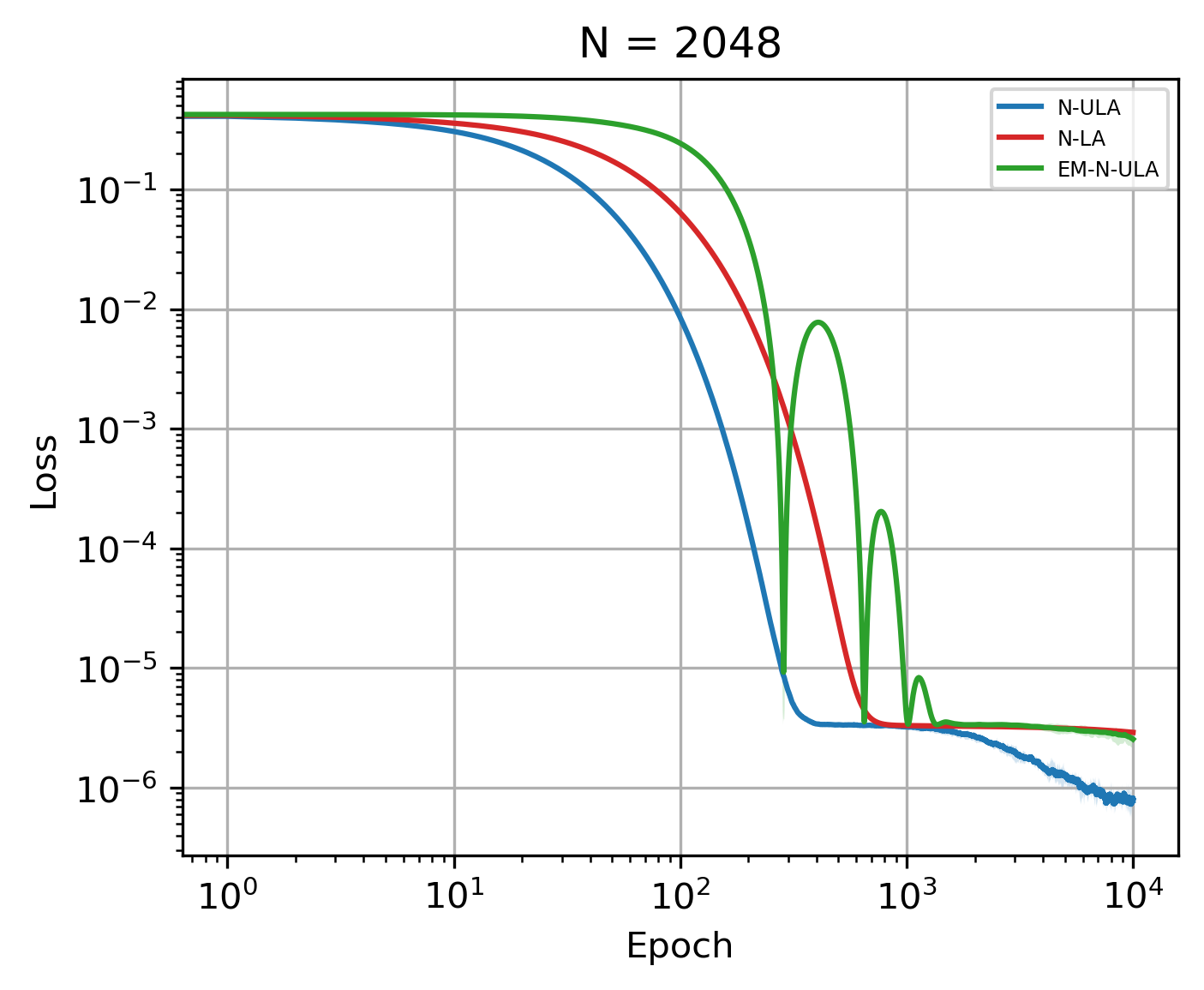}}
\caption{Evaluation on \UNLA{}, \NLA{} and \EMUNLA{} with different number of particles N where x-axis represents the training epochs and y-axis represents the value of $\frac{1}{2n}\sum_{i=1}^n(\frac{1}{N}\sum_{s=1}^Nh(x^s;a_i)-f(a_i))^2$. Our method often enjoys better performance in the high particle-approximation regime which is consistent with our theoretical findings.}
\label{outcomp}
\end{figure*}
The intrinsic derivative of $F$ for the $j$-th particle in our method is given
by 
$${D}_{\rho}F(\mu_{\textbf{x}},x^j)=\frac{1}{n}\sum_{i=1}^n(\frac{1}{N}\sum_{s=1}^Nh(x^s;a_i)-f(a_i))\nabla h(x^j;a_i)+\lambda' x^j.$$
Note that $\frac{1}{N}\sum_{s=1}^Nh(x_s;a)$ is in fact a two-layer neural network with $N$ neurons. Instead of fine-tuning $\gamma$ and stepsize $h$ in \UNLA{}, we directly fine-tune the value of $\varphi_0$, $\varphi_1$ and $\varphi_2$ in Algorithm~\ref{UNLA} by grid search. For simplifying the computation, we approximate $(\mathrm{B}_k^i)^x$ and $(\mathrm{B}_k^i)^v$ by $\eta\xi^x_k$ and $\eta\xi^v_k$ where $\xi^x_k$ and $\xi^v_k$ are independent standard Gaussian, and then we fine-tune the scaling scalar $\eta$. We compare our method (\UNLA{}) to~\ref{NLA} with stepsize $h_1$ and scaling scalar $\lambda_1$ given by,
\begin{equation*}
\label{NLA}
    x^j_{k+1}=x^j_{k} - h_1{D}_{\rho}F(\mu_{\textbf{x}_k},x^j_k) + \sqrt{2\lambda_1h_1}\xi^i_k
    \tag{{N-LA}}
\end{equation*}
for $i=1,...,N$, $k=1,...,K$ and $\xi^i_k\sim\mathcal{N}(0,I_d)$, and EM-UNLA (the EM discretization of the~\ref{Par-UMFLD} with stepsize $h_2$ and scaling scalar $\lambda_2$) whose update is given by \looseness=-1
\begin{equation*}
\label{EM-UNLA}
    \begin{aligned}
        x_{k+1}^j&=x_k^j + h_2v_k^j\\
        v_{k+1}^j&=(1-\gamma h_2)v_k^j - h_2{D}_{\rho}F(\mu_{\textbf{x}_k},x_k^j) + \sqrt{2\lambda_2h_2}\xi^i_k
    \end{aligned}
    \tag{{EM-N-ULA}}
\end{equation*}
for $i=1,...,N$, $k=1,...,K$ and $\xi^i_k\sim\mathcal{N}(0,I_d)$
in the same task. We choose $K=10^4$ and also fine-tune $h_1,\,\lambda_1$ and $h_2,\,\lambda_2$ to make fair comparison. We postpone our choice of hyperparameters to the Appendix \ref{Experimental Settings}. For each algorithm in our experiment, we initialize $x_0^j\sim\mathcal{N}(0,10^{-2}I_d)$ and $v_0^j\sim\mathcal{N}(0,10^{-2}I_d)$ for $j=1,...,N$, average 5 runs over random seeds in $\{0,1,2,3,4\}$ and generate the error bars by filling between the largest and the smallest value per iteration.
\cref{outcomp} illustrates the effectiveness of \UNLA{}. For each $N$, \UNLA{} enjoys faster convergence than \NLA{} and \ref{EM-UNLA}. Notably, there is an interesting phenomenon in our experiments. For $N=256$, both \UNLA{} and \EMUNLA{} suffer from convergence instability, which means that the loss will escape 
the stable convergence regime and slightly go up after many training epochs. However, \UNLA{} outperforms \NLA{} and \EMUNLA{} without convergence instability for $N=512,\,1024,\,2048$, and the loss of \UNLA{} even goes on decreasing when the losses of \NLA{} and \EMUNLA{} keep stable for $N=1024,\,2048$. This phenomenon matches our theory that we do not reduce the number of particles for \UNLA{} when compared with \NLA{} (see Table \ref{summary}). These observations suggest that our method performs better in the high particle-approximation regime. 
\cref{incomp} demonstrates this finding more transparently. The second row of \cref{outcomp} also suggests that EM discretization incurs a larger bias than EI.
\begin{figure}[t]
\centering
\subfigure{
\includegraphics[width=8.0cm,height = 6.8cm]{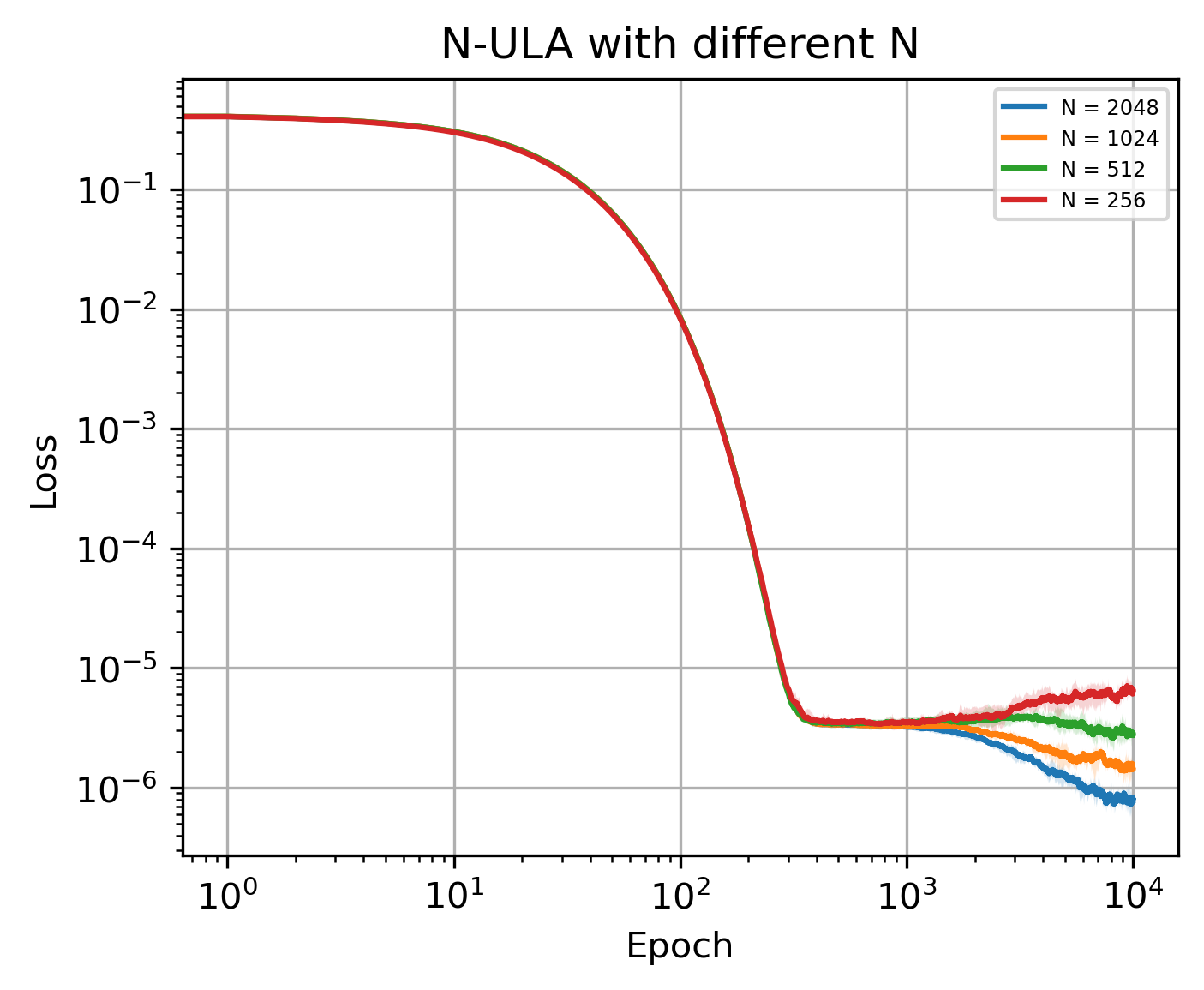}}
\caption{NULA with different number of particles}
\label{incomp}
\end{figure}
\section{Discussion}
To summarize, this paper (1) improves the convergence guarantees in~\citet{chen2023uniform} with a refined Lyapunov analysis (Theorems~\ref{convergence rate of the UMFLD} and~\ref{convergence of particle dynamics}); (2) discretizes
the~\ref{UMFLD} and~\ref{Par-UMFLD} with a scheme which results in smaller bias than the EM scheme; and (3) presents a novel discretization analysis of~\UMLD{} and~\UNLD. We also verify that these methods work when the objective is $W_1$ smooth. We now note several directions for future potential developments. First, it is unclear what the optimal choice of damping coefficient $\gamma$ is for \UMLD{} and \UNLD{}. Understanding whether the optimal choice has been found is of interest. Second, we obtain convergence rates for the \UMLA{} and \UNLA{} in TV distance, which are not consistent with the convergence rates of \UMLD{}, \UNLD{}, \MLA{} and \NLA{} in energy gap (e.g. $\mathcal{F}(\mu_t)-\mathcal{F}(\mu_*)$).
We hope to establish our results in the energy gap or KL divergence in the future. What's more, our techniques on uniformly bounding the iterates of~\UMLD{} and~\UNLD{} combined with Assumptions~\ref{second moment bound}-\ref{initialization and invariant for N-particle} generates an additional $\mathscr{C}_{\textsf{LSI}}$ after using Talagrand's inequality, which leads to non-improvement of $\mathscr{C}_{\textsf{LSI}}$ for \UMLA{} and \UNLA{}. We hope to explore whether it is possible to weaken those assumptions and refine the analysis of uniformly bounding the iterates to improve the dependence of $\mathscr{C}_{\textsf{LSI}}$ in the mixing time and number of particles of \UMLA{} and \UNLA{}.
\bibliography{ref}
\newpage
\appendix
\section{Supplementary background}
\subsection{Mean-field Langevin dynamics}
\label{MLD background}
The law $(\rho_t)_{t\geq 0}$ of \ref{MFLD} solves the following non-linear Fokker-Planck equation:
\begin{equation}
\begin{aligned}
    \frac{\partial\rho_t}{\partial t}&=\nabla\cdot(\rho_tD_{\rho}F(\rho_t,\cdot))+\Delta\rho_t=\nabla\cdot\left(\rho_t\nabla\log\frac{\rho_t}{\hat{\rho}_t}\right),
    \end{aligned}
\end{equation}
where $\hat{\rho}_t(x)\propto\exp\left(-\frac{\delta F}{\delta\rho}(\rho_t,x)\right)$.
Let $E(\rho)\defeq F(\rho)+\text{Ent}(\rho)$. The optimality condition of the EMO problem is
\begin{equation}
\label{opt condition for MLD}
    \frac{\delta E}{\delta\rho}=\frac{\delta F}{\delta\rho}+\log\rho+c=0,
\end{equation}
where $c$ is a constant. Given the condition \eqref{opt condition for MLD}, the solution of EMO problem $\rho_*$ satisfies
$\rho_*(x)=\hat{\rho}_*(x)\propto\exp\left(-\frac{\delta F}{\delta\rho}(\rho_*,x)\right),$
which solves $\nabla\cdot\left(\rho_t\nabla\log\frac{\rho_t}{\hat{\rho}_t}\right)=0$. Thus we conclude that \ref{MFLD} converges to the minimizer of EMO objective.

\subsection{N-particle Langevin dynamics}
The space-discretization of \ref{MFLD} is referred to as the \emph{N-particle Langevin dynamics},
\begin{equation}
    \label{NLD}
    \mathrm{d}x_t^i=-D_{\rho}F(\rho_{\textbf{x}_t},x_t^i)\mathrm{d}t+\sqrt{2}\mathrm{dB}_t,
    \tag{{N-LD}}
\end{equation}
where $\rho_{\textbf{x}_t}=\frac{1}{N}\sum_{i=1}^N\delta_{x^i_t}$. Let $\rho_t^i$ denotes the law of $x_t^i$ and $\rho_t^N$ denotes the joint law of $\textbf{x}_t\defeq(x_t^1,...,x_t^N)$. The joint law $(\rho_t^N)_{t\geq 0}$ of \ref{NLD} solves the following linear Fokker-Planck equation:
\begin{equation}
\label{f-p of NLD}
    \frac{\partial\rho_t^N}{\partial t}=\sum_{i=1}^N\nabla_i\cdot\left(\rho_t^ND_{\rho}F(\rho_{\textbf{x}_t},x_t^i)\right)+\Delta_i\rho_t^N=\sum_{i=1}^N\nabla_i\cdot\left(\rho_t^N\nabla_i\log\frac{\rho_t^N}{\rho_*^N}\right),
\end{equation}
where $\nabla_i\defeq\nabla_{x^i}$, $\Delta_i\defeq\Delta_{x^i}$ and $\rho_*^N(\textbf{x})\propto\exp(-NF(\rho_{\textbf{x}}))$. Define the \emph{N-particle free energy}:
\begin{equation}
\label{N-par free energy}
    E^N(\rho^N)=N\int F(\rho_{\textbf{x}})\rho^N(\mathrm{d}x)+\text{Ent}(\rho^N).
\end{equation}
The optimality condition of minimizing the N-particle free energy \eqref{N-par free energy} over $\mathcal{P}_2(\mathbb{R}^{Nd})$ is 
\begin{equation}
\label{opt condition for NLD}
    \frac{\delta E^N}{\delta\rho^N}=NF(\rho_{\textbf{x}})+\log\rho^N+c=0,
\end{equation}
where $c$ is a constant. Given the optimality condition \eqref{opt condition for NLD}, the minimizer of \eqref{N-par free energy} satisfies $\rho_*^N(x)\propto\exp(-NF(\rho_{\textbf{x}}))$, which is exactly the limiting distribution of \ref{NLD} according to \eqref{f-p of NLD}. Thus we conclude that \ref{NLD} converges to the minimizer of \eqref{N-par free energy}. 

\subsection{Mean-field underdamped Langevin dynamics}
\label{MULD background}
The law $(\mu_t)_{t\geq 0}$ of \ref{UMFLD} solves the following non-linear Fokker-Planck equation:
\begin{equation}
\label{f-p MULD}
    \begin{aligned}
        \frac{\partial\mu_t}{\partial t}&=\gamma\Delta_v\mu_t+\gamma\nabla_v\cdot(\mu_t v_t)-v\cdot\nabla_x\mu_t+D_{\rho}F(\mu_t^x,x_t)\cdot\nabla_v\mu_t\\
    &=\nabla\cdot\left(\mu_tJ_{\gamma}\nabla\log\frac{\mu_t}{\hat{\mu}_t}\right),
    \end{aligned}
\end{equation}
where $J_{\gamma}=\left(\begin{array}{cc}
    0 & 1 \\
    -1 & \gamma
\end{array}\right)$, $\nabla\defeq(\nabla_x,\nabla_v)^{\mathsf{T}}$ and $\hat{\mu}_t(x,v)\propto\exp\left(-\frac{\delta F}{\delta\rho}(\mu_t^X,x)-\frac{1}{2}\|v\|^2\right)$. The optimality condition of the augmented EMO problem is
\begin{equation}
\label{opt condition for MULD}
    \frac{\delta\mathcal{F}}{\delta\mu}=\frac{\delta F}{\delta\mu}+\log\mu+\frac{1}{2}\|v\|^2+c=0,
\end{equation}
where $\mathcal{F}$ is defined in \eqref{eq:Lyap1} and $c$ is a constant. Note that $\frac{\delta F(\mu^X)}{\delta\mu}=\frac{\delta F(\mu^X)}{\delta\rho}.$ Given the optimality condition \eqref{opt condition for MULD}, the solution of the augmented EMO problem satisfies $\mu_*(x,v)=\hat{\mu}_*(x,v)\propto\exp\left(-\frac{\delta F}{\delta\rho}(\mu_*^X,x)-\frac{1}{2}\|v\|^2\right)$, which solves $\nabla\cdot\left(\mu_tJ_{\gamma}\nabla\log\frac{\mu_t}{\hat{\mu}_t}\right)=0$. Thus we conclude that \ref{UMFLD} converges to the minimizer of the augmented EMO objective.
\subsection{N-particle underdamped Langevin dynamics}
\label{NULD background}
The law $(\mu_t^N)_{t\geq 0}$ of \ref{Par-UMFLD} solves the following linear Fokker-Planck equation:
\begin{equation}
\label{f-p of NULD}
    \begin{aligned}
        \frac{\partial\mu_t^N}{\partial t}&=\sum_{i=1}^N\left(\gamma\Delta_{v^i}\mu_t^N+\gamma\nabla_{v^i}\cdot(\mu_t^Nv_t^i)-v^i_t\cdot\nabla_{x^i}\mu_t^N+D_{\rho}F(\mu_{\textbf{x}_t},x_t^i)\cdot\nabla_{v^i}\mu_t^N\right)\\
        &=\sum_{i=1}^N\nabla_i\cdot\left(\mu^N_tJ_{\gamma}\nabla_i\log\frac{\mu^N_t}{\hat{\mu}^N_*}\right),
    \end{aligned}
\end{equation}
where $J_{\gamma}=\left(\begin{array}{cc}
    0 & 1 \\
    -1 & \gamma
\end{array}\right)$, $\nabla_i\defeq(\nabla_{x^i},\nabla_{v^i})^{\mathsf{T}}$ and $\hat{\mu}^N_*(x,v)\propto\exp\left(-NF(\mu_{\textbf{x}})-\frac{1}{2}\|v\|^2\right)$. Define the \emph{N-particle free energy}:
\begin{equation}
\label{N-par free energy 2}
    \mathcal{F}^N(\mu^N)=\int NF(\mu_{\textbf{x}})+\frac{1}{2}\|\textbf{v}\|^2 \mu^N(\mathrm{d}\textbf{x}\mathrm{d}\textbf{v})+\text{Ent}(\mu^N).
\end{equation}
The optimality condition of minimizing the N-particle free energy \eqref{N-par free energy 2} over $\mathcal{P}_2(\mathbb{R}^{2Nd})$ is 
\begin{equation}
\label{opt condition for NULD}
    \frac{\delta \mathcal{F}^N}{\delta\mu^N}=NF(\mu_{\textbf{x}})+\frac{1}{2}\|\textbf{v}\|^2+\log\mu^N+c=0,
\end{equation}
where $c$ is a constant. Given the optimality condition \eqref{opt condition for NULD}, the minimizer of \eqref{N-par free energy 2} satisfies $\mu_*^N(x)\propto\exp(-NF(\mu_{\textbf{x}})-\frac{1}{2}\|\textbf{v}\|^2)$, which is exactly the limiting distribution of \ref{Par-UMFLD} according to \eqref{f-p of NULD}. Thus we conclude that \ref{Par-UMFLD} converges to the minimizer of \eqref{N-par free energy 2}. 
\section{Helpful lemmas}
\begin{lemma}
\label{solution to NULA}
    The solution $(x_t,v_t)$ to the discrete-time process~\eqref{dis-UMFLD} for $t\in[kh,(k+1)h]$ is
    \begin{equation}
    \label{update of UMLA}
        \begin{aligned}
            x_{t}&=x_{kh}+\frac{1-e^{-\gamma(t-kh)}}{\gamma}v_{kh}-\frac{\gamma h-(1-e^{-\gamma(t-kh)})}{\gamma^2}D_{\rho}F(\mu_{kh}^X,x_{kh})+\mathrm{B}_{kh}^x,\\
            v_t&=e^{-\gamma(t-kh)}v_{kh}-\frac{1-e^{-\gamma(t-kh)}}{\gamma}D_{\rho}F(\mu_{kh}^X,x_{kh})+\mathrm{B}_{kh}^v,
        \end{aligned}
    \end{equation}
    where $(\mathrm{B}_{kh}^x,\mathrm{B}_{kh}^v)\in\mathbb{R}^{2d}$ is independent of $k$ and has the joint distribution
    \begin{equation*}
        \left[\begin{array}{c}
             \mathrm{B}_{kh}^x\\
             \mathrm{B}_{kh}^v 
        \end{array}\right]\sim\mathcal{N}\left(0, \left[\begin{array}{cc}
            \frac{2}{\gamma}\left(h-\frac{2(1-e^{-\gamma(t-kh)})}{\gamma}+\frac{1-e^{-2\gamma(t-kh)}}{2\gamma}\right) &  \frac{1}{\gamma}\left(1-2e^{-\gamma(t-kh)}+e^{-2\gamma(t-kh)}\right)\\
             * & 1-e^{-2\gamma(t-kh)}
        \end{array}\right]\otimes I_d\right)
    \end{equation*}
    The solution $(x_t^i,v_t^i)$ to the discrete-time process~\eqref{dis-par-UMFLD} for $i=1,...,N$ and $t\in[kh,(k+1)h]$ is
    \begin{equation}
    \label{update of UNLA}
        \begin{aligned}
            x^i_{t}&=x^i_{kh}+\frac{1-e^{-\gamma(t-kh)}}{\gamma}v^i_{kh}-\frac{\gamma h-(1-e^{-\gamma(t-kh)})}{\gamma^2}D_{\rho}F(\mu_{\textbf{x}_{kh}},x^i_{kh})+(\mathrm{B}^i_{kh})^x,\\
            v^i_t&=e^{-\gamma(t-kh)}v^i_{kh}-\frac{1-e^{-\gamma(t-kh)}}{\gamma}D_{\rho}F(\mu_{\textbf{x}_{kh}},x^i_{kh})+(\mathrm{B}^i_{kh})^v.
        \end{aligned}
    \end{equation}
    where $((\mathrm{B}^i_{kh})^x,(\mathrm{B}^i_{kh})^v)\in\mathbb{R}^{2d}$ is independent of $i,\,k$ and has the joint distribution
    \begin{equation*}
        \left[\begin{array}{c}
             (\mathrm{B}_{kh}^i)^x\\
             (\mathrm{B}_{kh}^i)^v 
        \end{array}\right]\sim\mathcal{N}\left(0, \left[\begin{array}{cc}
            \frac{2}{\gamma}\left(h-\frac{2(1-e^{-\gamma(t-kh)})}{\gamma}+\frac{1-e^{-2\gamma(t-kh)}}{2\gamma}\right)I_d &  \frac{1}{\gamma}\left(1-2e^{-\gamma(t-kh)}+e^{-2\gamma(t-kh)}\right)I_d\\
             \frac{1}{\gamma}\left(1-2e^{-\gamma(t-kh)}+e^{-2\gamma(t-kh)}\right)I_d & 1-e^{-2\gamma(t-kh)}I_d
        \end{array}\right]\right)
    \end{equation*}
\end{lemma}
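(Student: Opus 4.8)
The plan is to solve the linear SDE system \eqref{dis-par-UMFLD} explicitly on each interval $[kh,(k+1)h]$ by the method of integrating factors, then identify the mean and covariance of the resulting Gaussian increment. Since on a fixed interval $[kh,(k+1)h]$ the drift term $D_{\rho}F(\mu_{\textbf{x}_{kh}},x^i_{kh})$ is a constant (frozen at the left endpoint), the $v^i$-equation is a scalar linear SDE with constant forcing, and the $x^i$-equation is obtained by a further time integration. I would work component-wise: because the noise $\mathrm{B}^i_t$ is a $d$-dimensional Brownian motion with independent coordinates, the whole $2d$-dimensional system decouples into $d$ identical two-dimensional systems, so it suffices to do the $d=1$ computation and then tensor with $I_d$. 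I will also only treat one particle index $i$, since the particles are coupled only through the frozen term $\mu_{\textbf{x}_{kh}}$ which is deterministic given the state at time $kh$; the noises $(\mathrm{B}^i_t)_i$ are independent, so the claimed independence across $i$ is immediate.

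\textbf{Step 1 (solve for $v$).} Fix the interval and write $s = t-kh$, $b \defeq D_{\rho}F(\mu_{\textbf{x}_{kh}},x^i_{kh})$. The equation $\mathrm{d}v_s = -\gamma v_s\,\mathrm{d}s - b\,\mathrm{d}s + \sqrt{2\gamma}\,\mathrm{d}\mathrm{B}_s$ has solution, via the integrating factor $e^{\gamma s}$,
\[
v_s = e^{-\gamma s} v_0 - \frac{1-e^{-\gamma s}}{\gamma} b + \sqrt{2\gamma}\int_0^s e^{-\gamma(s-r)}\,\mathrm{d}\mathrm{B}_r,
\]
which matches the claimed formula for $v^i_t$ with $v_0 = v^i_{kh}$. \textbf{Step 2 (solve for $x$).} Integrate $x_s = x_0 + \int_0^s v_r\,\mathrm{d}r$ term by term: the $v_0$ term gives $\frac{1-e^{-\gamma s}}{\gamma}v_0$, the drift term gives $-\frac{\gamma s-(1-e^{-\gamma s})}{\gamma^2}b$, and the stochastic term, after applying the stochastic Fubini theorem, gives $\sqrt{2\gamma}\int_0^s \frac{1-e^{-\gamma(s-r)}}{\gamma}\,\mathrm{d}\mathrm{B}_r$. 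Setting $s=t-kh$ and $x_0 = x^i_{kh}$ reproduces the stated formula, with $(\mathrm{B}^i_{kh})^x \defeq \sqrt{2\gamma}\int_0^{t-kh}\frac{1-e^{-\gamma(t-kh-r)}}{\gamma}\,\mathrm{d}\mathrm{B}_r$ and $(\mathrm{B}^i_{kh})^v \defeq \sqrt{2\gamma}\int_0^{t-kh} e^{-\gamma(t-kh-r)}\,\mathrm{d}\mathrm{B}_r$.

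\textbf{Step 3 (covariance).} The pair $((\mathrm{B}^i_{kh})^x,(\mathrm{B}^i_{kh})^v)$ is a (centered) Gaussian vector since it is a linear functional of Brownian motion; by the Itô isometry its covariance entries are
\[
\mathrm{Var}((\mathrm{B}^i_{kh})^v) = 2\gamma\int_0^s e^{-2\gamma(s-r)}\,\mathrm{d}r = 1-e^{-2\gamma s},
\]
\[
\mathrm{Cov} = 2\gamma\int_0^s \frac{1-e^{-\gamma(s-r)}}{\gamma}\,e^{-\gamma(s-r)}\,\mathrm{d}r = \tfrac{1}{\gamma}\bigl(1-2e^{-\gamma s}+e^{-2\gamma s}\bigr),
\]
\[
\mathrm{Var}((\mathrm{B}^i_{kh})^x) = 2\gamma\int_0^s \Bigl(\tfrac{1-e^{-\gamma(s-r)}}{\gamma}\Bigr)^2\,\mathrm{d}r = \tfrac{2}{\gamma}\Bigl(s - \tfrac{2(1-e^{-\gamma s})}{\gamma} + \tfrac{1-e^{-2\gamma s}}{2\gamma}\Bigr),
\]
where in the last line one expands the square, integrates the three resulting exponentials, and (using $s=h$ at the right endpoint for the $\mathrm{d}\textbf{x}$ entry, or keeping $s=t-kh$ in general) collects terms. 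Tensoring with $I_d$ gives the stated covariance matrix, and independence across $i$ and $k$ follows from the independence of the Brownian increments on disjoint intervals and of distinct $\mathrm{B}^i$. The mean-field ($N=1$, no interaction) case is identical and yields the first display. \textbf{The only mildly delicate point} is the bookkeeping in evaluating $\mathrm{Var}((\mathrm{B}^i_{kh})^x)$ — expanding $(1-e^{-\gamma(s-r)})^2 = 1 - 2e^{-\gamma(s-r)} + e^{-2\gamma(s-r)}$ and integrating correctly — but this is a routine computation with no conceptual obstacle; everything else is a direct application of the variation-of-constants formula and the Itô isometry.
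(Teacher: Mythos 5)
Your proposal is correct and is essentially the paper's own proof: the paper simply cites Lemmas 10 and 11 of \citet{cheng2018underdamped}, which carry out exactly this variation-of-constants solution followed by the It\^o-isometry covariance computation that you have written out explicitly. As a minor bonus, your version keeps $s=t-kh$ consistently throughout, which quietly corrects the paper's mixing of $h$ and $t-kh$ in the drift coefficient and the first covariance entry.
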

\begin{proof}
The proof technique is similar to the proof of Lemmas 10 and 11 proposed in \citet{cheng2018underdamped}.
\end{proof}
Choosing $t=(k+1)h$ for \eqref{update of UNLA} generates the update parameters of Algorithm \ref{UNLA}:
\begin{equation}
\label{choice of varphi}
    \varphi_0=\frac{1-e^{-\gamma h}}{\gamma},\ \varphi_1=\frac{\gamma h-(1-e^{-\gamma h})}{\gamma^2},\ \varphi_2=e^{-\gamma h};
\end{equation}
\begin{equation}
\label{choice of Sigma}
    \Sigma_{11}=\frac{2}{\gamma}\left(h-\frac{2(1-e^{-\gamma h})}{\gamma}+\frac{1-e^{-2\gamma h}}{2\gamma}\right), \ \Sigma_{12}=\frac{1}{\gamma}\left(1-2e^{-\gamma h}+e^{-2\gamma h}\right), \ \Sigma_{22}=1-e^{-2\gamma h}.
\end{equation}
\begin{lemma}
\label{W1 smoothness}
    Suppose $D_{\rho}F:\mathcal{P}_2(\mathbb{R}^d)\times\mathbb{R}^d\rightarrow\mathbb{R}^d$ admits a continuous first variation ${\delta D_{\rho}F}:\mathcal{P}_2(\mathbb{R}^d)\times\mathbb{R}^d\rightarrow\mathbb{R}^d$. Then, $D_{\rho}F$ is $\mathscr{L}$-Lipschitz with respect to $W_1$ distance satisfying
    \begin{equation}
    \label{L-smoothness w1}
        \|D_{\rho}F(\rho_1,x)-D_{\rho}F(\rho_2,x)\|\leq\mathscr{L}W_1(\rho_1,\rho_2)
    \end{equation}
    with $
\mathscr{L}\defeq\sup_{\rho'\in\mathcal{P}_2(\mathbb{R}^d)}\sup_{x,x'\in\mathbb{R}^d}\left\|{D^2_{\rho}F}(\rho',x,x')\right\|_{\textsf{\em op}}
    $
\end{lemma}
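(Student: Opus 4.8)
The statement to prove is Lemma \ref{W1 smoothness}: if $D_\rho F$ admits a continuous first variation $\delta D_\rho F$, then $D_\rho F$ is $\mathscr{L}$-Lipschitz in $W_1$ with $\mathscr{L} = \sup_{\rho'}\sup_{x,x'}\|D^2_\rho F(\rho',x,x')\|_{\textsf{op}}$.

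\textbf{Proof proposal.}

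The plan is to exploit the fundamental theorem of calculus along a linear interpolation path in the space of measures, combined with the Kantorovich--Rubinstein dual characterization of $W_1$. First I would fix $\rho_1,\rho_2 \in \mathcal{P}_2(\mathbb{R}^d)$ and $x \in \mathbb{R}^d$, and set $\rho_t \defeq (1-t)\rho_1 + t\rho_2$ for $t \in [0,1]$, which stays in $\mathcal{P}_2(\mathbb{R}^d)$ by convexity of the second-moment constraint. By the definition of the first variation $\delta D_\rho F$ of the map $\rho \mapsto D_\rho F(\rho, x)$, one has
\begin{equation*}
\frac{\mathrm{d}}{\mathrm{d}t} D_\rho F(\rho_t, x) = \int \frac{\delta D_\rho F}{\delta\rho}(\rho_t, x, x')\, (\rho_2 - \rho_1)(\mathrm{d}x'),
\end{equation*}
so that integrating from $0$ to $1$ gives
\begin{equation*}
D_\rho F(\rho_2, x) - D_\rho F(\rho_1, x) = \int_0^1 \!\! \int \frac{\delta D_\rho F}{\delta\rho}(\rho_t, x, x')\,(\rho_2-\rho_1)(\mathrm{d}x')\,\mathrm{d}t.
\end{equation*}
Here I am identifying $\frac{\delta D_\rho F}{\delta\rho}$ with $D^2_\rho F$ (the second intrinsic derivative), matching the notation in the statement; I should be a little careful that $D^2_\rho F(\rho',x,x')$ denotes $\nabla_{x'}$ of the first variation, or rather that the relevant kernel is exactly what appears in the supremum defining $\mathscr{L}$ — the paper's convention in Lemma~\ref{W1 smoothness} is that $\mathscr{L}$ bounds $\|D^2_\rho F(\rho',x,x')\|_{\textsf{op}}$.

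Next I would pass to a coupling argument to bring in $W_1$. Since $\int (\rho_2-\rho_1)(\mathrm{d}x') = 0$, for any coupling $\Pi$ of $(\rho_1,\rho_2)$ we can rewrite
\begin{equation*}
\int \frac{\delta D_\rho F}{\delta\rho}(\rho_t,x,x')\,(\rho_2-\rho_1)(\mathrm{d}x') = \int \Big(\frac{\delta D_\rho F}{\delta\rho}(\rho_t,x,y) - \frac{\delta D_\rho F}{\delta\rho}(\rho_t,x,z)\Big)\,\Pi(\mathrm{d}z\,\mathrm{d}y),
\end{equation*}
and then bound the integrand by $\|y - z\|$ times the operator-norm bound on $D^2_\rho F$, i.e.\ via the mean value inequality along the segment from $z$ to $y$: $\|\frac{\delta D_\rho F}{\delta\rho}(\rho_t,x,y) - \frac{\delta D_\rho F}{\delta\rho}(\rho_t,x,z)\| \le \mathscr{L}\|y-z\|$. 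Taking norms, using Jensen/triangle inequality to move the norm inside the integrals over $t$ and $\Pi$, gives $\|D_\rho F(\rho_2,x) - D_\rho F(\rho_1,x)\| \le \mathscr{L}\int \|y-z\|\,\Pi(\mathrm{d}z\,\mathrm{d}y)$. Finally, taking the infimum over all couplings $\Pi$ yields the right-hand side $\mathscr{L}\,W_1(\rho_1,\rho_2)$, which is exactly \eqref{L-smoothness w1}.

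\textbf{Main obstacle.} The only genuinely delicate point is the differentiability/regularity bookkeeping: justifying that $t \mapsto D_\rho F(\rho_t, x)$ is differentiable with the stated derivative (this is where the \emph{continuity} of the first variation is used, to make the fundamental theorem of calculus applicable and to interchange the $t$-integral with the norm) and that the kernel $x' \mapsto \frac{\delta D_\rho F}{\delta\rho}(\rho_t, x, x')$ is Lipschitz in $x'$ with constant $\mathscr{L}$ — the latter is really a restatement/consequence of how $D^2_\rho F$ is defined as $\nabla_{x'}$ of the first variation, so that $\|D^2_\rho F(\rho',x,x')\|_{\textsf{op}} \le \mathscr{L}$ is precisely the differential bound needed for the mean value step. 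Everything else is routine. I would present the argument cleanly in the order: (i) interpolation path and FTC, (ii) zero-mean trick plus coupling, (iii) operator-norm bound and infimum over couplings.
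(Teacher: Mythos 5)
Your proposal is correct and follows essentially the same route as the paper: the fundamental theorem of calculus along the linear interpolation $(1-t)\rho_1+t\rho_2$, followed by bounding the resulting linear functional of $\rho_2-\rho_1$ by $\mathscr{L}\,W_1(\rho_1,\rho_2)$ using the $\mathscr{L}$-Lipschitzness of the kernel $x'\mapsto\frac{\delta}{\delta\rho}D_{\rho}F(\rho_t,x,x')$ implied by the operator-norm bound on $D^2_{\rho}F$. The only cosmetic difference is that you use the primal (coupling) side of Kantorovich--Rubinstein duality via the zero-mean trick, whereas the paper invokes the dual (supremum over Lipschitz functions) side directly; these are interchangeable here.
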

\begin{proof}
    By the definition of functional derivative, we have
    \begin{equation}
    \label{proof of lemma2}
        \|D_{\rho}F(\rho_1,x)-D_{\rho}F(\rho_2,x)\|\leq\int_0^1\left\|\int\frac{\delta}{\delta\rho} D_{\rho}F((1-t)\rho_1+t\rho_2,x,x')(\rho_1-\rho_2)\mathrm{d}x'\right\|\mathrm{d}t
    \end{equation}
By Kantorovich duality and the definition of $\mathscr{L}$, which is the Liptschiz constant of $\frac{\delta}{\delta\rho} D_{\rho}F(\cdot,x)$, we obtain
    \begin{equation*}
        \left\|\int\frac{\delta}{\delta\rho} D_{\rho}F((1-t)\rho_1+t\rho_2,x,x')(\rho_1-\rho_2)\mathrm{d}x'\right\|\leq \mathscr{L}W_1(\rho_1,\rho_2).
    \end{equation*}
    Combining with \eqref{proof of lemma2}, we complete the proof.
\end{proof}
\begin{lemma}[Mean-field Entropy Sandwich, \citealt{chen2023uniform}, Lemma 4.2]
\label{Mean-field Sandwich}
    Assume $F$ satisfies Assumptions~\ref{convexity}-\ref{LSI of PGD}. Then for every $\mu\in\mathcal{P}_2(\mathbb{R}^{2d})$ we have
    \begin{equation}
        \textsf{\em KL}(\mu\|\mu_*)\leq\mathcal{F}(\mu)-\mathcal{F}(\mu_*)\leq\textsf{\em KL}(\mu\|\hat{\mu})\leq\left(1+\frac{\mathscr{L}}{\mathscr{C}_{\textsf{\em LSI}}}+\frac{\mathscr{L}^2}{2\mathscr{C}_{\textsf{\em LSI}}^2}\right)\textsf{\em KL}(\mu\|\mu_*).
    \end{equation}
\end{lemma}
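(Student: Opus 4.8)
This ``entropy sandwich'' is the underdamped counterpart of the corresponding estimate for the mean-field Langevin dynamics (cf.\ \citealt{nitanda2022convex,chizat2022mean}), and the plan is to reduce it to that setting by exploiting the product structure in the velocity variable. Both extremal measures factor as $\mu_*=\mu_*^X\otimes\mathcal{N}(0,I_d)$ and $\hat\mu=\hat\mu^X\otimes\mathcal{N}(0,I_d)$, with $\mu_*^X\propto\exp(-\tfrac{\delta F}{\delta\rho}(\mu_*^X,\cdot))$, $\hat\mu^X\propto\exp(-\tfrac{\delta F}{\delta\rho}(\mu^X,\cdot))$, and $\mu_*=\hat\mu_*$; consequently the $\tfrac12\|v\|^2$-terms and the Gaussian factors cancel in anything built from $\log(\mu_*/\hat\mu)$. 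First I would substitute these Gibbs formulas into $\text{Ent}(\mu_*)$, $\text{Ent}(\hat\mu)$, $\textsf{KL}(\mu\|\mu_*)$ and $\textsf{KL}(\mu\|\hat\mu)$, reducing all of them to closed forms in $F$, the potential $\tfrac{\delta F}{\delta\rho}$, and the normalizing constants $Z_*,\hat Z$ of $\mu_*,\hat\mu$.

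Granting these identities, the left inequality becomes, after cancellation, $\mathcal{F}(\mu)-\mathcal{F}(\mu_*)-\textsf{KL}(\mu\|\mu_*)=F(\mu^X)-F(\mu_*^X)-\int\tfrac{\delta F}{\delta\rho}(\mu_*^X,x)\,(\mu^X-\mu_*^X)(\mathrm dx)$, which is $\ge0$ by the first-order consequence of linear convexity of $F$ (Assumption~\ref{convexity}) at $\mu_*^X$. For the middle inequality the same bookkeeping shows $\textsf{KL}(\mu\|\hat\mu)\ge\mathcal{F}(\mu)-\mathcal{F}(\mu_*)$ is equivalent to $\mathcal{F}(\mu_*)\ge F(\mu^X)-\int\tfrac{\delta F}{\delta\rho}(\mu^X,x)\mu^X(\mathrm dx)-\log\hat Z$; since $\mu_*$ minimizes $\mathcal{F}$ over $\mathcal{P}_2(\mathbb{R}^{2d})$ (Appendix~\ref{MULD background}) we have $\mathcal{F}(\mu_*)\le\mathcal{F}(\hat\mu)$, and $\mathcal{F}(\hat\mu)$ (again from the product form) equals that right-hand side plus $F(\hat\mu^X)-F(\mu^X)-\int\tfrac{\delta F}{\delta\rho}(\mu^X,x)(\hat\mu^X-\mu^X)(\mathrm dx)\ge0$, by convexity of $F$ at $\mu^X$.

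For the right inequality I would write $\textsf{KL}(\mu\|\hat\mu)=\textsf{KL}(\mu\|\mu_*)+\int\mu^X\log\tfrac{\mu_*^X}{\hat\mu^X}$ and then $\int\mu^X\log\tfrac{\mu_*^X}{\hat\mu^X}=\textsf{KL}(\mu_*^X\|\hat\mu^X)+\int\psi\,\mathrm d(\mu^X-\mu_*^X)$ with $\psi\defeq\tfrac{\delta F}{\delta\rho}(\mu^X,\cdot)-\tfrac{\delta F}{\delta\rho}(\mu_*^X,\cdot)$. Since $\nabla\psi(x)=D_{\rho}F(\mu^X,x)-D_{\rho}F(\mu_*^X,x)$, $\mathscr{L}$-smoothness (Assumption~\ref{smoothness}, with $x_1=x_2$) gives $\|\nabla\psi(x)\|\le\mathscr{L}W_1(\mu^X,\mu_*^X)$ for all $x$, so $\psi$ is $\mathscr{L}W_1(\mu^X,\mu_*^X)$-Lipschitz; hence $\int\psi\,\mathrm d(\mu^X-\mu_*^X)\le\mathscr{L}W_1^2(\mu^X,\mu_*^X)$ by Kantorovich--Rubinstein duality, while $\textsf{KL}(\mu_*^X\|\hat\mu^X)=\textsf{KL}(\mu_*\|\hat\mu)\le\tfrac{1}{2\mathscr{C}_{\textsf{LSI}}}\textsf{FI}(\mu_*\|\hat\mu)\le\tfrac{\mathscr{L}^2}{2\mathscr{C}_{\textsf{LSI}}}W_1^2(\mu^X,\mu_*^X)$ by the log-Sobolev inequality for $\hat\mu$ (Assumption~\ref{LSI of PGD}) and the same gradient bound. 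Finally $\mu_*=\hat\mu_*$ satisfies LSI with constant $\mathscr{C}_{\textsf{LSI}}$, so Talagrand's inequality together with $W_1(\mu^X,\mu_*^X)\le W_2(\mu,\mu_*)$ gives $W_1^2(\mu^X,\mu_*^X)\le\tfrac{2}{\mathscr{C}_{\textsf{LSI}}}\textsf{KL}(\mu\|\mu_*)$, and collecting terms yields a bound of the claimed form $\big(1+O(\mathscr{L}/\mathscr{C}_{\textsf{LSI}})+O(\mathscr{L}^2/\mathscr{C}_{\textsf{LSI}}^2)\big)\,\textsf{KL}(\mu\|\mu_*)$. The main obstacle is this third inequality: whereas the first two are algebraic identities closed off by convexity, pushing the constant down to exactly $1+\mathscr{L}/\mathscr{C}_{\textsf{LSI}}+\mathscr{L}^2/(2\mathscr{C}_{\textsf{LSI}}^2)$ — rather than a cruder multiple — requires the careful combination of the (Herbst-type) log-Sobolev concentration of $\psi$ under $\mu_*^X$, the pointwise bound on $\|\nabla\psi\|$, and Talagrand's inequality, as in \citet{chen2023uniform}, keeping track of every factor of two.
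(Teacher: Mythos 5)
A preliminary remark: the paper does not prove this lemma at all — it is imported verbatim as Lemma 4.2 of \citet{chen2023uniform} — so your proposal has to be judged on its own terms rather than against an in-paper argument.

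Your first inequality is correct: using $\mu_*=\hat\mu_*$ the claimed identity $\mathcal{F}(\mu)-\mathcal{F}(\mu_*)-\textsf{KL}(\mu\|\mu_*)=F(\mu^X)-F(\mu_*^X)-\int\frac{\delta F}{\delta\rho}(\mu_*^X,x)\,(\mu^X-\mu_*^X)(\mathrm{d}x)$ checks out, and linear convexity closes it. The genuine gap is in the middle inequality. You correctly reduce it to a \emph{lower} bound, $\mathcal{F}(\mu_*)\ge F(\mu^X)-\int\frac{\delta F}{\delta\rho}(\mu^X,x)\,\mu^X(\mathrm{d}x)-\log\hat Z$, but the two facts you then invoke — $\mathcal{F}(\mu_*)\le\mathcal{F}(\hat\mu)$ from minimality of $\mu_*$, and $\mathcal{F}(\hat\mu)=(\text{RHS})+(\text{nonnegative convexity gap at }\hat\mu^X)$ — combine only to $\mathcal{F}(\mu_*)\le(\text{RHS})+(\text{nonnegative})$, which is the wrong direction and proves nothing. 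The correct route does not use minimality of $\mu_*$ at all: apply the first-order convexity inequality at $\mu^X$ evaluated at $\mu_*^X$ (not at $\hat\mu^X$), which gives $\mathcal{F}(\mu_*)\ge F(\mu^X)-\int\frac{\delta F}{\delta\rho}(\mu^X,x)\,\mu^X(\mathrm{d}x)-\log\hat Z+\textsf{KL}(\mu_*\|\hat\mu)$, and then drop the nonnegative KL term. Equivalently, the linearized free energy at $\mu$ minorizes $\mathcal{F}$ everywhere and its minimum over $\mathcal{P}_2(\mathbb{R}^{2d})$ is attained at $\hat\mu$ with value equal to the RHS.

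Your third inequality has the right architecture (split $\textsf{KL}(\mu\|\hat\mu)-\textsf{KL}(\mu\|\mu_*)$ into $\textsf{KL}(\mu_*\|\hat\mu)$ plus a term linear in $\mu-\mu_*$, then use the $W_1$-Lipschitz bound on $\nabla\psi$, the LSI for $\hat\mu$, and Talagrand for $\mu_*$), and each individual estimate you state is valid. But, as you acknowledge, collecting them gives the constant $1+2\mathscr{L}/\mathscr{C}_{\textsf{LSI}}+\mathscr{L}^2/\mathscr{C}_{\textsf{LSI}}^2$ rather than the stated $1+\mathscr{L}/\mathscr{C}_{\textsf{LSI}}+\mathscr{L}^2/(2\mathscr{C}_{\textsf{LSI}}^2)$; the factor of two is flagged but not closed, so as a proof of the lemma as stated this part remains incomplete (though a constant of the same order would be harmless for how the lemma is used in this paper).
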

\begin{lemma}[Particle System's Entropy Inequality, \citealt{chen2023uniform}, Lemma 4.2]
\label{Particle System's Entropy Inequality}
    Assume that $F$ satisfies Assumption \ref{convexity} and there exists a measure $\mu_*\in\mathcal{P}(\mathbb{R}^{2d})$ that admits the proximal Gibbs distribution $\mu_*(x,v)\propto\exp\left(-\frac{\delta F}{\delta\mu}(\mu^x_*,x)-\frac{1}{2}\|v\|^2\right)$. Then for all $\mu^N\in\mathcal{P}(\mathbb{R}^{2dN})$, we have
    \begin{equation}
        \textsf{\em KL}(\mu^N\|\mu_*^{\otimes N})\leq\mathcal{F}^N(\mu^N)-N\mathcal{F}(\mu_*).
    \end{equation}
\end{lemma}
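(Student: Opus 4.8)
The plan is to expand both sides explicitly using the closed form of $\mu_*$ and thereby reduce the claimed bound to a single first-order convexity inequality for $F$. Write $\mu_*(x,v)=Z^{-1}\exp\!\left(-\frac{\delta F}{\delta\rho}(\mu_*^X,x)-\tfrac12\|v\|^2\right)$ with normalizing constant $Z$, so that $\mu_*^{\otimes N}(\mathbf{x},\mathbf{v})=\prod_{i=1}^N\mu_*(x^i,v^i)$ and
\begin{equation*}
-\log\mu_*^{\otimes N}(\mathbf{x},\mathbf{v})=\sum_{i=1}^N\frac{\delta F}{\delta\rho}(\mu_*^X,x^i)+\tfrac12\|\mathbf{v}\|^2+N\log Z.
\end{equation*}
Since $\mu_{\mathbf{x}}=\frac1N\sum_{i=1}^N\delta_{x^i}$, the first sum equals $N\int\frac{\delta F}{\delta\rho}(\mu_*^X,x)\,\mu_{\mathbf{x}}(\mathrm{d}x)$. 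First I would substitute this into $\textsf{KL}(\mu^N\|\mu_*^{\otimes N})=\text{Ent}(\mu^N)-\int\mu^N\log\mu_*^{\otimes N}$ and compare with the definition of $\mathcal{F}^N$ in \eqref{N-par free energy 2}; the entropy term $\text{Ent}(\mu^N)$ and the kinetic term $\tfrac12\int\|\mathbf{v}\|^2\mu^N$ cancel, leaving
\begin{equation*}
\textsf{KL}(\mu^N\|\mu_*^{\otimes N})-\mathcal{F}^N(\mu^N)=N\,\mathbb{E}_{\mu^N}\!\left[\int\frac{\delta F}{\delta\rho}(\mu_*^X,x)\,\mu_{\mathbf{x}}(\mathrm{d}x)-F(\mu_{\mathbf{x}})\right]+N\log Z.
\end{equation*}

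Next I would compute $\mathcal{F}(\mu_*)$ directly from the same closed form. One finds $\text{Ent}(\mu_*)=-\int\frac{\delta F}{\delta\rho}(\mu_*^X,x)\mu_*^X(\mathrm{d}x)-\tfrac12\int\|v\|^2\mu_*-\log Z$, and the $\tfrac12\|v\|^2$ contributions cancel against the kinetic term in $\mathcal{F}$, giving $\mathcal{F}(\mu_*)=F(\mu_*^X)-\int\frac{\delta F}{\delta\rho}(\mu_*^X,x)\mu_*^X(\mathrm{d}x)-\log Z$. Introducing the functional
\begin{equation*}
G(\nu)\defeq\int\frac{\delta F}{\delta\rho}(\mu_*^X,x)\,\nu(\mathrm{d}x)-F(\nu),
\end{equation*}
the two computations read $\textsf{KL}(\mu^N\|\mu_*^{\otimes N})-\mathcal{F}^N(\mu^N)=N\,\mathbb{E}_{\mu^N}[G(\mu_{\mathbf{x}})]+N\log Z$ and $-N\mathcal{F}(\mu_*)=N\,G(\mu_*^X)+N\log Z$. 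The term $N\log Z$ appears identically on both sides and cancels, so the claim is equivalent to the configuration-wise bound $\mathbb{E}_{\mu^N}[G(\mu_{\mathbf{x}})]\leq G(\mu_*^X)$.

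The key and final step is that $\mu_*^X$ maximizes $G$ over $\mathcal{P}_2(\mathbb{R}^d)$, which is exactly where Assumption~\ref{convexity} enters. For any $\rho$, convexity of $F$ makes $t\mapsto F((1-t)\mu_*^X+t\rho)$ convex on $[0,1]$, so its value at $t=1$ dominates its tangent at $t=0$; since the derivative at $t=0$ is $\int\frac{\delta F}{\delta\rho}(\mu_*^X,x)(\rho-\mu_*^X)(\mathrm{d}x)$ by definition of the first variation, this is the subgradient inequality
\begin{equation*}
F(\rho)\geq F(\mu_*^X)+\int\frac{\delta F}{\delta\rho}(\mu_*^X,x)(\rho-\mu_*^X)(\mathrm{d}x),
\end{equation*}
which rearranges to $G(\rho)\leq G(\mu_*^X)$. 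Applying this with $\rho=\mu_{\mathbf{x}}$ for each fixed $\mathbf{x}$ and taking expectation under $\mu^N$ yields $\mathbb{E}_{\mu^N}[G(\mu_{\mathbf{x}})]\leq G(\mu_*^X)$, completing the argument.

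The proof is essentially bookkeeping once the reformulation through $G$ is found, and notably it uses only Assumption~\ref{convexity} together with the existence of $\mu_*$, as the statement advertises. I expect the only real subtlety to be the last step, namely verifying the tangent-line inequality for the linearly-convex functional $F$ with the first variation playing the role of a supergradient, and confirming that each empirical measure $\mu_{\mathbf{x}}\in\mathcal{P}_2(\mathbb{R}^d)$ is an admissible test measure so the pointwise bound may be integrated against $\mu^N$; the exact cancellation of $N\log Z$ is what makes the reduction clean.
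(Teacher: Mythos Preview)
Your argument is correct. The paper does not supply its own proof of this lemma; it merely cites \citet{chen2023uniform}, Lemma 4.2. The approach you outline---expanding $\textsf{KL}(\mu^N\|\mu_*^{\otimes N})$ and $\mathcal{F}^N(\mu^N)$ via the closed form of $\mu_*$, cancelling the entropy and kinetic terms and the $N\log Z$, and reducing to the pointwise inequality $G(\mu_{\mathbf{x}})\leq G(\mu_*^X)$ that follows from the linear-convexity subgradient inequality for $F$---is exactly the standard proof of this statement and matches what the cited reference does.
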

\begin{lemma}[Information Inequality]
\label{Information Inequality}
     Let $X_1,...,X_N$ be measurable spaces, $\mu$ be a probability on the product space $X=X_1\times...\times X_N$ with $\mu=\mu^1\otimes...\otimes\mu^N$ and $\nu=\nu^1\otimes...\otimes\nu^N$ is a $\sigma$-finite measure. Then
    \begin{equation}
        \sum_{i=1}^N\textsf{\em KL}(\mu^i\|\nu^i)\leq\textsf{\em KL}(\mu\|\nu).
    \end{equation}
\end{lemma}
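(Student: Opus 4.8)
\emph{Proof strategy.} The plan is to reduce the claim to an exact Pythagorean-type decomposition of the relative entropy and then discard one nonnegative term. Write $\bar{\mu}\defeq\mu^1\otimes\cdots\otimes\mu^N$ for the product of the marginals of $\mu$ (so $\bar{\mu}=\mu$ when $\mu$ is itself a product measure). If $\textsf{KL}(\mu\|\nu)=+\infty$ there is nothing to prove, so assume it is finite; then $\mu\ll\nu$, and since $\nu=\nu^1\otimes\cdots\otimes\nu^N$ is $\sigma$-finite, each marginal $\mu^i$ is absolutely continuous with respect to $\nu^i$ and all the Radon--Nikodym derivatives used below exist.

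First I would record the factorization $\frac{\mathrm{d}\bar{\mu}}{\mathrm{d}\nu}(x_1,\dots,x_N)=\prod_{i=1}^N\frac{\mathrm{d}\mu^i}{\mathrm{d}\nu^i}(x_i)$, which holds because both $\bar{\mu}$ and $\nu$ are product measures. Taking logarithms and splitting $\log\frac{\mathrm{d}\mu}{\mathrm{d}\nu}=\log\frac{\mathrm{d}\mu}{\mathrm{d}\bar{\mu}}+\log\frac{\mathrm{d}\bar{\mu}}{\mathrm{d}\nu}=\log\frac{\mathrm{d}\mu}{\mathrm{d}\bar{\mu}}+\sum_{i=1}^N\log\frac{\mathrm{d}\mu^i}{\mathrm{d}\nu^i}(x_i)$, then integrating against $\mu$ and using that the $i$-th summand depends only on $x_i$ (whose law under $\mu$ is exactly $\mu^i$), one obtains the identity
\[
\textsf{KL}(\mu\|\nu)=\textsf{KL}(\mu\|\bar{\mu})+\sum_{i=1}^N\textsf{KL}(\mu^i\|\nu^i).
\]
The only point that needs care here is the term-by-term integration of $\sum_i\log\frac{\mathrm{d}\mu^i}{\mathrm{d}\nu^i}$ against $\mu$; this is justified by linearity together with Tonelli's theorem once each logarithm is split into its positive and negative parts, using finiteness of $\textsf{KL}(\mu\|\nu)$ to control the positive parts.

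Finally, since $\bar{\mu}$ is a probability measure, Gibbs' inequality (equivalently Jensen applied to $-\log$) gives $\textsf{KL}(\mu\|\bar{\mu})\ge0$, and the displayed identity immediately yields $\sum_{i=1}^N\textsf{KL}(\mu^i\|\nu^i)\le\textsf{KL}(\mu\|\nu)$, with equality precisely when $\mu=\bar{\mu}$. An alternative route is induction on $N$: for $N=2$, disintegrate $\mu$ along the first coordinate and use the chain rule $\textsf{KL}(\mu\|\nu^1\otimes\nu^2)=\textsf{KL}(\mu^1\|\nu^1)+\mathbb{E}_{x_1\sim\mu^1}\textsf{KL}(\mu^{2\mid1}(\cdot\mid x_1)\|\nu^2)$ together with joint convexity of $\textsf{KL}$ to lower bound the conditional term by $\textsf{KL}(\mu^2\|\nu^2)$, and the inductive step is identical. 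I do not anticipate a genuine obstacle here: the entire content is the Pythagorean identity above, and the only real work is the routine measure-theoretic bookkeeping (existence of the densities, the $+\infty$ case, and the term-by-term integration).
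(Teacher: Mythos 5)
Your proof is correct, and it is the standard argument: the Pythagorean decomposition $\textsf{KL}(\mu\|\nu)=\textsf{KL}(\mu\|\bar{\mu})+\sum_{i=1}^N\textsf{KL}(\mu^i\|\nu^i)$ followed by nonnegativity of the mutual-information term. The paper itself states this lemma without any proof, so there is no in-paper argument to compare against; note also that you have (rightly) proved the version the paper actually uses — general $\mu$ with marginals $\mu^i$, as applied to the non-product joint law $\mu^N_t$ in Lemma~\ref{moment bound par} — rather than the literal statement with $\mu=\mu^1\otimes\cdots\otimes\mu^N$, under which the claim would be a trivial equality. The only step you gloss over is that $\mu\ll\nu$ with $\nu$ a product implies $\mu\ll\bar{\mu}$ (so that $\log\frac{\mathrm{d}\mu}{\mathrm{d}\bar{\mu}}$ exists); this is true and standard, but deserves a line.
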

\begin{lemma}[Matrix Gr{\"o}nwall's Inequality, \citealt{zhang2023improved}]
\label{matrixGron}
    Let $x:\mathbb{R}_+\rightarrow\mathbb{R}^d$, and $c\in\mathbb{R}^d$, $A\in\mathbb{R}^{d\times d}$, where $A$ has non-negative entries. Suppose that the following inequality is satisfied componentwise:
    \begin{equation*}
        x(t)\leq c+\int_0^t Ax(s)\mathrm{d}s,\quad\text{for all }t\geq 0.
    \end{equation*}
    Then the following inequality holds where $I_d\in\mathbb{R}^{d\times d}$ is the $d$-dimensional identity matrix:
    \begin{equation*}
        x(t)\leq\left(AA^{\dag}e^{At}-AA^{\dag}+I_d\right)c.
    \end{equation*}
\end{lemma}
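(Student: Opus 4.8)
The statement is a vector-valued Gr\"onwall inequality, and my plan is to reduce it to the scalar case by exploiting that the map sending $\phi$ to $c+\int_0^t A\,\phi(s)\,\mathrm{d}s$ is \emph{monotone} for the componentwise partial order on $\mathbb{R}^d$. The one structural fact driving the whole argument is that $A$ has non-negative entries, which forces the semigroup $\{e^{A\tau}\}_{\tau\ge0}$ to be entrywise non-negative: each $e^{A\tau}=\sum_{k\ge0}(A\tau)^k/k!$ is a convergent sum of entrywise non-negative matrices for $\tau\ge0$. This positivity is exactly what lets a comparison argument preserve the inequalities coordinatewise, and I expect it to be the crux of the proof.

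First I would impose mild regularity on $x$ (measurable and locally integrable, as holds in all our applications where $x$ is continuous) so the hypothesis integral is well defined, and set $w(t):=c+\int_0^t A\,x(s)\,\mathrm{d}s$. By hypothesis $x(t)\le w(t)$ componentwise; moreover $w$ is absolutely continuous with $w(0)=c$ and $\dot w(t)=A\,x(t)$. Combining $x\le w$ with $A\ge0$ entrywise yields $\dot w(t)=A\,x(t)\le A\,w(t)$, so the defect $g(t):=\dot w(t)-A\,w(t)=A(x(t)-w(t))$ is $\le0$ componentwise almost everywhere.

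Next I would read $w$ as the solution of the linear system $\dot w=Aw+g$ with $w(0)=c$ and apply variation of constants, $w(t)=e^{At}c+\int_0^t e^{A(t-s)}g(s)\,\mathrm{d}s$. Since $e^{A(t-s)}\ge0$ entrywise for $s\le t$ and $g\le0$ componentwise, the integral term is $\le0$ componentwise, giving $w(t)\le e^{At}c$ and hence $x(t)\le w(t)\le e^{At}c$ for every $t\ge0$. An equivalent self-contained route is to iterate the integral inequality, each substitution preserving non-negativity because $A\ge0$, obtaining $x(t)\le\sum_{k=0}^{n}(tA)^k c/k!+R_n$ and letting $n\to\infty$; the remainder $R_n$ pairs $A^{n+1}$ with an $n$-fold simplex integral of $x$ and vanishes once $x$ is locally bounded.

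It then remains only to recognize that the stated right-hand side is exactly $e^{At}c$, a purely algebraic identity. Writing $e^{At}-I=A\,M$ with $M=\sum_{k\ge1}(t^k/k!)A^{k-1}$ shows $(e^{At}-I)c\in\mathrm{range}(A)$; since $AA^{\dag}$ is the orthogonal projector onto $\mathrm{range}(A)$ it fixes this vector, i.e.\ $AA^{\dag}(e^{At}-I)c=(e^{At}-I)c$. Hence
\[
\bigl(AA^{\dag}e^{At}-AA^{\dag}+I_d\bigr)c=AA^{\dag}c+(e^{At}-I)c-AA^{\dag}c+c=e^{At}c,
\]
and combining with the bound $x(t)\le e^{At}c$ from the previous step completes the proof. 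The form with $A^{\dag}$ is simply how this bound arises if one solves the equality version $\dot Y=AY+c$, $Y(0)=0$, using the pseudoinverse and sets $y=c+AY$, so the pseudoinverse is cosmetic rather than essential.
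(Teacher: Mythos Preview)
The paper does not supply its own proof of this lemma; it is quoted verbatim from \citet{zhang2023improved} and used as a black box in the proof of Lemma~\ref{bound of the iterate difference}. So there is nothing to compare against on the paper's side.

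Your argument is correct. The essential step---that entrywise non-negativity of $A$ forces $e^{A\tau}\ge0$ entrywise for $\tau\ge0$, which in turn makes the variation-of-constants remainder $\int_0^t e^{A(t-s)}g(s)\,\mathrm{d}s$ non-positive---is the standard comparison principle, and you carry it out cleanly. Your algebraic observation that $(AA^{\dag}e^{At}-AA^{\dag}+I_d)c=e^{At}c$ for \emph{every} $c$ (because $e^{At}-I_d$ factors through $A$ and $AA^{\dag}$ projects onto $\mathrm{range}(A)$) is also correct, and it shows that the pseudoinverse in the stated bound is indeed cosmetic: the conclusion is simply $x(t)\le e^{At}c$. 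The display in your final paragraph has a minor bookkeeping slip in the intermediate expression (an extra $+c$ appears), but the identity you are proving and the justification you give are both right; you may want to rewrite that line as $AA^{\dag}e^{At}c-AA^{\dag}c+c=\bigl[(e^{At}-I)c+AA^{\dag}c\bigr]-AA^{\dag}c+c=e^{At}c$ to make the cancellation transparent.
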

\begin{lemma}
\label{bound of the iterate difference}
    Let $(x_t,v_t)_{t\geq 0}$ and $(x_t^i,v_t^i)_{t\geq 0}$ respectively denote the iterates of the {\em\ref{UMFLD}} and {\em\ref{Par-UMFLD}}. Assume that $h\lesssim \mathscr{L}^{-1/2}\wedge \gamma^{-1}$. Under Assumption \ref{smoothness} and Assumption \ref{bounded grad}, for $t\in[kh,(k+1)h]$, we have
    \begin{equation*}
         \sup_{t\in[kh,(k+1)h]}\|x_t-x_{kh}\|\leq
2\mathscr{L}h^2\|x_{kh}\|+4h\|v_{kh}\|+2\mathscr{L}h^2+2\sqrt{2\gamma}h\sup_{t\in[kh,(k+1)h]}\|\mathrm{B}_t-\mathrm{B}_{kh}\|
    \end{equation*}
    \begin{equation*}
         \sup_{t\in[kh,(k+1)h]}\|x^i_t-x^i_{kh}\|\leq
2\mathscr{L}h^2\|x^i_{kh}\|+4h\|v^i_{kh}\|+2\mathscr{L}h^2+2\sqrt{2\gamma}h\sup_{t\in[kh,(k+1)h]}\|\mathrm{B}^i_t-\mathrm{B}^i_{kh}\|
    \end{equation*}
    for $i=1,...,N$.
\end{lemma}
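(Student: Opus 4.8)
The plan is to bound the one-step displacement $\|x_t - x_{kh}\|$ directly from the closed-form solution given in Lemma~\ref{solution to NULA}. From~\eqref{update of UMLA}, for $t \in [kh, (k+1)h]$ we have
\[
x_t - x_{kh} = \frac{1 - e^{-\gamma(t-kh)}}{\gamma}\, v_{kh} - \frac{\gamma h - (1 - e^{-\gamma(t-kh)})}{\gamma^2}\, D_\rho F(\mu_{kh}^X, x_{kh}) + \mathrm{B}_{kh}^x,
\]
so the first step is to apply the triangle inequality to these three terms. For the coefficient of $v_{kh}$, I would use $\frac{1 - e^{-\gamma s}}{\gamma} \le s \le h$ for $s = t-kh \in [0,h]$; combined with the hypothesis $h \lesssim \gamma^{-1}$ this gives a bound like $2h$ (or just $h$) times $\|v_{kh}\|$. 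For the drift coefficient, $\gamma h - (1 - e^{-\gamma s}) \le \frac{\gamma^2 s^2}{2} \le \frac{\gamma^2 h^2}{2}$ by a Taylor estimate, so that term is at most $\tfrac12 h^2 \|D_\rho F(\mu_{kh}^X, x_{kh})\|$, and then Assumption~\ref{bounded grad} replaces this by $\tfrac12 h^2 \mathscr{L}(1 + \|x_{kh}\|) = \tfrac12 \mathscr{L} h^2 + \tfrac12 \mathscr{L} h^2 \|x_{kh}\|$. Collecting constants (being generous, since the lemma states clean round constants $2,4,2$) yields the first three terms on the right-hand side.

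The remaining task is to control the Gaussian term $\mathrm{B}_{kh}^x$, which is where I expect the only real bookkeeping. Rather than working with the explicit covariance matrix in Lemma~\ref{solution to NULA}, I would go back to the SDE representation: $\mathrm{B}_{kh}^x = \int_{kh}^t \frac{1 - e^{-\gamma(t-s)}}{1}\cdot(\text{something})$, or more simply note from the exponential-integrator construction~\eqref{dis-par-UMFLD} that $x_t - x_{kh}$ minus its deterministic part equals $\int_{kh}^t \big(\int_{kh}^r \sqrt{2\gamma}\, e^{-\gamma(r-u)} \mathrm{d}\mathrm{B}_u\big)\mathrm{d}r$ after the velocity integration, so the stochastic part of the position increment is $\int_{kh}^t \frac{1-e^{-\gamma(t-u)}}{\gamma}\sqrt{2\gamma}\,\mathrm{d}\mathrm{B}_u$. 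Bounding $\big|\frac{1-e^{-\gamma(t-u)}}{\gamma}\big| \le h$ pointwise, one gets $\|\mathrm{B}_{kh}^x\| \le \sqrt{2\gamma}\, h \sup_{t \in [kh,(k+1)h]}\|\mathrm{B}_t - \mathrm{B}_{kh}\|$ after pulling the bounded integrand out and recognizing the remaining integral as a Brownian increment (or, more carefully, absorbing the factor of $2$ in the statement). The factor $2$ in $2\sqrt{2\gamma}h$ gives slack to handle the integration-by-parts/supremum manipulations cleanly.

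Taking the supremum over $t \in [kh, (k+1)h]$ on both sides (the deterministic coefficients are all monotone in $s = t-kh$, so their suprema are attained at $s = h$, giving exactly the stated bounds), and then reading off the corresponding statement for the $i$-th particle from~\eqref{update of UNLA} verbatim — the derivation is identical since~\eqref{dis-par-UMFLD} has the same structure as~\eqref{dis-UMFLD} with $D_\rho F(\mu_{\mathbf{x}_{kh}}, x^i_{kh})$ in place of $D_\rho F(\mu_{kh}^X, x_{kh})$, and Assumption~\ref{bounded grad} applies to any $\rho \in \mathcal{P}_2(\mathbb{R}^d)$ including the empirical measure $\mu_{\mathbf{x}_{kh}}$ — completes the proof. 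The main obstacle, such as it is, is just keeping the constants consistent with the round numbers $2, 4, 2$ in the statement; the slack provided by $h \lesssim \mathscr{L}^{-1/2} \wedge \gamma^{-1}$ is what makes this go through, e.g. converting $\tfrac12 \mathscr{L} h^2$-type terms into the cleaner coefficients after noting that lower-order powers of $h$ dominate.
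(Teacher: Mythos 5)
There is a genuine gap, and it comes from misidentifying which process the lemma is about. The statement concerns the iterates of the \emph{continuous-time} dynamics \ref{UMFLD} and \ref{Par-UMFLD}, whose drift over the interval $[kh,(k+1)h]$ is $D_\rho F(\mu_\tau^X,x_\tau)$ evaluated along the evolving trajectory --- not the exponential-integrator processes \eqref{dis-UMFLD} and \eqref{dis-par-UMFLD} with the drift frozen at $D_\rho F(\mu_{kh}^X,x_{kh})$. (This is forced by how the lemma is used: in the Girsanov bound \eqref{discretization error in KL} the expectation is under $\mathbf{Q}_{Kh}$, the path law of the continuous dynamics.) Your entire argument rests on the closed-form solution of Lemma~\ref{solution to NULA}, which only exists because the drift is constant on each step; it simply does not apply to $(x_t,v_t)$ solving \ref{UMFLD}, so the decomposition you start from is not available.

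The missing idea is a Gr{\"o}nwall bootstrap. Because the drift depends on the current state, bounding $\|D_\rho F(\mu_{\tau}^X,x_{\tau})\|\le \mathscr{L}(1+\|x_{kh}\|+\|x_{\tau}-x_{kh}\|)$ via Assumptions~\ref{smoothness} and \ref{bounded grad} reintroduces the unknown displacement $\|x_\tau-x_{kh}\|$ (and, through the velocity equation, $\|v_\tau-v_{kh}\|$) inside the time integrals. The paper writes $x_t-x_{kh}=\int_{kh}^t v_\tau\,\mathrm{d}\tau$, expands $v_\tau-v_{kh}$ from the SDE, obtains a coupled system of integral inequalities in $\|x_t-x_{kh}\|$ and $\|v_t-v_{kh}\|$, and closes it with the matrix Gr{\"o}nwall inequality (Lemma~\ref{matrixGron}); the condition $h\lesssim\mathscr{L}^{-1/2}\wedge\gamma^{-1}$ is what keeps the resulting exponential factor $e^{(\mathscr{L}h+\gamma)h}$ bounded by $2$ and yields the clean constants $2,4,2$. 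Your treatment of the Gaussian term and of the particle system would transfer once this is fixed, but without the Gr{\"o}nwall step the proof does not go through for the process the lemma actually addresses.
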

\begin{proof}
We only prove the first relation, and the proof of the second relation is similar.
\begin{equation*}
\begin{aligned}
    \|x_t-x_{kh}\|&=\left\|\int_{kh}^t v_{\tau}\mathrm{d}\tau\right\|\leq h\|v_{kh}\|+\left\|\int_{kh}^t v_{\tau}-v_{kh}\mathrm{d}\tau\right\|\\
    &\leq h\|v_{kh}\|+\left\|\int_{kh}^t\int_0^{\tau}\gamma v_{\tau'}\mathrm{d}\tau'\mathrm{d}\tau\right\|+\left\|\int_{kh}^t\int_{kh}^{\tau}{D}_{\rho}F(\mu_{\tau'}^X,x_{\tau'})\mathrm{d}\tau'\mathrm{d}\tau\right\|+\left\|\int_{kh}^t\int_{kh}^{\tau}\sqrt{2\gamma}\mathrm{dB}_{\tau'}\mathrm{d}\tau\right\|\\
    &\leq h\|v_{kh}\|+\gamma h\left(h\|v_{kh}\|+\int_{kh}^t\|v_{\tau}-v_{kh}\|\mathrm{d}\tau\right)+\left\|\int_{kh}^t\int_{kh}^{\tau}{D}_{\rho}F(\mu_{\tau'}^X,x_{\tau'})\mathrm{d}\tau'\mathrm{d}\tau\right\|\\
    &\quad +\left\|\int_{kh}^t\int_{kh}^{\tau}\sqrt{2\gamma}\mathrm{dB}_{\tau'}\mathrm{d}\tau\right\|\\
    &\leq h\|v_{kh}\|+\gamma h\left(h\|v_{kh}\|+\int_{kh}^t\|v_{\tau}-v_{kh}\|\mathrm{d}\tau\right)+\mathscr{L}h\int_{kh}^t\left\|x_{\tau}-x_{kh}\right\|\mathrm{d}\tau+\mathscr{L}h^2\|x_{kh}\|\\
    &\quad+\mathscr{L}h^2+\sqrt{2\gamma}h\sup_{t\in[kh,(k+1)h]}\|\mathrm{B}_t-\mathrm{B}_{kh}\|
\end{aligned}
\end{equation*}
where the last inequality follows from Assumptions \ref{smoothness} and \ref{bounded grad}. Likewise for $V$:
\begin{equation*}
    \begin{aligned}
        \|v_t-v_{kh}\|&=\left\|\int_{kh}^t\gamma v_{\tau}\mathrm{d}\tau\right\|+\left\|\int_{kh}^t{D}_{\rho}F(\mu_{\tau}^X,x_{\tau})\mathrm{d}\tau\right\|+\left\|\int_{kh}^t\sqrt{2\gamma}\mathrm{dB}_t\right\|\\
        &\leq\gamma\left(h\|v_{kh}\|+\int_{kh}^t\|v_{\tau}-v_{kh}\|\mathrm{d}\tau\right)+\left\|\int_{kh}^t{D}_{\rho}F(\mu_{\tau}^X,x_{\tau})\mathrm{d}\tau\right\|+\sqrt{2\gamma}\sup_{t\in[kh,(k+1)h]}\|\mathrm{B}_t-\mathrm{B}_{kh}\|\\
        &\leq\gamma\left(h\|v_{kh}\|+\int_{kh}^t\|v_{\tau}-v_{kh}\|\mathrm{d}\tau\right)+\mathscr{L}\int_{kh}^t\|x_{\tau}-x_{kh}\|\mathrm{d}\tau+\mathscr{L}h+\mathscr{L}h\|x_{kh}\|\\
        &\quad+\sqrt{2\gamma}\sup_{t\in[kh,(k+1)h]}\|\mathrm{B}_t-\mathrm{B}_{kh}\|
    \end{aligned}
\end{equation*}
where the last inequality follows from Assumptions \ref{smoothness} and \ref{bounded grad}. Before applying matrix form of Gr{\"o}nwall's inequality, let $c=c_1+c_2$ with $c_2=\left[\begin{array}{c}
         h\|v_{kh}\|  \\
         0
    \end{array}\right]$,
\begin{equation*}
    A=\left[\begin{array}{cc}
        \mathscr{L}h & \gamma h \\
        \mathscr{L} & \gamma
    \end{array}\right],\,c_1=\left[\begin{array}{c}
        \mathscr{L}h^2\|x_{kh}\|+\gamma h^2\|v_{kh}\|+\mathscr{L}h^2+\sqrt{2\gamma}h\sup_{t\in[kh,(k+1)h]}\|\mathrm{B}_t-\mathrm{B}_{kh}\|\\
        \mathscr{L}h\|x_{kh}\|+\gamma h\|v_{kh}\|+\mathscr{L}h+\sqrt{2\gamma}\sup_{t\in[kh,(k+1)h]}\|\mathrm{B}_t-\mathrm{B}_{kh}\|
    \end{array}\right].
\end{equation*}
$c_1$ lies in the image space of $A$, and $\exp(A_t)c_1$ also lies in the image space of $A$. For the first component:
\begin{equation*}
\begin{aligned}
    \sup_{t\in[kh,(k+1)h]}\|x_t-x_{kh}\|&\leq h\exp\left((\mathscr{L}h+\gamma)h\right)(\mathscr{L}h\|x_{kh}\|+\gamma h\|v_{kh}\|+\mathscr{L}h+\sqrt{2\gamma}\sup_{t\in[kh,(k+1)h]}\|\mathrm{B}_t-\mathrm{B}_{kh}\|)\\
    &\quad+\frac{\mathscr{L}h\exp((\mathscr{L}h+\gamma)h)+\gamma}{\mathscr{L}h+\gamma}h\|v_{kh}\|\\
    &\leq 2h\left(\mathscr{L}h\|x_{kh}\|+2\|v_{kh}\|+\mathscr{L}h+\sqrt{2\gamma}\sup_{t\in[kh,(k+1)h]}\|\mathrm{B}_t-\mathrm{B}_{kh}\|\right)
\end{aligned}
\end{equation*}
where the second inequality comes from choosing $h\lesssim\frac{1}{\mathscr{L}^{1/2}}\land\frac{1}{\gamma}$.
\begin{equation*}
    ((AA^{\dagger}(\exp(Ah)-I)+I)c_2)_{(1)}=\frac{\mathscr{L}h\exp((\mathscr{L}h+\gamma)h)+\gamma}{\mathscr{L}h+\gamma}h\|v_{kh}\|\leq 2h\|v_{kh}\|
\end{equation*}
Combining relations above and  Lemma \ref{matrixGron} completes the proof.
\end{proof}
\begin{lemma}
\label{moment bound mean-field}
    Let $(x_t,v_t)_{t\geq 0}$ denote the iterates of the {\em\ref{UMFLD}} with $(x_0,v_0)\sim\mu_0=\mathcal{N}(0,I_{2d})$. Under Assumption \ref{second moment bound} and Assumption \ref{initialization and invariant}, we have
    \begin{equation}
        \mathbb{E}\|(x_t,v_t)\|^2\lesssim\frac{\mathscr{L}d}{\mathscr{C}_{\textsf{\em LSI}}}
    \end{equation}
\end{lemma}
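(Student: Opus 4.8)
The plan is to convert the desired second-moment bound into control of the $2$-Wasserstein distance of $\mu_t$ to the stationary law $\mu_*$, and then bound that distance by the (monotone) free energy $\mathcal{F}$. Writing $\delta_0$ for the Dirac mass at the origin of $\mathbb{R}^{2d}$, note that $\mathbb{E}\|(x_t,v_t)\|^2 = W_2^2(\mu_t,\delta_0)$, so the triangle inequality for $W_2$ together with $(a+b)^2\le 2a^2+2b^2$ gives
\begin{equation*}
\mathbb{E}\|(x_t,v_t)\|^2 \;\le\; 2\,W_2^2(\mu_t,\mu_*)+2\,W_2^2(\mu_*,\delta_0).
\end{equation*}
The second term is exactly the second moment of $\mu_*$; since $\mu_*=\hat{\mu}_*$ has the proximal-Gibbs form~\eqref{eq:gibbs}, Assumption~\ref{second moment bound} yields $W_2^2(\mu_*,\delta_0)=\mathbb{E}_{\hat{\mu}_*}\|\cdot\|^2\lesssim d$. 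It therefore suffices to bound $W_2^2(\mu_t,\mu_*)$ uniformly in $t$.

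For that term I would argue as follows. Instantiating Assumption~\ref{LSI of PGD} at $\mu=\mu_*$ shows that $\mu_*$ (which equals its own proximal Gibbs distribution) satisfies \eqref{LSI} with constant $\mathscr{C}_{\textsf{LSI}}$; hence, by the Otto--Villani theorem, it satisfies the Talagrand inequality $W_2^2(\mu_t,\mu_*)\le \tfrac{2}{\mathscr{C}_{\textsf{LSI}}}\,\textsf{KL}(\mu_t\|\mu_*)$. Under the standing Assumptions~\ref{convexity}--\ref{LSI of PGD}, the mean-field entropy sandwich (Lemma~\ref{Mean-field Sandwich}) gives $\textsf{KL}(\mu_t\|\mu_*)\le \mathcal{F}(\mu_t)-\mathcal{F}(\mu_*)$; and since $\mathcal{F}$ is non-increasing along the~\ref{UMFLD} (as established in~\citet{kazeykina2020ergodicity,chen2023uniform}), we obtain $\mathcal{F}(\mu_t)-\mathcal{F}(\mu_*)\le \mathcal{F}(\mu_0)-\mathcal{F}(\mu_*)\le \mathcal{F}(\mu_0)$, using $\mathcal{F}(\mu_*)\ge 0$. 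Finally I would bound the initial free energy explicitly: for $\mu_0=\mathcal{N}(0,I_{2d})$ one has $\int \tfrac12\|v\|^2\,\mu_0(\mathrm{d}x\mathrm{d}v)=\tfrac d2$ and $\text{Ent}(\mu_0)=-d\log(2\pi e)\le 0$, while $F(\mu_0^X)=F(\mathcal{N}(0,I_d))\lesssim \mathscr{L}d$ by Assumption~\ref{initialization and invariant}; since $\mathscr{L}\ge 1$ this gives $\mathcal{F}(\mu_0)\lesssim \mathscr{L}d$, and hence $W_2^2(\mu_t,\mu_*)\lesssim \mathscr{L}d/\mathscr{C}_{\textsf{LSI}}$.

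Combining the two contributions, and using $\mathscr{L}\ge 1\ge \mathscr{C}_{\textsf{LSI}}$ to absorb $W_2^2(\mu_*,\delta_0)\lesssim d$ into the term $\mathscr{L}d/\mathscr{C}_{\textsf{LSI}}$, yields $\mathbb{E}\|(x_t,v_t)\|^2\lesssim \mathscr{L}d/\mathscr{C}_{\textsf{LSI}}$ uniformly in $t$, as claimed. The main obstacle is precisely that the bound must be uniform in $t$: this forces us to use a functional that genuinely \emph{decreases} along the continuous dynamics rather than merely converges, which is why the argument routes through the monotone free energy $\mathcal{F}$ and the entropy sandwich instead of attempting a $\chi^2$- or KL-type contraction estimate for~\ref{UMFLD} (no such contraction is available in this mean-field setting). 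The remaining ingredients — Talagrand from LSI, the entropy sandwich of Lemma~\ref{Mean-field Sandwich}, and the elementary evaluation of $\text{Ent}(\mu_0)$ and the kinetic energy for the Gaussian initialization — are routine.
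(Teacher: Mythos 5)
Your proposal is correct and follows essentially the same route as the paper's own proof of Lemma~\ref{moment bound mean-field}: the decomposition $W_2^2(\mu_t,\delta_0)\lesssim W_2^2(\mu_t,\mu_*)+W_2^2(\mu_*,\delta_0)$, Talagrand via Assumption~\ref{LSI of PGD}, the entropy sandwich of Lemma~\ref{Mean-field Sandwich}, monotonicity of $\mathcal{F}$ along~\ref{UMFLD}, and the Gaussian-initialization bounds under Assumptions~\ref{second moment bound} and~\ref{initialization and invariant}. The only cosmetic difference is that you drop $\text{Ent}(\mu_0)$ using its sign while the paper bounds $|\text{Ent}(\mu_0)|\lesssim d$; both are fine.
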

\begin{proof}
    \begin{equation*}
    \begin{aligned}
        \mathbb{E}\|(x_t,v_t)\|^2=W_2^2(\mu_t,\delta_0)&\leq 2 W_2^2(\mu_t,\mu_*)+2W_2^2(\mu_*,\delta_0)\\
        &\leq\frac{2}{\mathscr{C}_{\textsf{LSI}}}\textsf{KL}(\mu_t\|\mu_*)+2\textbf{m}_2^2\\
        &\leq\frac{2}{\mathscr{C}_{\textsf{LSI}}}(\mathcal{F}(\mu_t)-\mathcal{F}(\mu_*))+2\textbf{m}_2^2\\
        &\leq\frac{2}{\mathscr{C}_{\textsf{LSI}}}(\mathcal{F}(\mu_0)-\mathcal{F}(\mu_*))+2\textbf{m}_2^2\\
        &\leq\frac{2}{\mathscr{C}_{\textsf{LSI}}}\mathcal{F}(\mu_0)+2\textbf{m}_2^2
    \end{aligned}
\end{equation*}
The second inequality follows from Talagrand's inequality which can be implied by Assumption \ref{LSI of PGD}.\footnote{Assumption \ref{LSI of PGD} states that the proximal Gibbs distribution satisfies the LSI. Note that $\mu_*$ also has the form of the proximal Gibbs distribution and thus satisfies LSI.} The third inequality follows from Lemma~\ref{Mean-field Sandwich}. The fourth inequality follows that $\frac{\mathrm{d}}{\mathrm{d}t}\mathcal{F}(\mu_t)<0$ along the \ref{UMFLD} (Proof of Theorem 2.1 in \citet{chen2023uniform}) and the last inequality follows from the assumption that $\mathcal{F}(\mu_*)\geq 0$. By the definition of $\mathcal{F}(\mu)$, we have $\mathcal{F}(\mu_0)=F(\mu^x_0)+\int\frac{1}{2}\|v\|^2\mu_0(\mathrm{d}x\mathrm{d}v)+\text{Ent}(\mu_0)$. Since $(x_0,v_0)\sim\mathcal{N}(0,I_{2d})$, we have $\int\frac{1}{2}\|v\|^2\mu_0(\mathrm{d}x\mathrm{d}v)\lesssim d$ and
\begin{equation*}
\begin{aligned}
    |\text{Ent}(\mu_0)|&=\left|\int\mu_0\log\mu_0\right|\\
    &=\frac{d}{2}\log(2\pi)+\frac{1}{2}\mathbb{E}_{\mu_0}\|\cdot\|^2\lesssim d.
    \end{aligned}
\end{equation*}
By Assumption \ref{initialization and invariant}, we have $F(\mu_0^x)\lesssim \mathscr{L}d$. By Assumption \ref{second moment bound}, we have $\textbf{m}_2^2\lesssim d$. Thus we have
\begin{equation*}
\begin{aligned}
    \mathbb{E}\|(x_t,v_t)\|^2&\leq\frac{2}{\mathscr{C}_{\textsf{LSI}}}\mathcal{F}(\mu_0)+2\textbf{m}_2^2\lesssim\frac{\mathscr{L}d}{\mathscr{C}_{\textsf{LSI}}}+d
    \end{aligned}
\end{equation*}
\end{proof}
\begin{lemma}
\label{moment bound par}
    Let ${(x^i_t,v^i_t)_{i=1}^N}$ denote the iterates of the {\em\ref{Par-UMFLD}} with $(x_0^i,v_0^i)\sim\mu_0^i=\mathcal{N}(0,I_{2d})$ for $i=1,...,N$ and $t\geq 0$. Under Assumption \ref{second moment bound} and Assumption \ref{initialization and invariant for N-particle}, we have
    \begin{equation}
        \frac{1}{N}\sum_{i=1}^N\mathbb{E}\|(x^i_t,v^i_t)\|^2\lesssim\frac{\mathscr{L}d}{\mathscr{C}_{\textsf{\em LSI}}}
    \end{equation}
\end{lemma}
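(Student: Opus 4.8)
The plan is to mirror the proof of Lemma~\ref{moment bound mean-field} in the $N$-particle setting, routing through the tensorized reference measure $\mu_*^{\otimes N}$ (rather than $\mu_*^N$, which is not a product measure) so that Lemma~\ref{Particle System's Entropy Inequality} applies. First observe that
\[
  \frac{1}{N}\sum_{i=1}^N\mathbb{E}\|(x_t^i,v_t^i)\|^2=\frac{1}{N}\mathbb{E}\|(\mathbf{x}_t,\mathbf{v}_t)\|^2=\frac{1}{N}W_2^2(\mu_t^N,\delta_0),
\]
where $\delta_0$ is now the Dirac mass at the origin of $\mathbb{R}^{2Nd}$, and apply the triangle inequality for $W_2$ together with $(a+b)^2\le 2a^2+2b^2$ to get $W_2^2(\mu_t^N,\delta_0)\le 2W_2^2(\mu_t^N,\mu_*^{\otimes N})+2W_2^2(\mu_*^{\otimes N},\delta_0)$. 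The second term is the second moment of $\mu_*^{\otimes N}$, which factorizes as $N\,\mathbb{E}_{\mu_*}\|\cdot\|^2=N\mathbf{m}_2^2\lesssim Nd$ by Assumption~\ref{second moment bound}, so $\tfrac{2}{N}W_2^2(\mu_*^{\otimes N},\delta_0)\lesssim d$.

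For the first term, note that $\mu_*$ has the proximal Gibbs form, so it satisfies LSI with constant $\mathscr{C}_{\textsf{LSI}}$ by Assumption~\ref{LSI of PGD} (same reasoning as the footnote in Lemma~\ref{moment bound mean-field}); since LSI tensorizes with the same constant, $\mu_*^{\otimes N}$ satisfies LSI with constant $\mathscr{C}_{\textsf{LSI}}$ and hence Talagrand's inequality $W_2^2(\nu,\mu_*^{\otimes N})\le \tfrac{2}{\mathscr{C}_{\textsf{LSI}}}\textsf{KL}(\nu\|\mu_*^{\otimes N})$. Applying this with $\nu=\mu_t^N$, then Lemma~\ref{Particle System's Entropy Inequality}, then the fact that $\mathcal{F}^N(\mu_t^N)$ is nonincreasing along \ref{Par-UMFLD} (as in the proof of Theorem 2.2 in \citet{chen2023uniform}), and finally $\mathcal{F}(\mu_*)\ge 0$ (the same normalization convention used in Lemma~\ref{moment bound mean-field}), gives
\[
  W_2^2(\mu_t^N,\mu_*^{\otimes N})\le \frac{2}{\mathscr{C}_{\textsf{LSI}}}\big(\mathcal{F}^N(\mu_t^N)-N\mathcal{F}(\mu_*)\big)\le\frac{2}{\mathscr{C}_{\textsf{LSI}}}\big(\mathcal{F}^N(\mu_0^N)-N\mathcal{F}(\mu_*)\big)\le\frac{2}{\mathscr{C}_{\textsf{LSI}}}\mathcal{F}^N(\mu_0^N).
\]

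It remains to bound $\mathcal{F}^N(\mu_0^N)$ with $\mu_0^N=\mathcal{N}(0,I_{2Nd})=\mu_0^{\otimes N}$. Splitting according to \eqref{N-par free energy 2}: the interaction term is $\mathbb{E}_{\mu_0^N}[NF(\mu_{\mathbf{x}})]=N\,\mathbb{E}_{\mathbf{x}_0\sim(\mu_0^X)^N}F(\mu_{\mathbf{x}_0})\lesssim N\mathscr{L}d$ by Assumption~\ref{initialization and invariant for N-particle}; the kinetic term equals $\tfrac{1}{2}\mathbb{E}_{\mu_0^N}\|\mathbf{v}\|^2=\tfrac{Nd}{2}$; and the entropy term is $\text{Ent}(\mu_0^N)=N\,\text{Ent}(\mu_0)$ with $|\text{Ent}(\mu_0)|\lesssim d$ as computed in the proof of Lemma~\ref{moment bound mean-field}. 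Since $\mathscr{L}\ge 1$, these combine to $\mathcal{F}^N(\mu_0^N)\lesssim N\mathscr{L}d$, hence $\tfrac{2}{N}W_2^2(\mu_t^N,\mu_*^{\otimes N})\lesssim \mathscr{L}d/\mathscr{C}_{\textsf{LSI}}$. Adding the two contributions and using $\mathscr{C}_{\textsf{LSI}}\le 1\le\mathscr{L}$ yields $\tfrac{1}{N}\sum_{i=1}^N\mathbb{E}\|(x_t^i,v_t^i)\|^2\lesssim \mathscr{L}d/\mathscr{C}_{\textsf{LSI}}+d\lesssim\mathscr{L}d/\mathscr{C}_{\textsf{LSI}}$. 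The only point needing care is the passage to $\mu_*^{\otimes N}$: one cannot apply Talagrand directly to $\mu_*^N$, but Lemma~\ref{Particle System's Entropy Inequality} controls $\textsf{KL}(\mu_t^N\|\mu_*^{\otimes N})$ by the $N$-particle free-energy gap, and $\mu_*^{\otimes N}$ inherits both the LSI constant and (per particle) the second-moment bound of $\mu_*$, so this substitution is essentially free.
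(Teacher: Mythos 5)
Your proof is correct and follows essentially the same skeleton as the paper's: second moment as $W_2^2$ to a Dirac, triangle inequality against a reference built from $\mu_*$, Talagrand, Lemma~\ref{Particle System's Entropy Inequality}, monotonicity of $\mathcal{F}^N$ along \ref{Par-UMFLD}, and the bound $\mathcal{F}^N(\mu_0^N)\lesssim N\mathscr{L}d$ from Assumptions~\ref{second moment bound} and~\ref{initialization and invariant for N-particle}. The one place you diverge is the Talagrand step: the paper applies Talagrand per particle to each marginal $\mu_t^i$ against $\mu_*$ and then passes to the joint KL via the superadditivity of relative entropy over a product reference (its Lemma~\ref{Information Inequality}), whereas you apply Talagrand once on $\mathbb{R}^{2Nd}$ to $\mu_*^{\otimes N}$, invoking the dimension-free tensorization of LSI. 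The two routes are equivalent in effect and yield the same constant; yours trades one standard lemma for another and is, if anything, slightly more streamlined since it avoids decomposing into marginals only to immediately recombine them.
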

\begin{proof}
    \begin{equation*}
\begin{aligned}
    \frac{1}{N}\sum_{i=1}^N\mathbb{E}\|(x^i_t,v^i_t)\|^2=\frac{1}{N}\sum_{i=1}^NW_2^2(\mu^i_t,\delta_0)
    &\leq \frac{2}{N}\sum_{i=1}^NW_2^2(\mu^i_t,\mu_*)+2W_2^2(\mu_*,\delta_0)\\
    &\leq\frac{2}{\mathscr{C}_{\textsf{LSI}}}\frac{1}{N}\sum_{i=1}^N\textsf{KL}(\mu^i_t\|\mu_*)+2\textbf{m}_2^2\\
        &\leq\frac{2}{\mathscr{C}_{\textsf{LSI}}}\frac{1}{N}\textsf{KL}(\mu^N_t\|\mu^{\otimes N}_*)+2\textbf{m}_2^2\\
        &\leq\frac{2}{\mathscr{C}_{\textsf{LSI}}}\left(\frac{1}{N}\mathcal{F}^N(\mu^N_t)-\mathcal{F}(\mu_*)\right)+2\textbf{m}_2^2\\
        &\leq\frac{2}{N\mathscr{C}_{\textsf{LSI}}}\mathcal{F}^N(\mu^N_0)+2\textbf{m}_2^2
\end{aligned}
\end{equation*}
The second inequality follows from Talagrand's inequality which can be implied by Assumption \ref{LSI of PGD}. The third inequality follows from Lemma \ref{Information Inequality}. The fourth inequality follows from Lemma~\ref{Particle System's Entropy Inequality} and the last inequality follows that $\frac{\mathrm{d}}{\mathrm{d}t}\mathcal{F}^N(\mu^N_t)<0$ along the~\ref{Par-UMFLD} (Proof of Theorem 2.2 in \citet{chen2023uniform}) and $\mathcal{F}(\mu_*)\geq 0$. By the definition of $\mathcal{F}^N(\mu^N)$, we have $\mathcal{F}^N(\mu_0^N)=\int (NF(\mu_{\textbf{x}})+\frac{1}{2}\|\textbf{v}\|^2)\mu_0^N(\mathrm{d}\textbf{x}\mathrm{d}\textbf{v})+\text{Ent}(\mu_0^N)$. Similar to the proof of Lemma~\ref{moment bound mean-field}, since $(\textbf{x},\textbf{v})\sim\mathcal{N}(0,I_{2Nd})$, we have $\int \frac{1}{2}\|\textbf{v}\|^2\mu_0^N(\mathrm{d}\textbf{x}\mathrm{d}\textbf{v})\lesssim Nd$ and $|\text{Ent}(\mu_0^N)|\lesssim Nd$. By Assumption \ref{initialization and invariant for N-particle} and Assumption \ref{second moment bound}, we also have $\int NF(\mu_{\textbf{x}})\mu_0^N(\mathrm{d}\textbf{x}\mathrm{d}\textbf{v})\lesssim N\mathscr{L}d$ and $\textbf{m}_2^2\lesssim d$.
Thus we have
\begin{equation*}
\begin{aligned}
    \frac{1}{N}\sum_{i=1}^N\mathbb{E}\|(x^i_t,v^i_t)\|^2&\leq\frac{2}{N\mathscr{C}_{\textsf{LSI}}}\mathcal{F}^N(\mu^N_0)+2\textbf{m}_2^2\\
    &=\frac{2}{N\mathscr{C}_{\textsf{LSI}}}\left(\int (NF(\mu_{\textbf{x}})+\frac{1}{2}\|\textbf{v}\|^2)\mu_0^N(\mathrm{d}\textbf{x}\mathrm{d}\textbf{v})+\text{Ent}(\mu_0^N)\right)+2\textbf{m}_2^2\\
    &\lesssim\frac{1}{N\mathscr{C}_{\textsf{LSI}}}(N\mathscr{L}d+Nd)+d\lesssim\frac{\mathscr{L}d}{\mathscr{C}_{\textsf{LSI}}}+d
    \end{aligned}
\end{equation*}
\end{proof}
\begin{lemma}[Girsanov's Theorem, (\citet{zhang2023improved}, Theorem 19)]
\label{Girsanov}
    Consider stochastic processes $(x_t)_{t\geq 0}$, $(b_t^{\textbf{P}})_{t\geq 0}$, $(b_t^{\textbf{Q}})_{t\geq 0}$ adapted to the same filtration, and $\sigma\in\mathbb{R}^{d\times d}$ any constant matrix (possibly degenerate). Let $\textbf{P}_T$ and $\textbf{Q}$ be probability measures on the path space $C([0,T];\mathbb{R}^d)$ such that $(x_t)_{t\geq 0}$ follows
    \begin{equation*}
        \begin{aligned}         \mathrm{d}x_t&=b_t^{\textbf{P}}\mathrm{d}t+\sigma \mathrm{dB}^{\textbf{P}}_t\quad\text{under }\textbf{P}_T,\\       \mathrm{d}x_t&=b_t^{\textbf{Q}}\mathrm{d}t+\sigma \mathrm{dB}^{\textbf{Q}}_t\quad\text{under }\textbf{Q}_T,
        \end{aligned}
    \end{equation*}
    where $\mathrm{B}^{\textbf{P}}$ and $\mathrm{B}^{\textbf{Q}}$ are $\textbf{P}_T$-Brownian motion and $\textbf{Q}_T$-Brownian motion. Suppose there exists a process $(y_t)_{t\geq 0}$ such that
    \begin{equation*}
        \sigma y_t=b_t^{\textbf{P}}-b_t^{\textbf{Q}},
    \end{equation*}
    and 
    \begin{equation*}
        \mathbb{E}_{\mathbf{Q}_T}\exp\left(\frac{1}{2}\int_0^T\left\|y_t\right\|^2\mathrm{d}t\right)< \infty.
    \end{equation*}
    If we define $\sigma^{\dag}$ as the Moore-Penrose pseudo-inverse of $\sigma$, then we have
    \begin{equation*}
        \frac{\mathrm{d}\mathbf{P}_T}{\mathrm{d}\mathbf{Q}_T}=\exp\left(\int_0^T\langle\sigma_t^{\dag}(b_t^{\textbf{P}_T}-b_t^{\textbf{Q}_T}),\mathrm{dB}^{\textbf{Q}_T}_t\rangle-\frac{1}{2}\int_0^T\|\sigma_t^{\dag}(b_t^{\textbf{P}_T}-b_t^{\textbf{Q}_T})\|^2\mathrm{d}t\right)
    \end{equation*}
    Besides, $(\tilde{\mathrm{B}}_t)_{t\in[0,T]}$ defined by $\mathrm{d}\Tilde{\mathrm{B}}_t\defeq \mathrm{dB}_t+\sigma_t^{\dag}(b_t^Y-b_t^X)$ is a $\mathbf{P}_T$-Brownian motion.
\end{lemma}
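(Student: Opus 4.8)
The plan is to reduce the statement to the classical Girsanov theorem for a nondegenerate driving Brownian motion, isolating the degeneracy of $\sigma$ as the only nonstandard ingredient and disposing of it through the algebra of the Moore--Penrose pseudo-inverse. Write $u_t \defeq \sigma^{\dagger}(b_t^{\mathbf{P}}-b_t^{\mathbf{Q}})$ and take the candidate density process to be the Dol\'eans--Dade exponential
\begin{equation*}
Z_t \defeq \exp\!\left(\int_0^t \langle u_s, \mathrm{d}\mathrm{B}_s^{\mathbf{Q}}\rangle - \tfrac{1}{2}\int_0^t \|u_s\|^2\,\mathrm{d}s\right),
\end{equation*}
so that $Z_T$ is exactly the asserted right-hand side. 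I would record at the outset the three identities I will use repeatedly: $\sigma\sigma^{\dagger}$ and $\sigma^{\dagger}\sigma$ are the orthogonal projections onto $\mathrm{Range}(\sigma)$ and $(\ker\sigma)^{\perp}$ respectively, and $\sigma\sigma^{\dagger}\sigma=\sigma$. The hypothesis $b_t^{\mathbf{P}}-b_t^{\mathbf{Q}}=\sigma y_t$ says precisely that the drift discrepancy lives in $\mathrm{Range}(\sigma)$; this is the feature that makes the construction possible, and it is where I expect the real work to concentrate.

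First I would check that $Z$ is a genuine $\mathbf{Q}_T$-martingale, not merely a local one, so that $\mathbb{E}_{\mathbf{Q}_T}[Z_T]=1$ and $\mathrm{d}\tilde{\mathbf{P}}_T\defeq Z_T\,\mathrm{d}\mathbf{Q}_T$ defines a probability measure. Because $\sigma^{\dagger}\sigma$ is an orthogonal projection it is a contraction, so $\|u_s\| = \|\sigma^{\dagger}\sigma\, y_s\|\le \|y_s\|$ pointwise; hence the assumed Novikov-type bound $\mathbb{E}_{\mathbf{Q}_T}\exp(\tfrac12\int_0^T\|y_s\|^2\,\mathrm{d}s)<\infty$ transfers verbatim to $u$, and Novikov's criterion yields the martingale property. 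This is the step that consumes the integrability hypothesis and explains why it is phrased in terms of $y$ rather than $u$.

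With $\tilde{\mathbf{P}}_T$ in hand, the classical Girsanov theorem applied to the $\mathbf{Q}_T$-Brownian motion $\mathrm{B}^{\mathbf{Q}}$ with shift $u$ gives that $\tilde{\mathrm{B}}_t\defeq \mathrm{B}_t^{\mathbf{Q}}-\int_0^t u_s\,\mathrm{d}s$ is a $\tilde{\mathbf{P}}_T$-Brownian motion. Substituting $\mathrm{d}\mathrm{B}_t^{\mathbf{Q}}=\mathrm{d}\tilde{\mathrm{B}}_t+u_t\,\mathrm{d}t$ into $\mathrm{d}x_t=b_t^{\mathbf{Q}}\,\mathrm{d}t+\sigma\,\mathrm{d}\mathrm{B}_t^{\mathbf{Q}}$ produces drift $b_t^{\mathbf{Q}}+\sigma u_t$, and here the projection identity closes the argument: $\sigma u_t=\sigma\sigma^{\dagger}(b_t^{\mathbf{P}}-b_t^{\mathbf{Q}})=b_t^{\mathbf{P}}-b_t^{\mathbf{Q}}$, the last equality holding precisely because $b_t^{\mathbf{P}}-b_t^{\mathbf{Q}}\in\mathrm{Range}(\sigma)$, so that $\mathrm{d}x_t=b_t^{\mathbf{P}}\,\mathrm{d}t+\sigma\,\mathrm{d}\tilde{\mathrm{B}}_t$ under $\tilde{\mathbf{P}}_T$. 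Thus $(x_t)$ carries exactly the $\mathbf{P}$-dynamics under $\tilde{\mathbf{P}}_T$; invoking weak uniqueness of this SDE (immediate here since $\sigma$ is a constant matrix and the coefficients are adapted with the given integrability) identifies $\tilde{\mathbf{P}}_T$ with $\mathbf{P}_T$, which is the density formula, and simultaneously identifies $\tilde{\mathrm{B}}$ as the advertised $\mathbf{P}_T$-Brownian motion (up to the sign convention in the stated $\mathrm{d}\tilde{\mathrm{B}}_t$). The main obstacle throughout is purely the degeneracy: every place a classical proof would write $\sigma^{-1}$ must instead be routed through $\sigma^{\dagger}$, and one must verify both that the compensating drift $\sigma u_t$ recovers the full difference $b^{\mathbf{P}}-b^{\mathbf{Q}}$ (needing $b^{\mathbf{P}}-b^{\mathbf{Q}}\in\mathrm{Range}(\sigma)$) and that the noise is only ever shifted inside $(\ker\sigma)^{\perp}$, the directions the diffusion can actually see.
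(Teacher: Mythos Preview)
The paper does not prove this lemma: it is stated as a citation to \citet{zhang2023improved}, Theorem 19, and no proof is given in the paper itself. So there is no ``paper's own proof'' to compare against; the lemma functions purely as an imported tool for the discretization analysis in Appendix~E.

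Your sketch is the standard route to a degenerate-diffusion Girsanov theorem and is essentially correct. The reduction $\|u_s\|=\|\sigma^\dagger\sigma\,y_s\|\le\|y_s\|$ to transfer Novikov's condition, and the projection identity $\sigma\sigma^\dagger(b^{\mathbf P}-b^{\mathbf Q})=b^{\mathbf P}-b^{\mathbf Q}$ coming from the range hypothesis, are exactly the right observations. One point worth tightening: your final identification $\tilde{\mathbf P}_T=\mathbf P_T$ via ``weak uniqueness of this SDE'' is not quite the right invocation, because $b_t^{\mathbf P}$ is a general adapted process rather than a coefficient function of $x_t$, so this is not an SDE in the usual weak-uniqueness sense. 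The cleaner way to close the argument is to note that both $\mathbf P_T$ and $\tilde{\mathbf P}_T$ are measures on the canonical path space under which the canonical process has the same semimartingale decomposition with respect to the canonical filtration, and then appeal to the uniqueness of such a decomposition (or, equivalently, run the Girsanov construction in the reverse direction starting from $\mathbf P_T$ and check the two densities are inverses). You also correctly flag the sign convention discrepancy in the stated $\mathrm d\tilde{\mathrm B}_t$.
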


\section{Verification of assumptions}
\label{Verification of Assumption}
\subsection{Verification of Assumption~\ref{smoothness}}
\label{verification of smoothness}
Smoothness in $W_1$ distance has been verified for training mean-field neural networks in \citet{chen2022uniform}. Thus we only verify smoothness in $W_1$ distance for examples of density estimation via MMD minimization and KSD minimization. Lemma \ref{W1 smoothness} provides sufficient conditions for smoothness in $W_1$ distance. In particular, we have
\begin{equation}
\label{smooth tri}
    \|D_{\rho}F(\rho_1,x_1)-D_{\rho}F(\rho_2,x_2)\|\leq\|D_{\rho}F(\rho_1,x_1)-D_{\rho}F(\rho_2,x_1)\|+\|D_{\rho}F(\rho_2,x_1)-D_{\rho}F(\rho_2,x_2)\|
\end{equation}
\citet{suzuki2023convergence} verify that $\|D_{\rho}F(\rho_2,x_1)-D_{\rho}F(\rho_2,x_2)\|\leq\mathscr{L}\|x_1-x_2\|$ for three examples mentioned above. Thus it suffices to verify \eqref{L-smoothness w1} for the last two examples.
\paragraph{MMD minimization} We now prove that objective \eqref{MMD} satisfies Assumption~\ref{smoothness} with Gaussian RBF kernel. We choose $\sigma'$ in Gaussian RBF kernel $k$ to be $\sigma$ for brevity. We reformulate \eqref{MMD} as
\begin{equation}
\label{MMD assump}
    F(\rho)=\hat{\mathcal{M}}(\rho)+\frac{\lambda'}{2}\mathbb{E}_{x\sim\rho}\|x\|^2.
\end{equation}
According to the definition of $\hat{\mathcal{M}}$ in Section~\ref{sec:ex}, the intrinsic derivative of $F$ is
\begin{equation*}
\begin{aligned}
D_{\rho}F(\rho,x)&=D_{\rho}\hat{\mathcal{M}}(\rho,x)+\frac{\lambda'}{2}\|x\|^2\\
    &={2\iiint \nabla_x p(x;z)p(x';z')k(z,z')\mathrm{d}z\mathrm{d}z'\mathrm{d}\rho(x')}-{\frac{2}{n}\sum_{i=1}^n\int \nabla_x p(x;z)k(z,z_i)\mathrm{d}z}+\frac{\lambda'}{2}\|x\|^2
    \end{aligned}
\end{equation*}
We only need to prove $D_{\mu}\hat{\mathcal{M}}(\mu,x)$ is smooth. The second-order intrinsic derivative $D_{\rho}\hat{\mathcal{M}}(\rho,x)$ is
\begin{equation*}
\begin{aligned}
    {D^2_{\rho}\hat{\mathcal{M}}}(\rho,x,x')&=2\iint\nabla_x p(x;z)\otimes\nabla_{x'}p(x';z')k(z,z')\mathrm{d}z\mathrm{d}z'\\
    &=\frac{2}{(2\pi\sigma^2)^d\sigma^4}\iint(x-z)\otimes(x'-z')\exp\left(-\frac{\|x-z\|^2+\|x'-z'\|^2+\|z-z'\|^2}{2\sigma^2}\right)\mathrm{d}z\mathrm{d}z'
    \end{aligned}
\end{equation*}
From the relation $x\cdot\exp(-x^2/2\sigma^2)\leq\sigma$ for $x\geq 0$, we have
\begin{equation*}
\begin{aligned}
    \left\|{D^2_{\rho}\hat{\mathcal{M}}}(\rho,x,x')\right\|&\leq\frac{1}{(2\pi\sigma^2)^d\sigma^4}\iint\|x-z\|\|x'-z'\|\exp\left(-\frac{\|x-z\|^2+\|x'-z'\|^2+\|z-z'\|^2}{2\sigma^2}\right)\mathrm{d}z\mathrm{d}z'\\
    &\leq\frac{1}{(2\pi\sigma^2)^d\sigma^2}\iint\exp\left(-\frac{\|z-z'\|^2}{2\sigma^2}\right)\mathrm{d}z\mathrm{d}z'=\frac{1}{(2\pi\sigma^2)^{d/2}\sigma^2}
    \end{aligned}
\end{equation*}
According to Lemma~\ref{W1 smoothness} and \eqref{smooth tri}, $F$ defined in \eqref{MMD assump} satisfies Assumption~\ref{smoothness}. 
\paragraph{KSD minimization} We now prove that objective \eqref{KSD obj} satisfies Assumption~\ref{smoothness} with kernel 
\begin{equation}
\label{kernel}
k(x,x')=\exp\left(-\frac{\|x\|^2}{2\sigma_1^2}-\frac{\|x'\|^2}{2\sigma_1^2}-\frac{\|x-x'\|^2}{2\sigma_2^2}\right).
\end{equation}
We also assume the score function of $\mu_*$ satisfies \eqref{assump on score}. Under this assumption on score function and with this choice of kernel, \citet{suzuki2023convergence} show in their Appendix A that the Stein kernel $u_{\rho_*}$ satisfies $\sup_{x,x'\in\mathbb{R}^d}\max\{|u_{\rho_*}|,\|\nabla_xu_{\rho_*}\|,\|\nabla_x\nabla_{x'}u_{\rho_*}\|_{\textsf{op}}\}\leq \mathscr{L}$. We reformulate \eqref{KSD obj} as
\begin{equation}
\label{KSD assump}
    F(\rho)=\textsf{KSD}(\rho)+\frac{\lambda'}{2}\mathbb{E}_{x\sim\rho}\|x\|^2.
\end{equation}
Similarly, we only need to verify that $\textsf{KSD}$ is smooth with respect to $W_1$ distance. The intrinsic derivative of $\textsf{KSD}$ is
\begin{equation*}
    D_{\rho}\textsf{KSD}(\rho,x)=\int\nabla_x u_{\rho_*}(x,x')\mathrm{d}\rho(x').
\end{equation*}
The second-order intrinsic derivative of $D_{\rho}\textsf{KSD}(\rho,x)$ is
\begin{equation*}
     D^2_{\rho}\textsf{KSD}(\rho,x,x')=\nabla_x\nabla_{x'} u_{\rho_*}(x,x')
\end{equation*}
The following relation implies Assumption~\ref{smoothness} by Lemma \ref{W1 smoothness}.
\begin{equation*}
    \|D^2_{\rho}\textsf{KSD}(\rho,x,x')\|=\|\nabla_x\nabla_{x'} u_{\rho_*}(x,x')\|\leq \mathscr{L}
\end{equation*}
\subsection{Verification of Assumption~\ref{second moment bound}}
\label{verification of bounded second moment}
\paragraph{Training mean-field neural networks}
Denote $\hat{\mu}(x,v)=\hat{\mu}^X(x)\otimes\mathcal{N}(0,I_d)$ where $\hat{\mu}^X(x)\propto\exp\left(-\frac{\delta F}{\delta\rho}(\mu^X,x)\right)$. Since the second moment of $\mathcal{N}(0,I_d)$ is $O(d)$, it suffices to ensure $\mathbb{E}_{x\sim\hat{\mu}^X}\|x\|^2=O(d)$. We reformulate objective \eqref{empirical risk minimization} as:
\begin{equation}
    F(\rho)=\frac{1}{n}\sum_{i=1}^n\ell(h(\rho;a_i),b_i)+\frac{\lambda'}{2}\mathbb{E}_{x\sim \rho}[\|x\|^2].
\end{equation}
\begin{itemize}
    \item We will prove that Assumption \ref{second moment bound} holds if  $|h(x;a)|\leq\sqrt{\mathscr{L}}$ (such activation functions include \textsf{tanh} and \textsf{sigmoid}) and $|\partial_1 \ell|\leq\sqrt{\mathscr{L}}$ (such loss functions include logistic loss, Huber loss and log-cosh loss) or $\ell$ is quadratic. The functional derivative of $F$ is
    \begin{equation*}
    \begin{aligned}
        \frac{\delta F}{\delta \rho}(\mu^X,x)&=\frac{1}{n}\sum_{i=1}^n\left[\partial_1 \ell(h(\mu^X;a_i),b_i)h(x;a_i)\right]+\frac{\lambda'}{2}\|x\|^2
    \end{aligned}
    \end{equation*}
    Consider the case where $|\partial_1 \ell|\leq\sqrt{\mathscr{L}}$. Since  $|h(x;a)|\leq\sqrt{\mathscr{L}}$, we have $|\partial_1 \ell(h(\mu^X;a_i),b_i)h(x;a_i)|\leq \mathscr{L}$. Let $Z=\int \exp\left(-\frac{\delta F}{\delta \rho}(\mu^X,x)\right)\mathrm{d}x$, and we have
    \begin{equation}
    \label{second moment}
        \begin{aligned}
            \mathbb{E}_{\hat{\mu}^X}\|\cdot\|^2&=\frac{1}{Z}\int\|x\|^2\exp\left(-\frac{1}{n}\sum_{i=1}^n\left[\partial_1 \ell(h(\mu^X;a_i),b_i)h(x;a_i)\right]-\frac{\lambda'}{2}\|x\|^2\right)\mathrm{d}x\defeq\frac{Z'}{Z}
        \end{aligned}
    \end{equation}
    Now we bound $Z'$ and $Z$ respectively.
    \begin{equation*}
    \begin{aligned}
        Z'&\leq\int\|x\|^2\exp\left(\mathscr{L}-\frac{\lambda'}{2}\|x\|^2\right)\mathrm{d}x\lesssim\frac{\exp(\mathscr{L})d}{\lambda'},\\
        Z&\geq\int \exp\left(-\mathscr{L}-\frac{\lambda'}{2}\|x\|^2\right)\mathrm{d}x=\exp(-\mathscr{L})\left(\frac{2\pi}{\lambda'}\right)^{d/2}
        \end{aligned}
    \end{equation*}
Choose $\lambda'\leq (2\pi)^3\exp(-4\mathscr{L})$ which implies $\lambda'\leq\frac{(2\pi)^{\frac{d}{d-2}}}{\exp\left(\frac{4\mathscr{L}}{d-2}\right)}$, and we have
$\mathbb{E}_{\hat{\mu}^X}\|\cdot\|^2=\frac{Z'}{Z}\lesssim\frac{\exp(2\mathscr{L})}{\lambda'\left(\frac{2\pi}{\lambda'}\right)^{d/2}}d\leq d.$
Consider the case where $\ell$ is quadratic. $|h(\mu^X;a_i)|=|\int h(x;a_i)\mu^X(\mathrm{d}x)|\leq\int |h(x;a_i)|\mu(\mathrm{d}x)\leq\sqrt{\mathscr{L}}$, thus we have $|\partial_1 \ell(h(\mu^X;a_i),b_i)h(x;a_i)|=|(h(\mu^X;a_i)-b_i)h(x;a_i)|\leq \mathscr{L}+|b_i|\sqrt{\mathscr{L}}$. We can scale the label to ensure $\max_{i=1}^n|b_i|\leq \sqrt{\mathscr{L}}$, and we obtain $|\partial_1 \ell(h(\mu^X;a_i),b_i)h(x;a_i)|\leq 2\mathscr{L}$. The remaining proof keeps the same with $\lambda'\leq (2\pi)^3\exp(-8\mathscr{L})$. 
\item We will prove that Assumption \ref{second moment bound} holds if $|h(x;a)|\leq\sqrt{\mathscr{L}}(1+\|x\|)$ (such activation functions include \textsf{ReLU}, \textsf{GeLU}, Softplus, \textsf{SiLU}) and $|\partial_1 \ell|\leq\sqrt{\mathscr{L}}$. Under these conditions, we have $|\partial_1 \ell(h(\mu^X;a_i),b_i)h(x;a_i)|\leq \mathscr{L}(1+\|x\|)$. Then, based on \eqref{second moment}, we obtain
\begin{equation*}
\begin{aligned}
    Z'\leq\int\|x\|^2\exp\left(\mathscr{L}(1+\|x\|)-\frac{\lambda'}{2}\|x\|^2\right)\mathrm{d}x&\leq\exp(\mathscr{L})\int\|x\|^2\exp\left(\frac{3\mathscr{L}^2}{2\lambda'}-\frac{\lambda'}{3}\|x\|^2\right)\mathrm{d}x\\
    &\lesssim\exp\left(\mathscr{L}+\frac{3\mathscr{L}^2}{2\lambda'}\right)\frac{d}{\lambda'}.
    \end{aligned}
\end{equation*}
We also have
\begin{equation*}
\begin{aligned}
    Z\geq\int \exp\left(-\mathscr{L}(1+\|x\|)-\frac{\lambda'}{2}\|x\|^2\right)\mathrm{d}x&\geq\exp(\mathscr{L})\int\exp\left(-\frac{\mathscr{L}^2}{\lambda'}-\frac{3\lambda'}{4}\|x\|^2\right)\mathrm{d}x\\
    &=\exp\left(\mathscr{L}-\frac{\mathscr{L}^2}{\lambda'}\right)\left(\frac{4\pi}{3\lambda'}\right)^{d/2}
    \end{aligned}
\end{equation*}
Combining the upper bound of $Z'$ and the lower bound of $Z$, if $d\geq\frac{5\mathscr{L}^2}{\lambda'}\left(\log\frac{4\pi}{3}\right)^{-1}$, we obtain
\begin{equation*}
    \mathbb{E}_{\hat{\mu}^X}\|\cdot\|^2=\frac{Z'}{Z}\lesssim\exp\left(\frac{5\mathscr{L}^2}{2\lambda'}\right)\frac{d}{\lambda'}\left(\frac{3\lambda'}{4\pi}\right)^{d/2}\leq\exp\left(\frac{5\mathscr{L}^2}{2\lambda'}\right)\left(\frac{3}{4\pi}\right)^{d/2}d\leq d.
\end{equation*}
Note that $d\geq\frac{5\mathscr{L}^2}{\lambda'}\left(\log\frac{4\pi}{3}\right)^{-1}$ is possible for large-scale problems.
\end{itemize}
\paragraph{MMD minimization}
We now prove that objective \eqref{MMD} satisfies Assumption~\ref{second moment bound} with Gaussian RBF kernel. We choose $\sigma'$ in Gaussian RBF kernel $k$ to be $\sigma$ for brevity. We reformulate \eqref{MMD} as
\begin{equation}
    F(\rho)=\hat{\mathcal{M}}(\rho)+\frac{\lambda'}{2}\mathbb{E}_{x\sim\rho}\|x\|^2.
\end{equation}
According to the definition of $\hat{\mathcal{M}}(\rho)$ in Section~\ref{sec:ex}, the functional derivative of $\hat{\mathcal{M}}(\rho)$ is
\begin{equation}
    \frac{\delta \hat{\mathcal{M}}}{\delta\rho}(\rho,x)=\underbrace{2\iiint p(x;z)p(x';z')k(z,z')\mathrm{d}z\mathrm{d}z'\mathrm{d}\rho(x')}_{\textsf{P}}-\underbrace{\frac{2}{n}\sum_{i=1}^n\int p(x;z)k(z,z_i)\mathrm{d}z}_{\textsf{Q}}
\end{equation}
Next we bound each part of $\frac{\delta \hat{\mathcal{M}}}{\delta\rho}(\rho,x)$. For $\textsf{P}$, we have
\begin{equation*}
    \begin{aligned}
        \frac{1}{2}\textsf{P}&=\frac{1}{(2\pi\sigma^2)^{d}}\iiint\exp\left(-\frac{\|x-z\|^2}{2\sigma^2}-\frac{\|x'-z'\|^2}{2\sigma^2}-\frac{\|z-z'\|^2}{2\sigma^2}\right)\mathrm{d}z\mathrm{d}z'\mathrm{d}\rho(x')\\
        &=\frac{(\pi\sigma^2)^{\frac{d}{2}}}{(2\pi\sigma^2)^{d}}\iint\exp\left(-\frac{\|x-x'\|^2}{6\sigma^2}-\frac{3\|z'-\frac{2}{3}x'-\frac{1}{3}x\|^2}{4\sigma^2}\right)\mathrm{d}z'\mathrm{d}\rho(x')\\
        &=\left(\frac{1}{\sqrt{3}}\right)^d\int\exp\left(-\frac{\|x-x'\|^2}{6\sigma^2}\right)\mathrm{d}\rho(x')\leq \left(\frac{1}{\sqrt{3}}\right)^d
    \end{aligned}
\end{equation*}
where the last inequality follows from the relation $\exp\left(-\frac{\|x-x'\|^2}{6\sigma^2}\right)\leq 1$. For $\mathsf{Q}$, we have
\begin{equation*}
    \begin{aligned}
        \frac{1}{2}\mathsf{Q}&=\frac{1}{(2\pi\sigma^2)^{\frac{d}{2}}}\frac{1}{n}\sum_{i=1}^n\int\exp\left(-\frac{\|x-z\|^2}{2\sigma^2}-\frac{\|z-z_i\|^2}{2\sigma^2}\right)\mathrm{d}z\\
        &=\frac{1}{(2\pi\sigma^2)^{\frac{d}{2}}}\frac{1}{n}\sum_{i=1}^n\exp\left(-\frac{\|x\|^2+\|z_i\|^2}{2\sigma^2}+\frac{\|z_i+x\|^2}{4\sigma^2}\right)\
        \int\exp\left(-\frac{\|z-\frac{1}{2}z_i-\frac{1}{2}x\|^2}{\sigma^2}\right)\mathrm{d}z\\
        &=\left(\frac{1}{\sqrt{2}}\right)^d\frac{1}{n}\sum_{i=1}^n\exp\left(-\frac{\|x\|^2+\|z_i\|^2}{2\sigma^2}+\frac{\|z_i+x\|^2}{4\sigma^2}\right)\leq\left(\frac{1}{\sqrt{2}}\right)^d
    \end{aligned}
\end{equation*}
where the last inequality follows from the relation $\|z_i+x\|^2\leq 2\|z_i\|^2+2\|x\|^2$. Note that $\mathsf{P}\geq 0$ and $\textsf{Q}\geq 0$. Combining the bound of $\textsf{P}$ and $\textsf{Q}$, we obtain the bound of $\frac{\delta \hat{\mathcal{M}}}{\delta\rho}(\rho,x)$ as follows:
\begin{equation}
\begin{aligned}
    -\sqrt{2}\leq-2\left(\frac{1}{\sqrt{2}}\right)^d\leq\frac{\delta \hat{\mathcal{M}}(\mu)}{\delta\mu}(x)=\textsf{P}-\textsf{Q}
    \leq 2\left(\frac{1}{\sqrt{3}}\right)^d\leq \sqrt{3}
    \end{aligned}
\end{equation}
Let $\hat{\mu}^X(x)={\exp\left(-\frac{\delta F}{\delta \rho}(\mu^X,x)\right)}/{Z}$ where $Z=\int \exp\left(-\frac{\delta F}{\delta \rho}(\mu^X,x)\right) \mathrm{d}x$, and we have
    \begin{equation}
    \label{second moment II}
        \begin{aligned}
            \mathbb{E}_{\hat{\mu}^X}\|\cdot\|^2&=\frac{1}{Z}\int\|x\|^2\exp\left(-\frac{\delta \hat{\mathcal{M}}}{\delta\rho}(\mu^X,x)-\frac{\lambda'}{2}\|x\|^2\right)\mathrm{d}x\defeq\frac{Z'}{Z}
        \end{aligned}
    \end{equation}
    Now we bound $Z'$ and $Z$ respectively.
    \begin{equation*}
    \begin{aligned}
        Z'&\leq\int\|x\|^2\exp\left(\sqrt{2}-\frac{\lambda'}{2}\|x\|^2\right)\mathrm{d}x\lesssim\frac{\exp(\sqrt{2})d}{\lambda'},\\
        Z&\geq\int \exp\left(-\sqrt{3}-\frac{\lambda'}{2}\|x\|^2\right)\mathrm{d}x=\exp(-\sqrt{3})\left(\frac{2\pi}{\lambda'}\right)^{d/2}
        \end{aligned}
    \end{equation*}
    Thus in order to ensure $\mathbb{E}_{\hat{\mu}^X}\|\cdot\|^2=\frac{Z'}{Z}\lesssim\frac{\exp(\sqrt{2}+\sqrt{3})\lambda'^{\frac{d-2}{2}}}{(2\pi)^{\frac{d}{2}}}d\leq d$, it suffices to choose $\lambda'\leq 3\pi/25$.
    \paragraph{KSD minimization} Assume the score function $s_{\rho_*}$ satisfies \eqref{assump on score} and choose the kernel $k$ to be \eqref{kernel}, and the Stein kernel $u_{\rho_*}$ satisfies $\sup_{x,x'\in\mathbb{R}^d}\max\{|u_{\rho_*}|,\|\nabla_xu_{\rho_*}\|,\|\nabla_x^2u_{\rho_*}\|_{\textsf{op}}\}\leq \mathscr{L}$ \citep{suzuki2023convergence}. We now prove the following objective
    \begin{equation}
        F(\rho)=\textsf{KSD}(\rho)+\frac{\lambda'}{2}\mathbb{E}_{x\sim\rho}\|x\|^2
    \end{equation}
    satisfies Assumption \ref{second moment bound}, with $\textsf{KSD}$ defined by $\textsf{KSD}(\rho)=\iint u_{\rho_*}(x,x')\mathrm{d}\rho(x)\mathrm{d}\rho(x')$. The functional derivative of $\textsf{KSD}$ is
    \begin{equation*}
        \frac{\delta\textsf{KSD}}{\delta\rho}(\rho,x)=\int u_{\rho_*}(x,x')\mathrm{d}\rho(x').
    \end{equation*}
   The functional derivative is bounded as
    \begin{equation*}
        \left|\frac{\delta\textsf{KSD}}{\delta\rho}(\rho,x)\right|\leq\int |u_{\rho_*}(x,x')|\mathrm{d}\rho(x')\leq \mathscr{L}.
    \end{equation*}
Let $\hat{\mu}^X(x)={\exp\left(-\frac{\delta F}{\delta \rho}(\mu^X,x)\right)}/{Z}$ where $Z=\int \exp\left(-\frac{\delta F}{\delta \rho}(\mu^X,x)\right) \mathrm{d}x$, and we have
    \begin{equation}
        \begin{aligned}
            \mathbb{E}_{\hat{\mu}^X}\|\cdot\|^2&=\frac{1}{Z}\int\|x\|^2\exp\left(-\frac{\delta \textsf{KSD}}{\delta\rho}(\mu^X,x)-\frac{\lambda'}{2}\|x\|^2\right)\mathrm{d}x\defeq\frac{Z'}{Z}
        \end{aligned}
    \end{equation}
    Now we bound $Z'$ and $Z$ respectively.
    \begin{equation*}
    \begin{aligned}
        Z'&\leq\int\|x\|^2\exp\left(\mathscr{L}-\frac{\lambda'}{2}\|x\|^2\right)\mathrm{d}x\lesssim\frac{\exp(\mathscr{L})d}{\lambda'},\\
        Z&\geq\int \exp\left(-\mathscr{L}-\frac{\lambda'}{2}\|x\|^2\right)\mathrm{d}x=\exp(-\mathscr{L})\left(\frac{2\pi}{\lambda'}\right)^{d/2}
        \end{aligned}
    \end{equation*}
    Thus we have $\mathbb{E}_{\hat{\mu}^X}\|\cdot\|^2=\frac{Z'}{Z}\lesssim\frac{\exp(2\mathscr{L})d\lambda'^{\frac{d}{2}-1}}{(2\pi)^{\frac{d}{2}}}\leq d$ for $\lambda'\leq(2\pi)^3\exp\left(-{4\mathscr{L}}\right)$.
\subsection{Verification of Assumption~\ref{initialization and invariant}}
\label{verification of bounded init}
\paragraph{Training mean-field neural networks}
 Reformulate the objective \eqref{empirical risk minimization} with  $\mu_0=\mathcal{N}(0,I_{d})$:
    $$F(\rho)=\frac{1}{n}\sum_{i=1}^n\ell(h(\rho;a_i),b_i)+\frac{\lambda'}{2}\mathbb{E}_{x\sim \rho}[\|x\|^2].$$
    \begin{itemize}
        \item 
      If $l$ is $\sqrt{\mathscr{L}}$-Lipschitz, we have $|\ell(h(\rho;a),b)|\leq \sqrt{\mathscr{L}}|h(\rho;a)-b|$. If $|h(x;a)|\leq\sqrt{\mathscr{L}}$, we have $|h(\rho;a)|\leq\sqrt{\mathscr{L}}$. Since $\mu_0=\mathcal{N}(0,I_{2d})$, $\mathbb{E}_{x\sim \mu^X_0}[\|x\|^2]\lesssim d$. With $\lambda'\leq\min\{\mathscr{L},d\}$, we have $F(\mu^X_0)\lesssim \sqrt{\mathscr{L}}(\sqrt{\mathscr{L}}+\max_{i=1}^n|b_i|)+d$. We can normalize the data samples to ensure $\max_{i=1}^n|b_i|\lesssim d\land \sqrt{\mathscr{L}}$. Thus $F(\mu^X_0)\lesssim\mathscr{L}+d$.
     \item If $|h(x;a)|\leq\sqrt{\mathscr{L}}(1+\|x\|)$, we have $|h(\mu^X_0;a)|\leq\sqrt{\mathscr{L}}\int(1+\|x\|)\mu^X_0(\mathrm{d}x)\lesssim\sqrt{\mathscr{L}}d^{1/2}$. If $\ell$ is $\sqrt{\mathscr{L}}$-Lipschitz, we have $|l(h(\mu^X_0;a_i),b_i)|\leq \sqrt{\mathscr{L}}|h(\mu^X_0;a_i)-b_i|\lesssim \mathscr{L}d^{1/2}+\sqrt{\mathscr{L}}\max_{i=1}^n|b_i|$. We can normalize the data samples to ensure $\max_{i=1}^n|b_i|\lesssim d\land \sqrt{\mathscr{L}}$. Thus we have $F(\mu^X_0)\lesssim\mathscr{L}d+d$.
     \end{itemize}
\paragraph{MMD minimization}
Reformulate the objective~\eqref{MMD} with Gaussian RBF kernel ($\sigma'=\sigma$) and $\mu_0=\mathcal{N}(0,I_d)$:
\begin{equation}
\label{MMDappendix}
    F(\rho)=\hat{\mathcal{M}}(\rho)+\frac{\lambda'}{2}\mathbb{E}_{x\sim\rho}\|x\|^2,
\end{equation}
where
\begin{equation*}
\begin{aligned}
    \hat{\mathcal{M}}(\rho)&=\iiint p(x;z)p(x';z')k(z,z')\mathrm{d}z\mathrm{d}z'\mathrm{d}(\rho\times\rho)(x,x')-2\int\left(\frac{1}{n}\sum_{i=1}^n\int p(x;z)k(z,z_i)\mathrm{d}z\right)\mathrm{d}\rho(x)\\
    &=\frac{1}{{3}^{d/2}}\int\exp\left(-\frac{\|x-x'\|^2}{6\sigma^2}\right)\mathrm{d}(\rho\times\rho)(x,x')-\frac{2}{2^{d/2}}\frac{1}{n}\sum_{i=1}^n\int\exp\left(-\frac{\|x-z_i\|^2}{4\sigma^2}\right)\mathrm{d}\rho(x)\\
    &\leq\frac{1}{{3}^{d/2}}\int\exp\left(-\frac{\|x-x'\|^2}{6\sigma^2}\right)\mathrm{d}(\rho\times\rho)(x,x')\leq \frac{1}{{3}^{d/2}}\leq \mathscr{L}
    \end{aligned}
\end{equation*}
Thus $F(\mu^X_0)=\hat{\mathcal{M}}(\mu^X_0)+\frac{\lambda'}{2}\mathbb{E}_{x\sim\mu^X_0}\|x\|^2\lesssim \mathscr{L}+d,$ which satisfies Assumption~\ref{initialization and invariant}.
\paragraph{KSD minimization} Consider the same objective in \eqref{KSD obj} with  $\mu_0=\mathcal{N}(0,I_d)$:
\begin{equation*}
    F(\rho)=\textsf{KSD}(\rho)+\frac{\lambda'}{2}\mathbb{E}_{x\sim\rho}\|x\|^2.
\end{equation*}
If we choose kernel $k(x,x')=\exp\left(-\frac{\|x\|^2}{2\sigma_1^2}-\frac{\|x'\|^2}{2\sigma_1^2}-\frac{\|x-x'\|^2}{2\sigma_2^2}\right)$ and
assume the score function of $\rho_*$ satisfies $\max\{\|\nabla\log\rho_*(x)\|,\|\nabla^{\otimes 2}\log\rho_*(x)\|_{\textsf{op}},\|\nabla^{\otimes 3}\log\rho_*(x)\|_{\textsf{op}}\}\leq \mathscr{L}(1+\|x\|)$, then the Stein kernel $u_{\rho_*}$ satisfies $\sup_{x,x'\in\mathbb{R}^d}\max\{|u_{\rho_*}|,\|\nabla_xu_{\rho_*}\|,\|\nabla_x^2u_{\rho_*}\|_{\textsf{op}}\}\leq \mathscr{L}$ according to the statement of Appendix A in \citet{suzuki2023convergence}. We have
\begin{equation*}
\begin{aligned}
    F(\mu^X_0)&=\textsf{KSD}(\mu^X_0)+\frac{\lambda'}{2}\mathbb{E}_{x\sim\mu_0^X}\|x\|^2\\
    &=\iint u_{\rho_*}(x,x')\mathrm{d}\mu^X(x)\mathrm{d}\mu^X(x')+\frac{\lambda'}{2}\mathbb{E}_{x\sim\mu^X_0}\|x\|^2\\
    &\lesssim \mathscr{L}+d,
\end{aligned}
\end{equation*}
which satisfies Assumption~\ref{initialization and invariant}.
\subsection{Verification of Assumption~\ref{initialization and invariant for N-particle}}
\label{verification of bounded par-init}
\paragraph{Training mean-field neural networks}
 Similar to examples of training mean-field neural networks above, we initialize $\mu^N_0=\mathcal{N}(0,I_{2Nd})$.
    $$\mathbb{E}_{\textbf{x}\sim\mu^N}F(\mu_{\textbf{x}})\defeq\mathbb{E}_{\textbf{x}\sim \mu^N}\frac{1}{n}\sum_{i=1}^n\left[\ell\left(\frac{1}{N}\sum_{s=1}^Nh(x^s;a_i),b_i\right)\right]+\frac{\lambda'}{2}\mathbb{E}_{\textbf{x}\sim \mu^N}\frac{1}{N}\sum_{s=1}^N\left[\|x^s\|^2\right],$$ where $\textbf{x}=(x^1,...,x^N)$, $x^i\sim \mu^i$ for $i=1,...,N$ and $\mu^N=\otimes_{i=1}^N\mu^i=\text{Law}(x^1,...,x^N)$. %
    \begin{itemize}
    \item
    If $|h(x;a)|\leq\sqrt{\mathscr{L}}$ and $\ell$ is $\sqrt{\mathscr{L}}$-Lipschitz,  and $\mathbb{E}_{\textbf{x}_0\sim \mu_0^N}\frac{1}{n}\sum_{i=1}^n\left[\ell\left(\frac{1}{N}\sum_{i=1}^Nh(x_0^i;a_i),b_i\right)\right]\lesssim\sqrt{\mathscr{L}}(\sqrt{\mathscr{L}}+\max_{i=1}^n|b_i|)$ and thus $\mathbb{E}_{\mu_0^N}F(\mu_{\textbf{x}_0})\lesssim\mathscr{L}+\sqrt{\mathscr{L}}\max_{i=1}^n|b_i|+d$. We can normalize the data samples to ensure $\max_{i=1}^n|b_i|\lesssim d\land \sqrt{\mathscr{L}}$. Thus we have $\mathbb{E}_{\mu_0^N}F(\mu_{\textbf{x}_0})=O(\mathscr{L}+d)$.
    \item
    If $|h(x;a)|\leq\sqrt{\mathscr{L}}(1+\|x\|)$ and $\ell$ is $\sqrt{\mathscr{L}}$-Lipschitz, $\mathbb{E}_{\textbf{x}_0\sim \mu_0^N}\frac{1}{n}\sum_{i=1}^n\left[\ell\left(\frac{1}{N}\sum_{s=1}^Nh(x_0^s;a_i),b_i\right)\right]\leq\sqrt{\mathscr{L}}\left(\sqrt{\mathscr{L}}\frac{1}{N}\sum_{s=1}^N(1+\mathbb{E}_{\textbf{x}_0\sim \mu_0^N}\|x_0^s\|)+\max_{i=1}^n|b_i|\right)\lesssim\mathscr{L}d^{1/2}+\sqrt{\mathscr{L}}\max_{i=1}^n|b_i|$. We can normalize the data samples to ensure $\max_{i=1}^n|b_i|\lesssim d\land \sqrt{\mathscr{L}}$. Thus we have $\mathbb{E}_{\mu_0^N}F(\mu_{\textbf{x}_0})=O(\mathscr{L}d+d)$
\end{itemize}
\paragraph{MMD minimization}
Now we verify Assumption~\ref{initialization and invariant for N-particle} for the example of density estimation. We consider the N-particle approximation of the objective \eqref{MMDappendix} with the initialization $\mu^N_0=\mathcal{N}(0,I_{Nd})$.
\begin{equation*}
\begin{aligned}
    \mathbb{E}_{\mu^N}&\hat{\mathcal{M}}(\mu_{\textbf{x},\textbf{y}})\\
    &\defeq\mathbb{E}_{\textbf{x},\textbf{y}\sim\mu^N}\left[\frac{1}{N^2}\sum_{s=1}^N\sum_{t=1}^N\iint p(x^s;z)p(y^t;z')k(z,z')\mathrm{d}z\mathrm{d}z'
    -\frac{2}{nN}\sum_{i=1}^n\sum_{s=1}^N\int p(x^s;z)k(z,z_i)\mathrm{d}z\right]\\
    &\leq\mathbb{E}_{\textbf{x},\textbf{y}\sim\mu^N}\left[\frac{1}{N^2}\sum_{s=1}^N\sum_{t=1}^N\iint p(x^s;z)p(y^t;z')k(z,z')\mathrm{d}z\mathrm{d}z'\right]\\
    &=\left(\frac{1}{\sqrt{3}}\right)^d\mathbb{E}_{\textbf{x},\textbf{y}\sim\mu^N}\left[\frac{1}{N^2}\sum_{s=1}^N\sum_{t=1}^N\exp\left(-\frac{\|x^s-y^t\|^2}{6\sigma^2}\right)\right]\leq\left(\frac{1}{\sqrt{3}}\right)^d\leq \mathscr{L}
    \end{aligned}
\end{equation*}
where $\textbf{x}=(x^1,...,x^N)$ and $\textbf{y}=(y^1,...,y^N)$. Thus we can upper bound $\mathbb{E}_{\textbf{x}_0,\textbf{y}_0\sim\mu^N_0}F(\mu_{\textbf{x}_0,\textbf{y}_0})$ as follows:
\begin{equation*}
\mathbb{E}_{\textbf{x}_0,\textbf{y}_0\sim\mu^N_0}F(\mu_{\textbf{x}_0,\textbf{y}_0})=\mathbb{E}_{\textbf{x}_0,\textbf{y}_0\sim\mu^N_0}\hat{\mathcal{M}}(\mu_{\textbf{x},\textbf{y}})+\frac{\lambda'}{2}\mathbb{E}_{\textbf{x}_0\sim \mu_0^N}\frac{1}{N}\sum_{s=1}^N\left[\|x_0^s\|^2\right]\lesssim \mathscr{L}+d
\end{equation*}
which satisfies Assumption~\ref{initialization and invariant for N-particle}.
\paragraph{KSD minimization}
Similar to the verification of Assumption~\ref{initialization and invariant} above, we have the following relation for $\mu_0^N=\mathcal{N}(0,I_{Nd})$ under the same assumptions on the score function and kernel:
\begin{equation*}
\begin{aligned}
    \mathbb{E}_{\textbf{x}_0\sim\mu_0}F(\mu_{\textbf{x}_0})&=\textsf{KSD}(\mu_{\textbf{x}_0})+\frac{\lambda'}{2}\mathbb{E}_{\textbf{x}_0\sim \mu_0^N}\frac{1}{N}\sum_{s=1}^N\left[\|x_0^s\|^2\right]\\
    &=\mathbb{E}_{\textbf{x}_0\sim\mu_0}\frac{1}{N^2}\sum_{i=1}^N\sum_{j=1}^Nu_{\mu_*}(x_0^i,x_0^j)+\frac{\lambda'}{2}\mathbb{E}_{\textbf{x}_0\sim \mu_0^N}\frac{1}{N}\sum_{s=1}^N\left[\|x_0^s\|^2\right]\lesssim\mathscr{L}+d,
    \end{aligned}
\end{equation*}
which satisfies Assumption~\ref{initialization and invariant for N-particle}.
\section{Continuous-time results}
\label{Continuous-time results}
In this section, we give the explicit rate of Theorem 2.1 and Theorem 2.2 proposed by \citet{chen2023uniform} with a specific choice of parameters and then provide the detailed proof of Theorem \ref{convergence rate of the UMFLD} and Theorem \ref{convergence of particle dynamics} by reparameterizing $\gamma$.
\subsection{Proof of Theorem \ref{convergence rate of the UMFLD}}
Our proof is directly adapted from Theorem 2.1 in \citet{chen2023uniform} using hypocoercivity in \citet{villani2009hypocoercivity}. \citet{chen2023uniform} prove the Lyapunov functional
\begin{equation}
    \mathcal{E}(\mu_t)=\mathcal{F}(\mu_t)+\mathsf{FI}_S(\mu_t\|\hat{\mu}_t)
\end{equation}
is decaying along the~\ref{UMFLD} with $S=\left(\begin{array}{cc}
     c&b  \\
     b&a 
\end{array}\right)\otimes I_d$ and $\gamma=1$. Let $A_t=\nabla_v$, $B_t=v\cdot\nabla_x-{D}_{\rho}F(\mu_t^X,x)\cdot\nabla_v$, $C_t=[A_t,B_t]=A_tB_t-B_tA_t=\nabla_x$ and $Y_t=(\|A_tu_t\|,\|A_t^2u_t\|, \|C_tu_t\|, \|C_tA_tu_t\|)^{\mathsf{T}}$ where $u_t=\log\frac{\mu_t}{\hat{\mu}_t}$ and $\|\cdot\|\defeq\|\cdot\|_{L^2(\mu_t)}$. More specifically, \citet{chen2023uniform} prove that
\begin{equation}
    \frac{\mathrm{d}}{\mathrm{d}t}\mathcal{E}(\mu_t)\leq -Y_t^{\mathsf{T}}\mathcal{K}Y_t,
\end{equation}
where $$\mathcal{K}=\left(\begin{array}{cccc}
    1+2a-4\mathscr{L}b & -2b & -2a-2\mathscr{L}c & 0 \\
    0 & 2a & -2\mathscr{L}c & -4b \\
    0 & 0 & 2b & 0\\
    0 & 0 & 0 & 2c
\end{array}\right).$$ The choice of $a,\,b,\,c$ should satisfies $ac>b^2$ and $K\succ 0$. If we choose $a=c=2\mathscr{L}$ and $b=1$, the smallest eigenvalue of $\mathcal{K}$ is $\lambda_{\textsf{min}}(\mathcal{K})=1$, and thus we have 
\begin{align*}
\frac{\mathrm{d}}{\mathrm{d}t}(\mathcal{E}(\mu_t)-\mathcal{E}(\mu_*))&\leq-(\|A_tu_t\|^2+\|A_t^2u_t\|^2+\|C_tu_t\|^2+\|C_tA_tu_t\|^2)\\
&\leq-(\|A_tu_t\|^2+\|C_tu_t\|^2)=-\frac{1}{2}\mathsf{FI}(\mu_t\|\hat{\mu}_t)-\frac{1}{2}\mathsf{FI}(\mu_t\|\hat{\mu}_t)\\
&\leq-\mathscr{C}_{\textsf{LSI}}\textsf{KL}(\mu_t\|\hat{\mu}_t)-\frac{1}{2\lambda_{\textsf{max}}(S)}\mathsf{FI}_S(\mu_t\|\hat{\mu_t})\\
&\leq-\mathscr{C}_{\textsf{LSI}}(\mathcal{F}(\mu_t)-\mathcal{F}(\mu_*))-\frac{1}{4\mathscr{L}+2}\mathsf{FI}_S(\mu_t\|\hat{\mu_t})\\
&\leq-\frac{\mathscr{C}_{\textsf{LSI}}}{6\mathscr{L}}(\mathcal{E}(\mu_t)-\mathcal{E}(\mu_*))
\end{align*}
Applying Gr{\"o}nwall's inequality, we obtain
\begin{equation}
\label{convg of the UMLD with gamma=1}
    \mathcal{F}(\mu_t)-\mathcal{F}(\mu_*)\leq\mathcal{E}(\mu_t)-\mathcal{E}(\mu_*)\leq(\mathcal{E}(\mu_0)-\mathcal{E}(\mu_*))\exp\left(-\frac{\mathscr{C}_{\textsf{LSI}}}{6\mathscr{L}}t\right).
\end{equation}
Note that the proof in \citet{chen2023uniform} also considers the approximation technique to remove some restrictive assumptions they make, which we omit in our proof. Now we consider a more general $\gamma$ in the proof above. Analogous to the proof of Lemma 32 in \citet{villani2009hypocoercivity}, if we incorporate a general $\gamma$, the diagonal elements of upper triangular matrix $\mathcal{K}$ will become $(\gamma+2\gamma a-4\mathscr{L}b,\,2\gamma a,\,2b,\,2\gamma c)$. If we choose $\gamma=\sqrt{\mathscr{L}}$, $b=1/\sqrt{\mathscr{L}}$, $a=2$ and $c=1/\mathscr{L}$, the smallest eigenvalue of $K$ will become $\lambda_{\textsf{min}}(\mathcal{K})=2/\sqrt{\mathscr{L}}$. Similar to the previous proof, we have
\begin{equation*}
    \begin{aligned}
        \frac{\mathrm{d}}{\mathrm{d}t}(\mathcal{E}(\mu_t)-\mathcal{E}(\mu_*))&\leq-\frac{2}{\sqrt{\mathscr{L}}}(\|A_tu_t\|^2+\|A_t^2u_t\|^2+\|C_tu_t\|^2+\|C_tA_tu_t\|^2)\\
        &\leq-\frac{2}{\sqrt{\mathscr{L}}}(\|A_tu_t\|^2+\|C_tu_t\|^2)=-\frac{1}{\sqrt{\mathscr{L}}}\mathsf{FI}(\mu_t\|\hat{\mu}_t)-\frac{1}{\sqrt{\mathscr{L}}}\mathsf{FI}(\mu_t\|\hat{\mu}_t)\\
        &\leq-\frac{2\mathscr{C}_{\textsf{LSI}}}{\sqrt{\mathscr{L}}}\textsf{KL}(\mu_t\|\hat{\mu}_t)-\frac{1}{\lambda_{\textsf{max}}(S)\sqrt{\mathscr{L}}}\mathsf{FI}_S(\mu_t\|\hat{\mu}_t)\\
        &\leq-\frac{2\mathscr{C}_{\textsf{LSI}}}{\sqrt{\mathscr{L}}}(\mathcal{F}(\mu_t)-\mathcal{F}(\mu_*))-\frac{1}{3\sqrt{\mathscr{L}}}\mathsf{FI}_S(\mu_t\|\hat{\mu}_t)\\
        &\leq-\frac{\mathscr{C}_{\textsf{LSI}}}{3\sqrt{\mathscr{L}}}(\mathcal{E}(\mu_t)-\mathcal{E}(\mu_*))
    \end{aligned}
\end{equation*}
where the fourth inequality follows from $\lambda_{\textsf{max}}(S)=\frac{\frac{1}{\mathscr{L}}+2+\sqrt{\frac{1}{\mathscr{L}^2}+4}}{2}\leq\frac{1}{\mathscr{L}}+2\leq 3.$
Applying Gr{\"o}nwall's inequality, we obtain
\begin{equation}
\label{convg of the UMLD with gamma}
    \mathcal{F}(\mu_t)-\mathcal{F}(\mu_*)\leq\mathcal{E}(\mu_t)-\mathcal{E}(\mu_*)\leq(\mathcal{E}(\mu_0)-\mathcal{E}(\mu_*))\exp\left(-\frac{\mathscr{C}_{\textsf{LSI}}}{3\sqrt{\mathscr{L}}}t\right),
\end{equation}
which completes the proof of Theorem \ref{convergence rate of the UMFLD}. \cref{convg of the UMLD with gamma} exhibits a faster rate than the rate of \cref{convg of the UMLD with gamma=1}.
\subsection{Proof of Theorem \ref{convergence of particle dynamics}}
Our proof is directly adapted from Theorem 2.2 in \citet{chen2023uniform} using hypocoercivity in \citet{villani2009hypocoercivity}. 
\citet{chen2023uniform} prove that the Lyapunov functional
\begin{equation}
\mathcal{E}^N(\mu_t^N)=\mathcal{F}^N(\mu_t^N)+\textsf{FI}^N_S(\mu_t^N\|\mu_*^N)
\end{equation}
is decaying along the \ref{Par-UMFLD} with $S=\left(\begin{array}{cc}
     c&b  \\
     b&a 
\end{array}\right)\otimes I_d$ and $\gamma=1$. Let $u_t^N=\log\frac{\mu_t^N}{\mu^N_*}$, $\|\cdot\|\defeq\|\cdot\|_{L^2(\mu_t^N)}$ and $$Y_t^N=(\|\nabla_{\textbf{v}}u_t^N\|,\|\nabla_{\textbf{v}}^2u_t^N\|, \|\nabla_{\textbf{x}}u_t^N\|,\|\nabla_{\textbf{x}}\nabla_{\textbf{v}}u_t^N\|)^{\mathsf{T}}.$$ \citet{chen2023uniform} prove that 
\begin{equation}
    \frac{\mathrm{d}}{\mathrm{d}t}\mathcal{E}^N(\mu_t^N)\leq-(Y_t^N)^{\mathsf{T}}\mathcal{K}Y_t^N
\end{equation}
where
$$\mathcal{K}=\left(\begin{array}{cccc}
    1+2a-4\mathscr{L}b & -2b & -2a & 0 \\
    0 & 2a & -4\mathscr{L}c & -4b \\
    0 & 0 & 2b & 0\\
    0 & 0 & 0 & 2c
\end{array}\right).$$
The choice of $a,\,b,\,c$ should satisfies $ac>b^2$ and $K\succ 0$. If we choose $a=c=2\mathscr{L}$ and $b=1$, the smallest eigenvalue of $K$ is $\lambda_{\textsf{min}}(\mathcal{K})=1$, and thus we have 
\begin{equation}
\label{fisher upper bound}
\begin{aligned}
     \frac{\mathrm{d}}{\mathrm{d}t}\mathcal{E}^N(\mu_t^N)&\leq-(\|\nabla_{\textbf{v}}u_t^N\|^2+\|\nabla_{\textbf{v}}^2u_t^N\|^2+\|\nabla_{\textbf{x}}u_t^N\|^2+\|\nabla_{\textbf{x}}\nabla_{\textbf{v}}u_t^N\|^2)\\
     &\leq-(\|\nabla_{\textbf{v}}u_t^N\|^2+\|\nabla_{\textbf{x}}u_t^N\|^2)=-\textsf{FI}(\mu_t^N\|\mu_*^N)
\end{aligned}
\end{equation}
Since $\mu_*^N$ does not satisfy the uniform LSI, we can not utilize the same technique to upper bound $-\textsf{FI}(\mu_t^N\|\mu_*^N)$. \citet{chen2022uniform} and \citet{chen2023uniform} obtain the lower bound of the relative Fisher information $\textsf{FI}(\mu_t^N\|\mu_*^N)$ using other technique to circumvent the uniform LSI of $\mu_*^N$. We will directly provide the conclusion instead of providing many details about that technique in this paper, and we refer our readers to \citet{chen2022uniform,chen2023uniform} for the precise proof. 
\citet{chen2023uniform} propose that
\begin{equation*}
    \begin{aligned}
        \textsf{FI}(\mu_t^N\|\mu_*^N)&=\frac{1}{2}\textsf{FI}(\mu_t^N\|\mu_*^N)+\frac{1}{2}\textsf{FI}(\mu_t^N\|\mu_*^N)\\
        &\geq\frac{1}{2}\left[2(1-\varepsilon)\mathscr{C}_{\textsf{LSI}}-\frac{\mathscr{L}}{N}\left(16+12(\varepsilon^{-1}-1)\frac{\mathscr{L}}{\mathscr{C}_{\textsf{LSI}}}\right)\right](\mathcal{F}^N(\mu_t^N)-N\mathcal{F}(\mu_*))\\
        &\quad+\frac{1}{2}\textsf{FI}(\mu_t^N\|\mu_*^N)-\frac{\mathscr{L}d}{\mathscr{C}_{\textsf{LSI}}}(5\mathscr{C}_{\textsf{LSI}}+3(\varepsilon^{-1}-1)\mathscr{L})
    \end{aligned}
\end{equation*}
for $\varepsilon\in(0,1)$. If we choose $\varepsilon=1/2$ and $N\geq\frac{32\mathscr{L}}{\mathscr{C}_{\textsf{LSI}}}+\frac{24\mathscr{L}^2}{\mathscr{C}_{\textsf{LSI}}^2}$, we have
\begin{equation*}
    \begin{aligned}
        \textsf{FI}(\mu_t^N\|\mu_*^N)&\geq\frac{\mathscr{C}_{\textsf{LSI}}}{4}(\mathcal{F}^N(\mu_t^N)-N\mathcal{F}(\mu_*))+\frac{1}{2\lambda_{\textsf{max}}(S)}\textsf{FI}_S(\mu_t^N\|\mu_*^N)-\frac{\mathscr{L}d}{\mathscr{C}_{\textsf{LSI}}}(5\mathscr{C}_{\textsf{LSI}}+3\mathscr{L})\\
        &\geq\frac{\mathscr{C}_{\textsf{LSI}}}{4}(\mathcal{F}^N(\mu_t^N)-N\mathcal{F}(\mu_*))+\frac{1}{6\mathscr{L}}\textsf{FI}_S(\mu_t^N\|\mu_*^N)-\frac{\mathscr{L}d}{\mathscr{C}_{\textsf{LSI}}}(5\mathscr{C}_{\textsf{LSI}}+3\mathscr{L})\\
        &\geq \frac{\mathscr{C}_{\textsf{LSI}}}{24\mathscr{L}}(\mathcal{E}^N(\mu^N_t)-N\mathcal{E}(\mu_*))-\frac{\mathscr{L}d}{\mathscr{C}_{\textsf{LSI}}}(5\mathscr{C}_{\textsf{LSI}}+3\mathscr{L})
    \end{aligned}
\end{equation*}
Combining \eqref{fisher upper bound} with the lower bound of Fisher information above, we obtain
\begin{equation*}
    \frac{\mathrm{d}}{\mathrm{d}t}(\mathcal{E}^N(\mu_t^N)-N\mathcal{E}(\mu_*))\leq-\frac{\mathscr{C}_{\textsf{LSI}}}{24\mathscr{L}}(\mathcal{E}^N(\mu^N_t)-N\mathcal{E}(\mu_*))+\frac{\mathscr{L}d}{\mathscr{C}_{\textsf{LSI}}}(5\mathscr{C}_{\textsf{LSI}}+3\mathscr{L})
\end{equation*}
Applying Gr{\"o}nwall's inequality, we obtain
\begin{equation}
\label{convg of the UNLD with gamma=1}
\begin{aligned}
    \mathcal{F}^N(\mu_t^N)-N\mathcal{F}(\mu_*)&\leq\mathcal{E}^N(\mu_t^N)-N\mathcal{E}(\mu_*)\\
    &\leq(\mathcal{E}^N(\mu_0^N)-N\mathcal{E}(\mu_*))\exp\left(-\frac{\mathscr{C}_{\textsf{LSI}}}{24\mathscr{L}}t\right)+\frac{\mathscr{L}\mathrm{d}t}{\mathscr{C}_{\textsf{LSI}}}(5\mathscr{C}_{\textsf{LSI}}+3\mathscr{L})\exp\left(-\frac{\mathscr{C}_{\textsf{LSI}}}{24\mathscr{L}}t\right)\\
    &\leq (\mathcal{E}^N(\mu_0^N)-N\mathcal{E}(\mu_*))\exp\left(-\frac{\mathscr{C}_{\textsf{LSI}}}{24\mathscr{L}}t\right)+\frac{120\mathscr{L}^2d}{\mathscr{C}_{\textsf{LSI}}}+\frac{72\mathscr{L}^3d}{\mathscr{C}_{\textsf{LSI}}^2}
    \end{aligned}
\end{equation}
where the last inequality follows from $\exp(-x)\leq (1+x)^{-1}$ for $x>-1$. Now we consider a more general $\gamma$ in the proof above. Analogous to the proof of Lemma 32 in \citet{villani2009hypocoercivity}, if we incorporate $\gamma$, the diagonal elements of upper triangular matrix $\mathcal{K}$ will become $(\gamma+2\gamma a-4\mathscr{L}b,\,2\gamma a,\,2b,\,2\gamma c)$. If we choose $\gamma=\sqrt{\mathscr{L}}$, $b=1/\sqrt{\mathscr{L}}$, $a=2$ and $c=1/\mathscr{L}$, the smallest eigenvalue of $K$ will become $\lambda_{\textsf{min}}(\mathscr{K})=2/\sqrt{\mathscr{L}}$. Similar to the previous proof, we have 
\begin{equation*}
\begin{aligned}
     \frac{\mathrm{d}}{\mathrm{d}t}(\mathcal{E}^N(\mu_t^N)-N\mathcal{E}(\mu_*))&\leq-\frac{2}{\sqrt{\mathscr{L}}}(\|\nabla_{\textbf{v}}u_t^N\|^2+\|\nabla_{\textbf{v}}^2u_t^N\|^2+\|\nabla_{\textbf{x}}u_t^N\|^2+\|\nabla_{\textbf{x}}\nabla_{\textbf{v}}u_t^N\|^2)\\
     &\leq-\frac{2}{\sqrt{\mathscr{L}}}(\|\nabla_{\textbf{v}}u_t^N\|^2+\|\nabla_{\textbf{x}}u_t^N\|^2)=-\frac{2}{\sqrt{\mathscr{L}}}\textsf{FI}(\mu_t^N\|\mu_*^N)\\
     &\leq-\frac{\mathscr{C}_{\textsf{LSI}}}{2\sqrt{\mathscr{L}}}(\mathcal{F}^N(\mu_t^N)-N\mathcal{F}(\mu_*))-\frac{1}{\lambda_{\textsf{max}}(S)\sqrt{\mathscr{L}}}\textsf{FI}_S(\mu_t^N\|\mu_*^N)\\
     &\quad+\frac{2\sqrt{\mathscr{L}}d}{\mathscr{C}_{\textsf{LSI}}}(5\mathscr{C}_{\textsf{LSI}}+3\mathscr{L})\\
     &\leq-\frac{\mathscr{C}_{\textsf{LSI}}}{2\sqrt{\mathscr{L}}}(\mathcal{F}^N(\mu_t^N)-N\mathcal{F}(\mu_*))-\frac{1}{3\sqrt{\mathscr{L}}}\textsf{FI}_S(\mu_t^N\|\mu_*^N)+\frac{2\sqrt{\mathscr{L}}d}{\mathscr{C}_{\textsf{LSI}}}(5\mathscr{C}_{\textsf{LSI}}+3\mathscr{L})\\
     &\leq-\frac{\mathscr{C}_{\textsf{LSI}}}{6\sqrt{\mathscr{L}}}(\mathcal{E}^N(\mu^N_t)-N\mathcal{E}(\mu_*))+\frac{2\sqrt{\mathscr{L}}d}{\mathscr{C}_{\textsf{LSI}}}(5\mathscr{C}_{\textsf{LSI}}+3\mathscr{L})
\end{aligned}
\end{equation*}
Applying Gr{\"o}nwall's inequality, we obtain
\begin{equation}
\label{convg of the UNLD with gamma}
    \mathcal{F}^N(\mu_t^N)-N\mathcal{F}(\mu_*)\leq\mathcal{E}^N(\mu_t^N)-N\mathcal{E}(\mu_*)\leq\mathcal{E}_0^N\exp\left(-\frac{\mathscr{C}_{\textsf{LSI}}}{6\sqrt{\mathscr{L}}}t\right)+\frac{60\mathscr{L}d}{\mathscr{C}_{\textsf{LSI}}}+\frac{36\mathscr{L}^2d}{\mathscr{C}_{\textsf{LSI}}^2}
\end{equation}
where $\mathcal{E}_0^N\defeq\mathcal{E}^N(\mu_0^N)-N\mathcal{E}(\mu_*).$
This completes the proof of Theorem \ref{convergence of particle dynamics}. The convergence rate exhibited in \cref{convg of the UNLD with gamma} is faster and incurs a smaller bias than the rate exhibited in \cref{convg of the UNLD with gamma=1}.
\section{Discretization analysis}
\label{Proof of the Discrete-time-space Convergence}
In this section, we provide the proof of Theorem \ref{iter complexity of mean-field system} and Theorem \ref{iter complexity for particle system} establishing the global convergence of the discrete-time-space processes. Our discretization analysis is unified for the \hyperlink{MULA}{\textsf{MULA}} and \hyperlink{NULA}{\textsf{NULA}}.
\subsection{Proof of Theorem \ref{iter complexity of mean-field system}}
Suppose $\textbf{Q}_{Nh}$ is the joint law of the~\ref{UMFLD}
for $t\in[0,Nh]$
and $\textbf{P}_{Nh}$ is the joint law of the~\hyperlink{MULA}{\textsf{MULA}}
for $t\in[kh,(k+1)h]$ and $k=0,1,...,K-1$.
Applying Girsanov's theorem (Lemma \ref{Girsanov}), we have
\begin{equation*}
\begin{aligned}
    \textsf{KL}(\mathbf{Q}_{Kh}\|\mathbf{P}_{Kh})&=\mathbb{E}_{\mathbf{Q}_{Kh}}\log\frac{\mathrm{d}\mathbf{Q}_{Kh}}{\mathrm{d}\mathbf{P}_{Kh}}\\
    &=\mathbb{E}_{\mathbf{Q}_{Kh}}\sum_{k=0}^{K-1}\left(-\frac{1}{\sqrt{2\gamma}}\int_{kh}^{(k+1)h}\left\langle\left(\begin{array}{c}
         0\\
         {D}_{\rho}F(\mu_t^X,x_t)-{D}_{\rho}F(\mu_{kh}^X,x_{kh}) 
    \end{array}\right),\mathrm{dB}_t\right\rangle\right.\\
    &\left.\quad+\frac{1}{4\gamma}\int_{kh}^{(k+1)h}\left\|{D}_{\rho}F(\mu_t^X,x_t)-{D}_{\rho}F(\mu_{kh}^X,x_{kh})\right\|^2\mathrm{d}t\right)\\
    &=\frac{1}{4\gamma}\sum_{k=0}^{K-1}\int_{kh}^{(k+1)h}\mathbb{E}_{\mathbf{Q}_{Kh}}\left\|{D}_{\mu}F(\mu_t^X,x_t)-{D}_{\mu}F(\mu_{kh}^X,x_{kh})\right\|^2\mathrm{d}t
\end{aligned}
\end{equation*}
And we obtain
\begin{equation*}
    \begin{aligned}
        \textsf{KL}(\mathbf{Q}_{Kh}\|\mathbf{P}_{Kh})&=\frac{1}{4\gamma}\sum_{k=0}^{K-1}\int_{kh}^{(k+1)h}\mathbb{E}_{\mathbf{Q}_{Kh}}\left\|{D}_{\rho}F(\mu_t^X,x_t)-{D}_{\rho}F(\mu_{kh}^X,x_{kh})\right\|^2\mathrm{d}t\\
        &\leq\frac{{\mathscr{L}}^2}{2\gamma}\sum_{k=0}^{K-1}\int_{kh}^{(k+1)h}\mathbb{E}_{\mathbf{Q}_{Kh}}\|x_t-x_{kh}\|^2+W_1^2(\mu_t^X,\mu_{kh}^X)\mathrm{d}t\\
        &\leq\frac{\mathscr{L}^2}{2\gamma}\sum_{k=0}^{K-1}\int_{kh}^{(k+1)h}\mathbb{E}_{\mathbf{Q}_{Kh}}\|x_t-x_{kh}\|^2+\mathbb{E}_{\mathbf{Q}_{Kh}}\|x_t-x_{kh}\|^2\mathrm{d}t\\
        &=\frac{\mathscr{L}^2}{\gamma}\sum_{k=0}^{K-1}\int_{kh}^{(k+1)h}\mathbb{E}_{\mathbf{Q}_{Kh}}\|x_t-x_{kh}\|^2\mathrm{d}t
    \end{aligned}
\end{equation*}
where the first inequality follows from Assumption \ref{smoothness} and the last inequality follows from Lemma~\ref{bound of the iterate difference} and the inequality $\left(\frac{1}{n}\sum_{i=1}^nx_i\right)^2\leq \frac{1}{n}\sum_{i=1}^nx_i^2$:
\begin{equation*}
    \mathbb{E}_{\textbf{Q}_{Kh}}\|x_t-x_{kh}\|^2\leq 16\mathscr{L}^2h^4\mathbb{E}_{\textbf{Q}_{Kh}}\|x_{kh}\|^2+64h^2\mathbb{E}_{\textbf{Q}_{Kh}}\|v_{kh}\|^2+16\mathscr{L}^2h^4+32\gamma h^3d
\end{equation*}
Combined with Lemma~\ref{moment bound mean-field} and $\gamma=\sqrt{\mathscr{L}}$, the discretization error is upper bounded as follows:
\begin{equation*}
    \begin{aligned}
        \textsf{KL}(\mathbf{Q}_{Kh}\|\mathbf{p}_{Kh})&\leq\frac{16\mathscr{L}^4h^5K}{\gamma}\max_{0\leq k\leq K}\mathbb{E}_{\mathbf{Q}_{Kh}}\|x_{kh}\|^2+\frac{64\mathscr{L}^2h^3K}{\gamma}\max_{0\leq k\leq K}\mathbb{E}_{\mathbf{Q}_{Kh}}\|v_{kh}\|^2\\
        &\quad+\frac{16\mathscr{L}^4h^5K}{\gamma}+{32\mathscr{L}^2h^4Kd}\\
        &\lesssim\frac{\mathscr{L}^{9/2}h^5Kd}{ \mathscr{C}_{\textsf{LSI}}}+\frac{\mathscr{L}^{5/2}h^3Kd}{ \mathscr{C}_{\textsf{LSI}}}+{\mathscr{L}^{7/2}h^5K}+{\mathscr{L}^2h^4Kd}\\
        &=\frac{\mathscr{L}^{9/2}h^4Td}{ \mathscr{C}_{\textsf{LSI}}}+\frac{\mathscr{L}^{5/2}h^2Td}{ \mathscr{C}_{\textsf{LSI}}}+{\mathscr{L}^{7/2}h^4T}+{\mathscr{L}^2h^3Td}
    \end{aligned}
\end{equation*}
where $T=Kh$. By Lemma \ref{Mean-field Sandwich} and Theorem \ref{convergence rate of the UMFLD}, we obtain
\begin{equation}
\label{KL convg of the UMLD}
    \textsf{KL}(\mu_t\|\mu_*)\leq\mathcal{F}(\mu_t)-\mathcal{F}(\mu_*)\leq(\mathcal{E}(\mu_0)-\mathcal{E}(\mu_*))\exp\left(-\frac{\mathscr{C}_{\textsf{LSI}}}{3\sqrt{\mathscr{L}}}t\right)
\end{equation}
Combining with \eqref{KL convg of the UMLD}, we upper bound the TV distance between $\bar{\mu}_K$, the probability measure of \MULA{} at $Kh$ and $\mu_*$, the limiting distribution of \UMLD{} as follows:
\begin{equation*}
    \begin{aligned}
        \|\bar{\mu}_{K}-\mu_*\|_{\textsf{TV}}&\leq\|\bar{\mu}_{K}-\mu_{Kh}\|_{\textsf{TV}}+\|\mu_{Kh}-\mu_*\|_{\textsf{TV}}\\
        &=\|\mu_{Kh}-\bar{\mu}_{K}\|_{\textsf{TV}}+\|\mu_{Kh}-\mu_*\|_{\textsf{TV}}\\
        &\lesssim\sqrt{\textsf{KL}(\mu_{Kh}\|\Bar{\mu}_K)}+\sqrt{\textsf{KL}(\mu_{Kh}\|\mu_*)}\\
        &\lesssim\sqrt{\textsf{KL}(\mathbf{Q}_{Kh}\|\mathbf{p}_{Kh})}+\sqrt{\textsf{KL}(\mu_{Kh}\|\mu_*)}\\
        &\lesssim\frac{\mathscr{L}^{9/4}h^2T^{1/2}d^{1/2}}{ \mathscr{C}^{1/2}_{\textsf{LSI}}}+\frac{\mathscr{L}^{5/4}hT^{1/2}d^{1/2}}{ \mathscr{C}^{1/2}_{\textsf{LSI}}}+{\mathscr{L}^{7/4}h^2T^{1/2}}+{\mathscr{L}h^{3/2}T^{1/2}d^{1/2}}\\
        &\quad+(\mathcal{E}(\mu_0)-\mathcal{E}(\mu_*))^{1/2}\exp\left(-{\mathscr{C}_{\textsf{LSI}}T}/{6\sqrt{\mathscr{L}}}\right)
    \end{aligned}
\end{equation*}
where the first inequality follows from the triangle inequality of TV distance; the second inequality follows from Pinsker's inequality, and the fourth inequality follows from the data processing inequality. In order to ensure $\|\mu_{Kh}-\mu_*\|_{\textsf{TV}}\leq\frac{1}{2}\epsilon$, it suffices to choose $T=Kh=\widetilde{\Theta}\left(\frac{\sqrt{\mathscr{L}}}{\mathscr{C}_{\textsf{LSI}}}\right)$. In order to ensure $\|\bar{\mu}_{K}-\mu_{Kh}\|_{\textsf{TV}}\leq\frac{1}{2}\epsilon$, it suffices to choose the stepsize
\begin{equation}
    h={\Theta}\left(\frac{\mathscr{C}^{1/2}_{\textsf{LSI}}\epsilon}{\mathscr{L}^{5/4}T^{1/2}d^{1/2}}\right)=\widetilde{\Theta}\left(\frac{\mathscr{C}_{\textsf{LSI}}\epsilon}{\mathscr{L}^{3/2}d^{1/2}}\right),
\end{equation}
and the mixing time
\begin{equation}
    K=\frac{T}{h}=\widetilde{\Theta}\left(\frac{\mathscr{L}^2d^{1/2}}{\mathscr{C}_{\textsf{LSI}}^{2}\epsilon}\right).
\end{equation}
The choice of $T,\,h,\,K$ above ensures $\|\bar{\mu}_{K}-\mu_*\|_{\textsf{TV}}\leq\epsilon$.
\subsection{Proof of Theorem \ref{iter complexity for particle system}}
Suppose $Q_{Nh}^i$ is the joint law of the~\ref{Par-UMFLD} for the $i$-th particle
and $t\in[0,Kh]$; $P_{Nh}^i$ is the joint law of the~\hyperlink{NULA}{\textsf{NULA}} for the $i$-th particle.
Applying Girsanov's theorem (Lemma~\ref{Girsanov}), we have
\begin{equation*}
    \begin{aligned}
        \frac{1}{N}\sum_{i=1}^N\textsf{KL}(\mathbf{Q}^i_{Kh}\|\mathbf{P}^i_{Kh})&=\frac{1}{4\gamma}\sum_{k=0}^{K-1}\int_{kh}^{(k+1)h}\frac{1}{N}\sum_{i=1}^N\mathbb{E}_{\mathbf{Q}^i_{Kh}}\left\|{D}_{\rho}F({\mu_{\textbf{x}_t}},x^i_t)-{D}_{\rho}F(\mu_{\textbf{x}_{kh}},x^i_{kh})\right\|^2\mathrm{d}t\\
        &\leq\frac{\mathscr{L}^2}{2\gamma}\sum_{k=0}^{K-1}\int_{kh}^{(k+1)h}\frac{1}{N}\sum_{i=1}^N\mathbb{E}_{\mathbf{Q}^i_{Kh}}\|x^i_t-x^i_{kh}\|^2+W_1^2({\mu_{\textbf{x}_t}},\mu_{\textbf{x}_{kh}})\mathrm{d}t\\
        &\leq\frac{\mathscr{L}^2}{\gamma}\sum_{k=0}^{K-1}\int_{kh}^{(k+1)h}\frac{1}{N}\sum_{i=1}^N\mathbb{E}_{\mathbf{Q}^i_{Kh}}\|x^i_t-x^i_{kh}\|^2\mathrm{d}t\\
        &\leq\frac{16\mathscr{L}^4h^5}{\gamma}\frac{1}{N}\sum_{i=1}^N\sum_{k=1}^K\mathbb{E}_{\mathbf{Q}^i_{Kh}}\|x^i_{kh}\|^2+\frac{64\mathscr{L}^2h^3}{\gamma}\frac{1}{N}\sum_{i=1}^N\sum_{k=1}^K\mathbb{E}_{\mathbf{Q}^i_{Kh}}\|v_{kh}\|^2\\
        &\quad+\frac{16\mathscr{L}^4h^5K}{\gamma}+{32\mathscr{L}^2h^4Kd}
    \end{aligned}
\end{equation*}
where the first inequality follows from Assumption \ref{smoothness} and the last inequality follows from Lemma~\ref{bound of the iterate difference} and the inequality $\left(\frac{1}{n}\sum_{i=1}^nx_i\right)^2\leq \frac{1}{n}\sum_{i=1}^nx_i^2$:
\begin{equation*}
    \mathbb{E}_{\textbf{Q}^i_{Kh}}\|x^i_t-x^i_{kh}\|^2\leq 16\mathscr{L}^2h^4\mathbb{E}_{\textbf{Q}^i_{Kh}}\|x^i_{kh}\|^2+64h^2\mathbb{E}_{\textbf{Q}^i_{Kh}}\|v^i_{kh}\|^2+16\mathscr{L}^2h^4+32\gamma h^3d
\end{equation*}
for $t\in[kh,(k+1)h]$ and $k=0,1,...,K-1$.
Combining Lemma~\ref{moment bound par} and $\gamma=\sqrt{\mathscr{L}}$, the discretization error is upper bounded as follows:
\begin{equation*}
    \begin{aligned}
        \frac{1}{N}\sum_{i=1}^N\textsf{KL}(\mathbf{Q}^i_{Kh}\|\mathbf{P}^i_{Kh})
&\leq\frac{16\mathscr{L}^4h^5K}{\gamma}\frac{1}{N}\sum_{i=1}^N\max_{0\leq k\leq K}\mathbb{E}_{\mathbf{Q}^i_{Kh}}\|x^i_{kh}\|^2+\frac{64\mathscr{L}^2h^3K}{\gamma}\frac{1}{N}\sum_{i=1}^N\max_{0\leq k\leq K}\mathbb{E}_{\mathbf{Q}^i_{Kh}}\|v_{kh}\|^2\\
&\quad+\frac{16\mathscr{L}^4h^5K}{\gamma}+{32\mathscr{L}^2h^4Kd}\\
&\lesssim\frac{\mathscr{L}^{9/2}h^5Kd}{ \mathscr{C}_{\textsf{LSI}}}+\frac{\mathscr{L}^{5/2}h^3Kd}{ \mathscr{C}_{\textsf{LSI}}}+{\mathscr{L}^{7/2}h^5K}+{\mathscr{L}^{2}h^4Kd}\\
        &=\frac{\mathscr{L}^{9/2}h^4Td}{ \mathscr{C}_{\textsf{LSI}}}+\frac{\mathscr{L}^{5/2}h^2Td}{ \mathscr{C}_{\textsf{LSI}}}+{\mathscr{L}^{7/2}h^4T}+{\mathscr{L}^{2}h^3Td}
    \end{aligned}
\end{equation*}
where $T=Kh$. By Lemma~\ref{Particle System's Entropy Inequality} and Theorem \ref{convergence of particle dynamics}, we obtain
\begin{equation}
\label{par convg in KL}
    \frac{1}{N}\textsf{KL}(\mu_T^N\|\mu_*^{\otimes N})\leq\frac{1}{N}\mathcal{F}^N(\mu_T^N)-\mathcal{F}(\mu_*)\leq \frac{\mathcal{E}_0^N}{N}\exp\left(-\frac{\mathscr{C}_{\textsf{LSI}}}{6\sqrt{\mathscr{L}}}T\right)+\frac{60\mathscr{L}d}{N\mathscr{C}_{\textsf{LSI}}}+\frac{36\mathscr{L}^2d}{N\mathscr{C}_{\textsf{LSI}}^2},
\end{equation}
where $\mathcal{E}_0^N\defeq\mathcal{E}^N(\mu_0^N)-N\mathcal{E}(\mu_*).$
Combining with \eqref{par convg in KL}, we upper bound the averaged TV distance between $\bar{\mu}_K^i$ and $\mu_*$ over $N$ particles as follows:
\begin{equation*}
\begin{aligned}
    \frac{1}{N}\sum_{i=1}^N\|\bar{\mu}_{K}^i-{\mu}_*\|_{\textsf{TV}}&\leq\frac{1}{N}\sum_{i=1}^N\|\bar{\mu}_{K}^i-\mu_{Kh}^i\|_{\textsf{TV}}+\frac{1}{N}\sum_{i=1}^N\|\mu_{Kh}^i-\mu_*\|_{\textsf{TV}}\\
    &=\frac{1}{N}\sum_{i=1}^N\|\mu_{Kh}^i-\bar{\mu}_{K}^i\|_{\textsf{TV}}+\frac{1}{N}\sum_{i=1}^N\|\mu_{Kh}^i-\mu_*\|_{\textsf{TV}}\\
    &\lesssim\frac{1}{N}\sum_{i=1}^N\sqrt{\textsf{KL}({\mu}_{Kh}^i\|\bar{\mu}_{K}^i)}+\frac{1}{N}\sum_{i=1}^N\sqrt{\textsf{KL}(\mu_{Kh}^i\|\mu_*)}\\
    &\lesssim\sqrt{\frac{1}{N}\sum_{i=1}^N\textsf{KL}({\mu}_{Kh}^i\|\bar{\mu}_{K}^i)}+\sqrt{\frac{1}{N}\sum_{i=1}^N\textsf{KL}(\mu_{Kh}^i\|\mu_*)}\\
    &\leq\sqrt{\frac{1}{N}\sum_{i=1}^N\textsf{KL}(\mathbf{Q}_{Kh}^i\|\mathbf{P}_{Kh}^i)}+\sqrt{\frac{1}{N}\textsf{KL}(\mu_{Kh}^N\|\mu^{\otimes N}_*)}\\
    &\leq\sqrt{\frac{1}{N}\sum_{i=1}^N\textsf{KL}(\mathbf{Q}_{Kh}^i\|\mathbf{P}_{Kh}^i)}+\sqrt{\frac{1}{N}\mathcal{F}^N(\mu_{Kh}^N)-\mathcal{F}(\mu_*)}\\
    &\lesssim\frac{\mathscr{L}^{9/4}h^2T^{1/2}d^{1/2}}{ \mathscr{C}^{1/2}_{\textsf{LSI}}}+\frac{\mathscr{L}^{5/4}hT^{1/2}d^{1/2}}{ \mathscr{C}^{1/2}_{\textsf{LSI}}}+{\mathscr{L}^{7/4}h^2T^{1/2}}+{\mathscr{L}h^{3/2}T^{1/2}d^{1/2}}\\
    &\quad+\left(\frac{1}{N}\mathcal{E}^N(\mu_0^N)-\mathcal{E}(\mu_*)\right)^{1/2}\exp\left(-{\mathscr{C}_{\textsf{LSI}}T}/{12\sqrt{\mathscr{L}}}\right)+\frac{\mathscr{L}^{1/2}d^{1/2}}{N^{1/2}\mathscr{C}^{1/2}_{\textsf{LSI}}}+\frac{\mathscr{L}d^{1/2}}{N^{1/2}\mathscr{C}_{\textsf{LSI}}}
    \end{aligned}    
\end{equation*}
where the first inequality follows from the triangle inequality of TV distance; the second inequality follows from Pinsker's inequality; the third inequality follows from Jensen's inequality; the fourth inequality follows from data processing inequality and the information inequality (Lemma \ref{Information Inequality}) and the fifth inequality follows from Lemma \ref{Particle System's Entropy Inequality}. In order to ensure $\frac{1}{N}\sum_{i=1}^N\|\mu^i_{Kh}-\mu_*\|_{\textsf{TV}}\leq\frac{1}{2}\epsilon$, it suffices to choose $T=Kh=\widetilde{\Theta}\left(\frac{\sqrt{\mathscr{L}}}{\mathscr{C}_{\textsf{LSI}}}\right)$. In order to ensure $\frac{1}{N}\sum_{i=1}^N\|\bar{\mu}^i_{K}-\mu^i_{Kh}\|_{\textsf{TV}}\leq\frac{1}{2}\epsilon$, it suffices to choose the stepsize
\begin{equation}
    h={\Theta}\left(\frac{\mathscr{C}^{1/2}_{\textsf{LSI}}\epsilon}{\mathscr{L}^{5/4}T^{1/2}d^{1/2}}\right)=\widetilde{\Theta}\left(\frac{\mathscr{C}_{\textsf{LSI}}\epsilon}{\mathscr{L}^{3/2}d^{1/2}}\right),
\end{equation}
the mixing time
\begin{equation}
    K=\frac{T}{h}=\widetilde{\Theta}\left(\frac{\mathscr{L}^2d^{1/2}}{\mathscr{C}_{\textsf{LSI}}^{2}\epsilon}\right),
\end{equation}
and the number of particles
\begin{equation}
    N=\Theta\left(\frac{\mathscr{L}^{2}d}{\mathscr{C}_{\textsf{LSI}}^2\epsilon^2}\right).
\end{equation}
The choice of $T,\,h,\,K,\,N$ above ensures $\frac{1}{N}\sum_{i=1}^N\|\bar{\mu}^i_{K}-\mu_*\|_{\textsf{TV}}\leq\epsilon$.
\section{Experimental settings}
\label{Experimental Settings}
In our experiment, we use a mean-field two-layer neural network to approximate the Gaussian function,
\begin{equation*}
    f(z)=\exp\left(-\frac{\|z-m\|^2}{2d}\right).
\end{equation*}
We uniformly draw $m\sim\mathcal{N}(0,I_d)$ and $100$ points $\{z_i\}_{i=1}^{100}\sim\mathcal{N}(0,I_d)$ with $d=10^3$ and calculate the corresponding labels $\{f(z_i)\}_{i=1}^{100}$.
In this section, we give the actual updates of the methods involved in our experiment and provide the precise value of parameters in Table \ref{para_choice}. The update of the~\hyperlink{UNLA}{\textsf{NULA}} is given by
\begin{equation*}
    \begin{aligned}
    \qquad x^j_{k+1}&=x^j_k+\varphi_0\,v^j_{k}-\varphi_1\,
    {D}_{\mu}F(\mu_{\textbf{x}_{k}},x^j_{k})+\eta\xi^x_k,\\
    \qquad v^j_{k+1}&=\varphi_2\,v^j_{k}-\varphi_3\, {D}_{\mu}F(\mu_{\textbf{x}_{k}},x^j_{k})+\eta\xi^v_k.
    \end{aligned}
\end{equation*}
for $j=1,...,N$. The update of \ref{EM-UNLA} is given by
\begin{equation*}
     \begin{aligned}
        \qquad\ \quad\quad
        x_{k+1}^j&=x_k^j + h_2\,v_k^j,\\
        \qquad\ \quad\quad v_{k+1}^j&=(1-h_3)v_k^j - h_2\,{D}_{\mu}F(\mu_{\textbf{x}_k},x^j_k) + \sqrt{2\lambda_2h_2}\xi_k.
    \end{aligned}
\end{equation*}
for $j=1,...,N$. The update of the \ref{NLA} is given by
\begin{equation*}
    \begin{aligned}
        \ \ x^j_{k+1}=x^j_{k} - h_1\,{D}_{\mu}F(\mu_{\textbf{x}_k},x^j_k) + \sqrt{2\lambda_1h_1}\xi_k.
    \end{aligned}
\end{equation*}
for $j=1,...,N$.
\begin{table}[H]
    \centering
    \begin{tabular}{ccccccccccc}
    \toprule
        \textbf{Parameters} & $\varphi_0$ & $\varphi_1$ & $\varphi_2$ & $\varphi_3$ & $\eta$ & $h_1$ & $h_2$ & $h_3$ & $\lambda_1$ & $\lambda_2$ \\ \hline
         \textbf{Value}     &      $10^{-4}$ & $0.02$ &  $0.99$ &   $0.02$ &  $10^{-3}$ &  $10^{-2}$ & $10^{-2}$ & $10^{-2}$ & $10^{-4}$ & $10^{-4}$  \\ 
    \bottomrule
    \end{tabular}
    \caption{Choice of hyperparameters.}
    \label{para_choice}
\end{table}
\section{Methods for comparisons}
\label{Translation of the Compared Methods}
In this section, we review the convergence result of \MLA{} in \citet{nitanda2022convex} and \ref{NLA} in \citet{suzuki2023convergence}, which consider problem (1) in more specific settings. \citet{nitanda2022convex} suppose $F(\rho)=\mathbb{E}_{(a,b)\sim\mathcal{D}}\left[\ell(h(\rho;a),b)\right]+\frac{\lambda'}{2}\mathbb{E}_{x\sim \rho}\|x\|^2$ whereas \citet{suzuki2023convergence} suppose $F(\rho)=U(\rho)+\lambda'\mathbb{E}_{x\sim\rho}[r(x)]$. While our convergence results are established in TV distance, we consider more general settings compared with the previous two. Since the problem setting in \citet{nitanda2022convex} is only for training neural networks, we perform convergence analysis of the \MLA{} in \citet{suzuki2023convergence}'s setting to make a comparison with our results. Define the free energy
\begin{equation}
    \label{eq:free energy}
    E(\rho)=F(\rho)+\text{Ent}(\rho),
\end{equation}
where $\mu\in\mathcal{P}_2(\mathbb{R}^d)$. Let $\bar{\rho}_k$ denotes the law of $k$-th iterate of the \MLA{} and $\rho_*$ denotes the minimizer of~\eqref{eq:free energy}, and \citet{nitanda2022convex} obtain the following results in Theorem 2:
\begin{equation}
    E(\bar{\rho}_{k})-E(\rho_*)\leq\exp(-\mathscr{C}_{\textsf{LSI}}hk)(E(\bar{\rho}_0)-E(\rho_*))+\frac{\delta_{h}}{2\mathscr{C}_{\textsf{LSI}}},
\end{equation}
where $\delta_{hk}\defeq\mathbb{E}\|{D}_{\rho}F(\bar{\rho}_{k+1},x_{k+1})-{D}_{\rho}F(\bar{\rho}_{k},x_k)\|^2$ and $\mathbb{E}$ is taken under the joint law of $\bar{\rho}_{k+1}$ and $\bar{\rho}_k$. Now we bound $\delta_{hk}$ uniformly in $k$ with a different method from the one in \citet{nitanda2022convex}. We do not need to specify $F$ to be the objective of training nerual networks. Since $F$ is $\mathscr{L}$-smooth\footnote{We inherit the weaker smoothness assumption in \citet{suzuki2023convergence} with respect to $W_2$ distance.} and satisfies Assumption \ref{bounded grad}, we obtain
\begin{equation*}
    \begin{aligned}
        \mathbb{E}\|{D}_{\rho}F(\bar{\rho}_{k+1},x_{k+1})-{D}_{\rho}F(\bar{\rho}_{k},x_k)\|^2&\leq2\mathscr{L}^2\mathbb{E}(\|x_{k+1}-x_k\|^2+W_2^2(\bar{\rho}_{k+1},\bar{\rho}_k))\\
        &\leq 4\mathscr{L}^2\mathbb{E}\|x_{k+1}-x_k\|^2\\
        &=4\mathscr{L}^2\mathbb{E}\|-h{D}_{\rho}F(\bar{\rho}_{k},x_k)+\sqrt{2h}\xi\|^2\\
        &\leq4\mathscr{L}^2h^2\mathbb{E}\|{D}_{\rho}F(\bar{\rho}_{k},x_k)\|^2+8\mathscr{L}^2hd\\
        &\leq8\mathscr{L}^4h^2(1+\mathbb{E}\|x_{k}\|^2)+8\mathscr{L}^2hd
    \end{aligned}
\end{equation*}
We refer to Lemma 1 in \citet{suzuki2023convergence} to uniformly bound $\mathbb{E}\|x_{k}\|^2$. Before applying Lemma 1, we translate some constants in \citet{suzuki2023convergence} into our constants systems.
\citet{suzuki2023convergence} assumes that $\|{D}_{\rho}U(\rho,x)\|\leq R$, $\lambda_1I_d\preceq\nabla^2r(x)\preceq\lambda_2I_d$. We let $R=\mathscr{L}$ and $\lambda_2=\mathscr{L}$ (since this specification matches our Assumption \ref{bounded grad}). We prove Lemma 1 proposed by \citet{suzuki2023convergence} in the mean-field setting without particle approximation. But we also assume the decomposition $F(\rho)=U(\rho)+\mathbb{E}_{x\sim\rho}[r(x)]$ with $\|{D}_{\rho}U(\rho,x)\|\leq \mathscr{L}$ and $\lambda_1I_d\preceq\nabla^2r\preceq \mathscr{L}I_d$. Given the update of the~\MLA{}, if $h\leq\frac{\lambda_1}{2\mathscr{L}^2}$, we have
\begin{equation*}
    \begin{aligned}
    \mathbb{E}\|x_{k+1}\|^2&=\mathbb{E}\|x_k\|^2+h^2\mathbb{E}\|{D}_{\rho}F(\rho_k,x_k)\|^2+2hd-2h\mathbb{E}\left\langle x_k, {D}_{\rho}U(\rho_k,x_k)+\nabla r(x_k)\right\rangle\\
    &\leq\mathbb{E}\|x_k\|^2+\mathscr{L}^2h^2(1+\mathbb{E}\|x_k\|^2)+2hd+2h\mathscr{L}\mathbb{E}\|x_k\|-2h\lambda_1\mathbb{E}\|x_k\|^2\\
    &\leq(1-\lambda_1h)\mathbb{E}\|x_k\|^2+\mathscr{L}^2h^2+2hd+\frac{2\mathscr{L}^2h}{\lambda_1}
    \end{aligned}
\end{equation*}
Recursively, we obtain
\begin{equation}
    \label{eq:recursive of x_k}
    \mathbb{E}\|x_{k}\|^2\leq (1-\lambda_1h)^k\mathbb{E}\|x_{0}\|^2+\frac{\mathscr{L}^2h+2d}{\lambda_1}+\frac{2\mathscr{L}^2}{\lambda_1^2}\leq \mathbb{E}\|x_{0}\|^2+\frac{\mathscr{L}^2h+2d}{\lambda_1}+\frac{2\mathscr{L}^2}{\lambda_1^2}.
\end{equation}
If $x_0\sim\mathcal{N}(0,I_d)$, $\mathbb{E}\|x_{0}\|^2\lesssim d$. Thus \eqref{eq:recursive of x_k} implies $\mathbb{E}\|x_{k}\|^2\lesssim\mathscr{L}^2d$. Plugging into the inequality above, we obtain
\begin{equation}
    \mathbb{E}\|{D}_{\rho}F(\bar{\rho}_{k+1},x_{k+1})-{D}_{\rho}F(\bar{\rho}_{k},x_k)\|^2\lesssim\mathscr{L}^6h^2d+\mathscr{L}^2hd.
\end{equation}
Applying Lemma~\ref{Mean-field Sandwich} and pinsker's inequality, we obtain
\begin{equation*}
    \begin{aligned}
        \|\Bar{\rho}_K-\rho_*\|_{\textsf{TV}}\lesssim\sqrt{\textsf{KL}(\Bar{\rho}_K\|\rho_*)}&\leq\sqrt{E(\bar{\rho}_{K})-E(\rho_*)}\\
        &\lesssim\exp(-\mathscr{C}_{\textsf{LSI}}hK/2)(E(\bar{\rho}_0)-E(\rho_*))^{1/2}+\frac{\mathscr{L}^3hd^{1/2}}{\mathscr{C}_{\textsf{LSI}}^{1/2}}+\frac{\mathscr{L}h^{1/2}d^{1/2}}{\mathscr{C}_{\textsf{LSI}}^{1/2}}
    \end{aligned}
\end{equation*}
In order to ensure $\|\Bar{\rho}_K-\rho_*\|_{\textsf{TV}}\leq\epsilon$, it suffices to choose
\begin{equation}
    h=\Theta\left(\frac{\mathscr{C}_{\textsf{LSI}}\epsilon^2}{\mathscr{L}^3d}\right),\quad K=\widetilde{\Theta}\left(\frac{\mathscr{L}^3d}{\mathscr{C}_{\textsf{LSI}}^2\epsilon^2}\right).
\end{equation}
Now we translate the convergence results in \citet{suzuki2023convergence}. Define the free energy of the particle system:
\begin{equation}
E^N(\mu^N)=N\mathbb{E}_{\textbf{x}\sim\mu^N}F(\mu_{\textbf{x}})+\text{Ent}(\mu^N),
\end{equation}
where $\mu_{\textbf{x}}=\frac{1}{N}\sum_{i=1}^N\delta_{x^i}$.
Similar to the analysis above, Theorem 2 in \citet{suzuki2023convergence} implies the TV-convergence of the \ref{NLA}, given by
\begin{align*}
    \frac{1}{N}\sum_{i=1}^N\|\bar{\rho}_{K}^i-{\rho}_*\|_{\textsf{TV}}\lesssim\sqrt{\frac{1}{N}\sum_{i=1}^N\textsf{KL}(\bar{\rho}_{K}^i\|\rho_*)}&\leq\sqrt{\frac{1}{N}E^N(\rho_K^N)-E(\rho_*)}\\
    &\lesssim\exp\left(-\mathscr{C}_{\textsf{LSI}}hK/4\right)+h^{1/2}K^{1/2}(\mathscr{L}^3hd^{1/2}+\mathscr{L}h^{1/2}d^{1/2})\\
    &\quad+h^{1/2}K^{1/2}\frac{\mathscr{L}^2d^{1/2}}{N^{1/2}}
\end{align*}
In order to ensure $\|\Bar{\rho}_K-\rho_*\|_{\textsf{TV}}\leq\epsilon$, it suffices to choose
\begin{equation}
    h=\Theta\left(\frac{\mathscr{C}_{\textsf{LSI}}\epsilon^2}{\mathscr{L}^3d}\right),\quad K=\widetilde{\Theta}\left(\frac{\mathscr{L}^3d}{\mathscr{C}_{\textsf{LSI}}^2\epsilon^2}\right),\quad N=\Theta\left(\frac{\mathscr{L}^4d}{\mathscr{C}_{\textsf{LSI}}\epsilon^2}\right).
\end{equation}
\end{document}